\newtheorem{thm}{Theorem}
\newtheorem{lem}{Lemma}
\newtheorem{example}{Example}
\newenvironment{customthm}[1]
{\innercustomthm}
{\endinnercustomthm}
\newenvironment{customexm}[1]
{\innercustomexm}
{\endinnercustomexm}
\title{Agent Failures in All-Pay Auctions}
\author[1]{Yoad Lewenberg\thanks{yoadlew@cs.huji.ac.il}}
\author[2]{Omer Lev\thanks{omerl@cs.toronto.edu}}
\author[3]{Yoram Bachrach\thanks{yobach@microsoft.com}}
\author[1]{\\Jeffrey S. Rosenschein\thanks{jeff@cs.huji.ac.il}}
\affil[1]{The Hebrew University of Jerusalem, Israel}
\affil[2]{University of Toronto, Canada}
\affil[3]{Microsoft Research, Cambridge, United Kingdom}
\date{}
\begin{document}

\maketitle

\begin{abstract}
All-pay auctions, a common mechanism for various human and agent interactions, suffers, like many other mechanisms,
from the possibility of players' failure to participate in the auction.
We model such failures,
and fully characterize equilibrium for this class of games, we
 present a symmetric equilibrium and show that under some conditions the equilibrium is unique.
We reveal various properties of the equilibrium, such as the lack of influence of the most-likely-to-participate player on the behavior of the other players.
We perform this analysis with two scenarios: the sum-profit model, where the auctioneer obtains the sum of all submitted bids,
and the max-profit model of crowdsourcing contests,
where the auctioneer can only use the best submissions and thus obtains only the winning bid.

Furthermore, we examine various methods of influencing the probability of participation such as the effects of misreporting one's own probability of participating,
and how influencing another player's participation chances changes the player's strategy.
\end{abstract}

\section{Introduction}

Auctions have been the focus of much research in economics, mathematics and computer science,
and have received attention in the AI and multi-agent communities as a significant tool for resource and task allocation.
Beyond explicit auctions, as performed on the web (e.g., eBay) and in auction houses, auctions also model various real-life situations
in which people (and machines) interact and compete for some valuable item.
For example, companies advertising during the U.S.~Superbowl are, in effect,
bidding to be one of the few remembered by the viewer, and are thus putting in tremendous amounts of money in order to create a memorable and unique event for the viewer,
overshadowing the other advertisers.

A particularly suitable auction for modeling various scenarios in the real world is the \emph{all-pay auction}.
In this type of auction, all participants announce bids, and all of them pay those bids,
while only the highest bid wins the product. Candidates applying for a job are, in a sense,
participating in such a bidding process, as they put in time and effort preparing for the job interview,
while only one of them is selected for the job. This is a \emph{max-profit} auction,
as the auctioneer (employer, in this case), receives only the top bid. In comparison,
a workplace with an ``employee of the month'' competition is a \emph{sum-profit} auctioneer,
as it enjoys the fruits of all employees' labour, regardless of who won the competition.

The explosion in mass usage of the web has enabled many more all-pay auction-like interactions,
including some involving an extremely large number of participants. For example, various crowdsourcing contests,
such as the Netflix challenge,
involve many participants putting in effort,
with only the best performing one winning a prize. Similar efforts can be seen throughout the web,
in TopCoder.com, Amazon Mechanical Turk, Bitcoin mining and other frameworks.

However, despite the research done on all-pay auctions in the past few years~\citep{DV09,CHS12,LPBR13},
some basic questions about all-pay auctions remain --- in a full information setting,
any equilibrium has bidders' expected profit at $0$, raising, naturally,
the question of why bidders would participate.

Several extensions to the all-pay auction model have been suggested in order to answer this question.
For example, \cite{LPBR13} showed that allowing bidders to collude enables the cooperating bidders to have a positive expected profit, at the expense of others bidders or the auctioneer.
This paper addresses this question by suggesting a model in which the bidders have a positive expected profit.

Furthermore, in all-pay auctions,
the number of bidders is a crucial information in order to bid according to the equilibrium~\citep{ref1}.
Hence, the number of participants must be known to the bidder. 
We suggest a relaxation of this assumption by allowing the possibility of bidders' failure, that is, there is a probability that a bidder will not be able to participate in the auction.
Therefore, we assume that the number of \emph{potential} bidders and the failure probability of every bidder are common knowledge, but not the exact number of participants.

As most large-scale all-pay auction mechanisms have variable participation,
we believe this helps capture a large family of scenarios,
particularly for online, web-based, situations and the uncertainty they contain.
We propose a symmetric equilibrium for this situation, we show when it is unique and prove its various properties.
Somewhat surprisingly, allowing failures makes the expected profit for bidders positive,
justifying their participation.

We start by reviewing related work in Section~\ref{sec:relW}. We then introduce the model with and without failures.
In Section~\ref{sec:ownF}, we first examine the case where each bidder has a different failure probability. 
Next, in Section~\ref{sec:sbtg}, we study the potential manipulations possible in this model, such as announcing a false probability (e.g., saying that you will put all your time into a TopCoder.com project) and changing the probability of others (e.g., sabotaging their car). 
Finally, in Section~\ref{sec:uni}, 
as calculations in this general case are complex,
we examine situations where bidders have the same failure probability (as is possible when weather or web server failure, for example, are the main determinant of participation), enabling us to detail more information about the equilibrium in this state.
In those situations we examine the effectiveness of changing the participation probability for all the bidders (e.g., convincing a deity to make it snow or attacking the server).

\section{Related Work}\label{sec:relW}
Initial research on all-pay auction was in the political sciences, modeling lobbying activities~\citep{HR89}, but since then, much analysis (especially that dealing with the Revenue Equivalence Theorem) has been done on game-theoretic auction theory.
When bidders have the same value distribution for the item,~\cite{MR03} showed that there is a symmetric equilibrium in auctions where the winner is the bidder with the highest bid. A significant analysis of all-pay auctions in full information settings was~\cite{ref1}, showing (aided by \cite{HR89}) the equilibrium states in various cases of all-pay auctions, and noting that most valuations (apart from the top two), are not relevant to the winner's strategies.

More recent work has extended the basic model. \cite{LPBR13} addressed issues of mergers and collusions, while several others directly addressed crowdsourcing models. \cite{DV09} detailed the issues stemming from needing to choose one auction from several, and \cite{CHS12} dealt with optimal mechanisms for crowdsourcing. 

The early major work on failures in auctions was \cite{MM87}, followed soon after by \cite{Mat87}, which introduced bidders who are not certain of how many bidders there will actually be at the auction. Their analysis showed that in first-price auctions (like our all-pay auction), risk averse bidders prefer to know the numbers, while it is the auctioneer's interest to hide that number. In the case of neutral bidders (such as ours), their model claimed that bidders were unaffected by the numerical knowledge. \cite{DKL89} claimed that experiments that allowed ``contingent'' bids (i.e., one submits several bids, depending on the number of actual participants) supported these results. \cite{MM00} presented a model where auction participants know the maximal number of bidders, but not how many will ultimately participate. However, the decision in their case was endogenous to the bidder, and therefore a reserve price has a significant effect in their model (though ultimately without change in expected revenue, in comparison to full-knowledge models). In contrast to that, our model, which assumes a little more information is available to the bidders (they know the maximal number of bidders, and the probability of failure), finds that in such a scenario, bidders are better off not having everyone show up, rather than knowing the real number of contestants appearing. Empirical work done on actual auctions~\citep{LY03} seems to support some of our theoretical findings (though not specifically in all-pay auction settings).

In our settings, the failure probabilities are public information and the failures are independent. Such failures have also been studied in other game-theoretic fields.
\cite{MTBK12} studied the effects of failures in congestion games, and showed that in some cases, the failures could be beneficial to the social welfare.
Some earlier work focused on agent redundancy and agent failures in cooperative games, studying various solution concepts in such games (see, e.g., ~\citep{BMFT11,BSS14}).

\section{Model}\label{sec:model}
We consider an all-pay auction with a single auctioned item that is commonly valued by all the participants. This is a restricted case of the model in \cite{ref1}, where players' item valuation could be different.

Formally, we assume that each of the $n$ bidders issues a bid of $b_{i}$, $i=1,\ldots,n$, and all bidders value the item at $1$. The highest bidders win the item and divide it among themselves, while the rest lose their bid. Thus, bidder $i$'s utility from a combination of bids $(b_{1},\ldots,b_{n})$ is given by:
\begin{equation}
\pi_{i}(b_{1},\ldots,b_{n})=
\begin{cases}
\frac{1}{\left|\arg\max\limits_{j} b_j\right|}-b_{i} & i \in\arg\max\limits_{j} b_{j}\\
-b_{i} & i\notin\arg\max\limits_{j} b_{j}.
\end{cases}
\end{equation}

We are interested in a symmetric equilibrium, which in this case, without possibility of failure, is unique~\citep{ref1,MR03}. It is a mixed equilibrium with full support of $[0,1]$, so that each bidder's bid is distributed in $[0,1]$ according to the same cumulative distribution function $F$, with the density function $f$ (since it is non-atomic, tie-breaking is not an issue). As we compare this case to that of no-failures, this is a case similar to that presented in \cite{ref1}, where various results on the behavior of non-cooperative bidders have been provided. We briefly give an overview of the results without failures in Subsection~\ref{noFail}.

When we allow bidders to fail, we assume that each of them has a probability of participating --- $p_{i}\in [0,1]$.
As a matter of convenience, we shall order the bidders according to their probabilities, so $0\leq p_{1}\leq p_{2}\leq\ldots\leq p_{n}\leq 1$.
If a bidder fails to participate, its utility is $0$.

\subsection{Auctions without Failures}\label{noFail}
The expected utility of any participant with a bid $b$ is:
\begin{equation}
\pi(b)=(1-b)\cdot Pr(winning \mid b) + (-b)\cdot Pr(losing \mid b)
\end{equation}
where $Pr(winning \mid b)$ and $Pr(losing \mid b)$ are the probabilities of winning or losing the item when bidding $b$, respectively. In a symmetric equilibrium with $n$ players, each of the bidders chooses their bid from a single bid distribution with a probability density function $f_{n}(x)$ and a cumulative distribution function $F_{n}(x)$. A player who bids $b$ can only win if all the other $n-1$ players bid at most $b$, which occurs with probability $F_n^{n-1}(b)$. Thus, the expected utility of a player bidding $b$ is given by:
\begin{equation}
\pi\left(b\right)=\left(1-b\right)F_{n}^{n-1}\left(b\right)-b\left(1-F_{n}^{n-1}\left(b\right)\right)=F_{n}^{n-1}\left(b\right)-b.
\end{equation}
The unique symmetric equilibrium is defined by the CDF $F_{n}(x)=x^{\frac{1}{n-1}}$~\citep{ref1}. 
This equilibrium has full support, and all points in the support yield the same expected utility to a player, 
 $\pi(0)=\pi(x)$ for all $x\in [0,1]$. Since $\pi(0)=0$, this means that for all bids, $\pi\left(b\right)=0$. 
The various properties of an auction without failures can be found in Table~\ref{nonFailAuctionTable}~\citep{LPBR13}.


\begin{table}
	\begin{center}
		\begin{tabular}{|l|c|}
			\hline
			\scriptsize{\bf Variable}& \scriptsize{\bf No Failures}\\
			\hline
			\scriptsize Expected bid & \scriptsize$\frac{1}{n}$\\
			$[$\scriptsize Variance$]$ & $\left[\frac{1}{2n-1}-\frac{1}{n^{2}}\right]$\\
			\hline
			\scriptsize Bidder utility & $0$\\
			$[$\scriptsize Variance$]$ & $\left[\frac{n-1}{n(2n-1)}\right]$\\
			\hline
			\scriptsize Sum-profit principal utility& $1$\\
			$[$\scriptsize Variance$]$& $\left[\frac{n}{2n-1}-\frac{1}{n}\right]$\\
			\hline
			\scriptsize Max-profit principal utility &$\frac{n}{2n-1}$ \\
			$[$\scriptsize Variance$]$& $\left[\frac{n(n-1)^{2}}{(3n-2)(2n-1)^{2}}\right]$\\
			\hline
		\end{tabular}\\
		\caption{The values, in expectation, of some of the variables when there is no possibility of failure}\label{nonFailAuctionTable}
	\end{center}
\end{table}

\section{Every Bidder with Own Failure Probability}\label{sec:ownF}
In this section, we assume that each bidder has its own probability for
participating in the auction, with $0\leq p_{1}\leq\ldots\leq p_{n}\leq1$.
We can assume without loss of generality, that each
bidder has a positive participating probability, that is, $p_{1}>0$.
If this is not the case, we can remove from the auction the bidders
with zero probability of participating.

\subsection{Equilibrium Properties}

Before we present a symmetric Nash equilibrium, we will characterize
any Nash equilibrium.
\begin{thm}
	In common values all-pay auction when the item value is $1$, if $p_{n-1}<1$ then there is a
	unique Nash equilibrium, in which the expected profit of every participating
	bidder is $\prod_{j=1}^{n-1}\left(1-p_{j}\right)$. Furthermore, there
	exists a continuous function $z:\left[0,1 - \prod_{j=1}^{n-1}\left(1-p_{j}\right) \right]\rightarrow\left[0,1\right]$,
	such that when a bidder $i$, has a positive density over an interval,
	they bid according to $F_{i}\left(x\right)=\frac{z\left(x\right)+p_{i}-1}{p_{i}}$
	over that interval, and if $p_{i}=p_{j}$ then $F_{i}=F_{j}$.
	\label{thm:positiveProfit}
\end{thm}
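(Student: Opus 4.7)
The plan is to first derive structural properties that any Nash equilibrium must satisfy, then use them to pin down a common profit level and a common auxiliary function, and finally construct this function piecewise from the top of the support downward. Conditional on participation, bidder $i$'s expected utility from bidding $b$ is $\pi_i(b) = \prod_{j\neq i} G_j(b) - b$, where $G_j(b) = 1-p_j+p_j F_j(b)$ is the probability that bidder $j$ either fails or bids at most $b$. Standard arguments rule out pure strategies and atoms strictly above $0$: shared atoms can be profitably undercut by an $\varepsilon$-downward deviation, and an isolated atom at $b>0$ cannot best-respond once opponents react to it.

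The main structural claim is that at every interior point $b$ of the union of the supports at least two bidders must be active; otherwise, on an interval where only bidder $i$ is active, $\prod_{j\neq i}G_j$ would be constant while $b$ grows, so $\pi_i$ would be strictly decreasing there, contradicting indifference. This forces the union of supports to be an interval $[0,\bar b]$, keeps the $G_j$ continuous, and makes the indifference equations of any two bidders $i,k$ active at the same $b$ yield $G_i(b)(b+\pi_i)=G_k(b)(b+\pi_k)$. Propagating through the connected support gives a common expected profit $\pi^*$ for every bidder and a common value $z(b) := G_i(b) = G_k(b)$ for any two bidders active at $b$. The claimed formula $F_i(x) = (z(x)+p_i-1)/p_i$ then follows immediately from the definition of $G_i$, and $F_i = F_j$ whenever $p_i = p_j$.

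To identify $\pi^*$ I would show bidders drop out in order of increasing $p_i$: if $p_i \leq p_j$ but $j$ drops out first at some $b_j$, continuity of $z$ at $b_j$ forces $z(b_j)=1-p_j$ and hence $F_i(b_j^-) = (p_i - p_j)/p_i < 0$, a contradiction. Hence bidder $n$ remains active down to $b=0$, and the ``at least two active'' property forces bidder $n-1$ to drop out exactly at $0$, yielding $z(0) = 1-p_{n-1}$. Among the bidders other than $n$, none can carry an atom at $0$ (by the no-coincident-atoms argument combined with the fact that $F_i(0)>0$ requires $p_i>p_{n-1}$), so evaluating the indifference relation for any bidder $i<n$ at $b=0$ gives $\pi^* = (1-p_{n-1})\prod_{j=1}^{n-2}(1-p_j) = \prod_{j=1}^{n-1}(1-p_j)$, and $\bar b = 1-\pi^*$ follows from $z(\bar b)=1$. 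With $\pi^*$ fixed, uniqueness reduces to building $z$ inductively: on the segment where exactly bidders $k,\dots,n$ are active, indifference reads $z(b)^{n-k}\prod_{j<k}(1-p_j)=b+\pi^*$, which has a unique continuous increasing solution joining $z=1-p_{k-1}$ to $z=1-p_k$, and concatenating these segments yields a unique continuous $z:[0,\bar b]\to[0,1]$. The main obstacle will be the rigorous version of the ``two active bidders everywhere'' step together with the atom-at-$0$ analysis: one must rule out configurations where bidder $n-1$ itself could carry an atom at $0$ that would shift $\pi^*$, and verify that the tie-breaking induced by bidder $n$'s potential atom at $0$ remains consistent with the indifference equations of every other bidder.
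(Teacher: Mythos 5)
Your proposal is correct and follows essentially the same route as the paper's proof: rule out atoms above zero and coincident atoms at zero, show at least two bidders are active throughout a gap-free common support, deduce a common profit level and a common auxiliary function $z$, establish the dropout order by participation probability, identify $\lambda=\prod_{j=1}^{n-1}\left(1-p_j\right)$, and build $z$ piecewise from the top of the support downward. The only step stated too loosely is the equal-profits claim --- the pairwise relation $G_i(b)\left(b+\pi_i\right)=G_k(b)\left(b+\pi_k\right)$ does not by itself ``propagate'' to $\pi_i=\pi_k$; one must evaluate at the upper endpoint $\overline{s}_j$ of a bidder's support, where $F_j=1$ and hence $G_j=1\geq G_i$, exactly as the paper's Lemma~\ref{lem:lambda} does.
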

The proof of Theorem \ref{thm:positiveProfit} can be found at the
appendix.
As Theorem~\ref{thm:positiveProfit} applies to the case where $p_{n-1} < 1$, 
we now deal with the other case. 
\begin{thm}\label{thm:zeroProfit}
	In common values all-pay auction when the item value is $1$, if $p_{n-1}=1$ then in every Nash
	equilibrium the expected profit of every participating bidder is $0$.
	At least two bidders with $p=1$ randomize over $\left[0,1\right]$
	with each other player $i$ randomizing continuously over $\left(b_{i},1\right]$,
	$b_{i}>0$, and having an atomic point at $0$ of $\alpha_{i}\left(0\right)$.
	There exists a continuous function $z\left(x\right):\left[0,1\right]\rightarrow\left[0,1\right]$,
	such that when a bidder, $i$, has a positive density over an interval,
	they bid according to $F_{i}\left(x\right)=\frac{z\left(x\right)+p_{i}-1}{p_{i}}$
	over that interval. For every $i$, the atomic point at $0$
	is equals to $F_{i}\left(0\right)$. \label{thm;pn-1=00003D1}
\end{thm}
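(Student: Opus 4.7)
My plan is to follow the same template as the proof of Theorem~\ref{thm:positiveProfit}, specialised to the degenerate case $p_{n-1}=p_n=1$, reading the zero-profit conclusion off the fact that two bidders face one another with certainty.

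First I would pin down the two always-participating bidders. Standard undercutting arguments for all-pay auctions (cf.\ Theorem~\ref{thm:positiveProfit} and~\cite{ref1}) force $F_{n-1}$ and $F_n$ to share the upper support endpoint $1$, to have no atoms in $(0,1]$, and to have no interior gaps; a symmetric undercut also rules out an atom at $0$ for either of them, since the opponent could then shift mass from $0$ to $0^+$ and strictly improve. Hence both are continuous CDFs with support $[0,1]$ and $F_{n-1}(0)=F_n(0)=0$.

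Writing $G_j(b):=(1-p_j)+p_jF_j(b)$ and $Z(b):=\prod_j G_j(b)$, bidder $i$'s expected profit at bid $b$, conditional on participating, is $Z(b)/G_i(b)-b$. Evaluating at $b=1$ for $i\in\{n-1,n\}$ gives $0$, and indifference over $[0,1]$ then forces the equilibrium profit of each $p=1$ bidder to be $0$ identically. The resulting identity $Z(b)/F_{n-1}(b)=Z(b)/F_n(b)=b$ on $(0,1]$ yields $F_{n-1}\equiv F_n$, and I would set $z(b):=F_n(b)$, a continuous function $[0,1]\to[0,1]$ with $z(0)=0$ and $z(1)=1$.

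For each remaining bidder $k$ with $p_k<1$, the same upper-endpoint argument forces the top of $k$'s active support to be $1$, giving conditional profit $0$ there and hence $0$ throughout $k$'s support by indifference; combined with the zero payoff from not participating this yields zero unconditional profit as well. The indifference identity $Z(b)/G_k(b)=b$ on any interval of positive density rearranges to $G_k(b)=z(b)$, i.e., $F_k(b)=(z(b)+p_k-1)/p_k$. Undercutting arguments again exclude atoms of $k$ in $(0,1]$ and internal gaps, so $k$'s strategy reduces to a single continuous piece $(b_k,1]$ together with a possible atom at $0$, whose mass is $F_k(0)$ by definition of a CDF. Finally, $F_k(b_k)\ge0$ requires $z(b_k)\ge1-p_k>0$, which together with $z(0)=0$ and continuity of $z$ forces $b_k>0$ strictly.

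The step I expect to be most delicate is the global construction of the continuous function $z$: as $b$ traverses $[0,1]$ the set of bidders active at $b$ changes at each threshold $b_k$, and on each sub-interval $z$ is implicitly defined by the system $\prod_j G_j(b)=b\,z(b)$ with the currently active $F_k$'s plugged in as $(z(b)+p_k-1)/p_k$ and the inactive ones held fixed at $(1-p_k)+p_k F_k(0)$. Matching these piecewise definitions so that $z$ is continuous across every $b_k$, and checking that the resulting $F_k$'s remain valid CDFs, is the same combinatorial bookkeeping that dominates the proof of Theorem~\ref{thm:positiveProfit}; here it is somewhat simpler because the boundary value at $b=1$ is pinned at $z(1)=1$ rather than at the solution of an implicit equation.
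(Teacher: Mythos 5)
Your overall strategy --- redoing the lemma-by-lemma characterization of Theorem~\ref{thm:positiveProfit} in the degenerate case $p_{n-1}=p_{n}=1$ --- is genuinely different from what the paper does. The paper maps each bidder $(p_{i},F_{i})$ to a no-failure bidder with CDF $G_{i}(x)=1-p_{i}+p_{i}F_{i}(x)$ (equivalently, atom at $0$ of $\beta_{i}(0)=1-p_{i}(1-\alpha_{i}(0))$), observes that the $G_{i}$ form a Nash equilibrium of the complete-information all-pay auction without failures, and then invokes the full characterization of those equilibria (Theorem~\ref{thm:baye}, from \cite{ref1}) to read off every clause of the statement, setting $z=G_{n}$. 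Your route could in principle be completed, but as written it has a genuine logical gap at its first step: you obtain zero profit by asserting that $F_{n-1}$ and $F_{n}$ have upper support endpoint $1$ and evaluating the payoff there. The common upper endpoint of the supports is $1-\pi^{*}$, where $\pi^{*}$ is the common equilibrium profit, so ``the top of the support is $1$'' is \emph{equivalent} to ``$\pi^{*}=0$'' and cannot be used to prove it; evaluating the payoff at the bid $1$ only yields $\pi^{*}\geq 0$. (In the companion case $p_{n-1}<1$ the identical top-endpoint computation would ``prove'' zero profit, whereas the true profit is $\lambda>0$ and the top endpoint is $1-\lambda$.) The zero-profit conclusion must come from the \emph{bottom} of the support: at least two bidders' supports reach down to $0$, at most one bidder can place an atom at $0$, and therefore some bidder with no atom at $0$ faces, arbitrarily close to $0$, a surely-participating opponent whose CDF vanishes there, driving its winning probability --- and hence $\pi^{*}$ --- to $0$.

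A related gap: your exclusion of an atom at $0$ for the two distinguished $p=1$ bidders only treats the case in which \emph{both} place an atom there (then either can profitably jump over the other). It does not rule out exactly one of them, say bidder $n$, placing an atom at $0$ while $F_{n-1}(0)=0$; bidder $n$'s own indifference is unaffected by such an atom (bidding $0$ wins nothing either way), so the deviation you describe is unavailable, and excluding this configuration again requires the bottom-end analysis --- or, as in the paper, the clause of Theorem~\ref{thm:baye} guaranteeing at least two bidders with $\beta_{i}(0)=1-p_{i}(1-\alpha_{i}(0))=0$ in the reduced game, which forces $p_{i}=1$ and $\alpha_{i}(0)=0$ for two bidders. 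By contrast, the step you flag as most delicate --- stitching $z$ together continuously across the thresholds $b_{k}$ --- is exactly the bookkeeping that the paper's reduction to Theorem~\ref{thm:baye} avoids entirely.
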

When there are at least two bidders with $p=1$, the auction approaches
the case without failures. The proof of Theorem~\ref{thm:zeroProfit} is 
a generalization of the case without agent failures~\citep{ref1}, and can be found at the appendix.

\subsection{Symmetric Equilibrium}

We are now ready to present a symmetric Nash equilibrium, we assume
that $0<p_{1}\leq\dots\leq p_{n}\leq1$. If $p_{n-1}<1$, from Theorem
\ref{thm:positiveProfit} it follows that the equilibrium is unique. If $p_{n-1}=1$
the equilibrium is not unique, except for two bidders with $p=1$,
every bidder can place an arbitrary atomic point at $0$. In the equilibrium
that we present, every bidder has an atomic point at $0$ of $0$, and
thus the equilibrium is symmetric.

In order to simplify the calculations, we add a ``dummy'' bidder,
with index 0, and $p_{0}=0$, adding a bidder that surely will not
participate in the auction, does not effect the other bidders and
therefore does not influence the equilibrium.

We begin by defining a few helpful functions. First, we define $\lambda=\prod\limits _{j=1}^{n-1}\left(1-p_{j}\right)$,
and we define the following expressions for all $1\leq k \leq n-1$:
\begin{equation}
H_{k}\left(x\right)=\left(\frac{\lambda+x}{\prod_{j=0}^{k-1}\left(1-p_{j}\right)}\right)^{\frac{1}{n-k}}
\end{equation}
and
\begin{equation}
\underline{s}_{k}=\left(1-p_{k}\right)^{n-k}\prod_{j=0}^{k-1}\left(1-p_{j}\right)-\lambda.
\end{equation}
For the virtual ``0'' index, we use $\underline{s}_{0}=1-\lambda$.
Note that because the $p_{i}$'s are ordered, so are the $\underline{s}_{i}$'s:
$1\geq\underline{s}_{0}\geq\underline{s}_{1}\geq\ldots\geq\underline{s}_{n-1}=0$.
\footnote{An equivalent definition of $\underline{s}_{k}$ is $\underline{s}_{k}=\left(1-p_{k}\right)^{n-k-1}\prod_{j=0}^{k}\left(1-p_{j}\right)-\lambda$,
	we alternate between those two definitions.
}

We are now ready to define the CDFs for our equilibrium, for every
bidder $1 \leq i \leq n-1$:

\begin{equation}
F_{i}\left(x\right)=\begin{cases}
1 & \text{\ensuremath{x\geq\underline{s}_{0}}}\\
\frac{H_{1}\left(x\right)+p_{i}-1}{p_{i}} & x\in\left[\underline{s}_{1},\underline{s}_{0}\right)\\
\vdots & \vdots\\
\frac{H_{k}\left(x\right)+p_{i}-1}{p_{i}} & x\in\left[\underline{s}_{k},\underline{s}_{k-1}\right)\\
\vdots & \vdots\\
\frac{H_{i}\left(x\right)+p_{i}-1}{p_{i}} & x\in\left[\underline{s}_{i},\underline{s}_{i-1}\right)\\
0 & x<\underline{s}_{i}.
\end{cases}
\label{eq:cdf_i}
\end{equation}
$F_{n}$, uniquely, while it is very similar to $F_{n-1}$ in its
piecewise composition, has an atomic point at $0$ of $1-\frac{p_{n-1}}{p_{n}}$,
so:
\begin{equation}
F_{n}\left(x\right)=\begin{cases}
1 & \text{\ensuremath{x\geq\underline{s}_{0}}}\\
\frac{H_{1}\left(x\right)+p_{n}-1}{p_{n}} & x\in\left[\underline{s}_{1},\underline{s}_{0}\right)\\
\vdots & \vdots\\
\frac{H_{k}\left(x\right)+p_{n}-1}{p_{n}} & x\in\left[\underline{s}_{k},\underline{s}_{k-1}\right)\\
\vdots & \vdots\\
\frac{H_{n-1}\left(x\right)+p_{n}-1}{p_{n}} & x\in\left(\underline{s}_{n-1},\underline{s}_{n-2}\right)\\
1-\frac{p_{n-1}}{p_{n}} & x=0\\
0 & x<0.
\end{cases}
\label{eq:cdf_n}
\end{equation}
Note that all CDFs are continuous and piecewise differentiable,\footnote{Note
	that when $\prod_{j=0}^{k-1}\left(1-p_{j}\right)=0$, and $H_{k}$
	is undefined for some $k$, then there is no range for which that
	$H_{k}$ is used.}
and when $p_{i}=p_{j}$ it follows that $F_{i}=F_{j}$; therefore,
this is a symmetric equilibrium. 
The intuition behind this equilibrium is that bidders that participate
rarely will usually bid high, while those that frequently participate
in auctions with less competition would more commonly bid low.
\begin{thm}
	The strategy profile 
	$F_1,\dots,F_n$
	defined in Equations~(\ref{eq:cdf_i}) and~(\ref{eq:cdf_n})
	 is an equilibrium, in which the expected profit of the bidders, if they haven't failed is $\lambda$.
	\label{thm:symEquil}
\end{thm}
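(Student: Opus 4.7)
The plan is to verify directly that each bidder is indifferent among all bids in her support (yielding exactly $\lambda$) and weakly prefers every bid in the support to any bid outside it. Fix bidder $i$ and suppose every other bidder $j \neq i$ plays $F_j$ (including the atom at $0$ for $j = n$). Since failures and bids are independent, the expected utility of bidder $i$ bidding $x > 0$, conditional on $i$ participating, is
\begin{equation*}
\pi_i(x) = \prod_{j \neq i}\bigl[(1 - p_j) + p_j F_j(x)\bigr] - x.
\end{equation*}
The driving observation is that on each sub-interval $[\underline{s}_k, \underline{s}_{k-1})$, every bidder $j$ with $j \geq k$ uses the common $H_k$-formula, which makes $(1-p_j) + p_j F_j(x) = H_k(x)$, whereas every bidder $j < k$ is still below her own minimum $\underline{s}_j$ and therefore contributes $1 - p_j$.

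Using this, the case $x \in [\underline{s}_k, \underline{s}_{k-1})$ with $k \leq i$ (i.e., inside $F_i$'s support) is immediate: there are $k-1$ indices $j \neq i$ with $j < k$ and $n-k$ indices $j \neq i$ with $j \geq k$, so, including the dummy $p_0 = 0$,
\begin{equation*}
\pi_i(x) = \Bigl[\prod_{j=0}^{k-1}(1-p_j)\Bigr] \cdot H_k(x)^{n-k} - x = (\lambda + x) - x = \lambda,
\end{equation*}
by the very definition of $H_k$. Continuity of the $F_i$'s across the break-points $\underline{s}_k$ is a clean consequence of the identities $H_k(\underline{s}_k) = 1 - p_k$ and $H_k(\underline{s}_{k-1}) = 1 - p_{k-1}$, which are themselves just unpacking the definition of $\underline{s}_k$.

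Next I would verify optimality off the support. For $x \geq \underline{s}_0$ the product equals $1$ and $\pi_i(x) = 1 - x \leq 1 - \underline{s}_0 = \lambda$. For $x \in [\underline{s}_k, \underline{s}_{k-1})$ with $k > i$, bidder $i$ now falls into the $j < k$ group when excluded from the product, so
\begin{equation*}
\pi_i(x) = \frac{\prod_{j=0}^{k-1}(1-p_j)}{1-p_i} \cdot H_k(x)^{n-k+1} - x = \frac{H_k(x)(\lambda + x)}{1 - p_i} - x,
\end{equation*}
and the bound $\pi_i(x) \leq \lambda$ reduces to $H_k(x) \leq 1 - p_i$. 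Since $H_k$ is monotone with range $[1-p_k,\, 1-p_{k-1}]$ on $[\underline{s}_k, \underline{s}_{k-1}]$ and $i \leq k-1$ gives $p_i \leq p_{k-1}$, this inequality holds.

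The most delicate piece is the atom at $0$ of bidder $n$. Bidder $n$ bidding $0$ wins exactly when the other $n-1$ bidders all fail, giving $\pi_n(0) = \lambda$; the atom weight $1 - p_{n-1}/p_n$ is precisely what makes $F_n$ continuous from the right at $0$. For any $i < n$ considering a bid of $0$, one must include the tie-breaking factor $\tfrac{1}{2}$ against bidder $n$'s atom, which yields $\pi_i(0) = \frac{\lambda}{1-p_i}\bigl(1 - (p_n + p_{n-1})/2\bigr) \leq \lambda$ because $p_i \leq p_{n-1} \leq (p_n + p_{n-1})/2$. The principal obstacle throughout is the bookkeeping: tracking which bidders are ``active'' on each sub-interval and correctly partitioning the product over $j \neq i$; once the indices are aligned the proof collapses to the single algebraic identity $H_k(x)^{n-k} \cdot \prod_{j=0}^{k-1}(1 - p_j) = \lambda + x$, which encodes the entire equilibrium.
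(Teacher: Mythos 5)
Your proof is correct and follows essentially the same route as the paper's: a direct verification that $\pi_i(x)=\lambda$ on each piece of the support via the identity $\prod_{j=0}^{k-1}(1-p_j)\,H_k(x)^{n-k}=\lambda+x$, and that $\pi_i(x)\leq\lambda$ off the support by reducing to $H_k(x)\leq 1-p_{k-1}\leq 1-p_i$ (the paper arrives at the same inequality by bounding $\lambda+x$ above by $(1-p_{k-1})^{n-k}\prod_{j=1}^{k-1}(1-p_j)$). If anything you are slightly more thorough than the paper's own argument, which omits the case $x\geq\underline{s}_0$ and the tie-breaking check against bidder $n$'s atom at $0$ that you carry out explicitly.
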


\begin{proof}
	
	In the course of proving this is indeed
	a equilibrium, we shall calculate the expected utility of the bidders
	when they participate.

	When bidder $i$ bids according to this distribution, i.e., $x\in\left[\underline{s}_{k},\underline{s}_{k-1}\right)$
	for $1\leq k\leq i$:
	\begin{equation}
	\begin{array}{rl}
	\pi_{i}\left(x\right)= & \left(1-x\right)\prod\limits _{j=1;j\neq i}^{n}\left(p_{j}F_{j}\left(x\right)+1-p_{j}\right)-x\left(1-\prod\limits _{j=1;j\neq i}^{n}\left(p_{j}F_{j}\left(x\right)+1-p_{j}\right)\right)\\
	= & \prod\limits_{j=1;j\neq i}^{n}\left(p_{j}F_{j}\left(x\right)+1-p_{j}\right)-x\\
	= & \prod\limits_{j=1}^{k-1}\left(p_{j}F_{j}\left(x\right)+1-p_{j}\right)\prod\limits_{j=k;j\neq i}^{n}\left(p_{j}F_{j}\left(x\right)+1-p_{j}\right)-x\\
	= & \prod\limits_{j=1}^{k-1}\left(1-p_{j}\right)\prod\limits_{j=1;j\neq i}^{n}H_{k}\left(x\right)-x\\
	= & \prod\limits_{j=1}^{k-1}\left(1-p_{j}\right)H_{k}\left(x\right)^{n-k}-x\\
	= & \prod\limits_{j=1}^{k-1}\left(1-p_{j}\right)\left(\frac{\lambda+x}{\prod_{j=0}^{k-1}\left(1-p_{j}\right)}\right)-x\\
	= & \lambda.
	\end{array}
	\end{equation}	
	If bidder $i$ bids outside their support, i.e., $x\in$ for $i+1\leq k\leq n-1$,
	the same equation becomes:	
	\begin{equation}
	\begin{array}{rl}
	\pi_{i}(x)= & \prod\limits_{j=1;j\neq i}^{k-1}\left(1-p_{j}\right)\prod\limits_{j=k}^{n}H_{k}\left(x\right)-x\\
	= & \prod\limits_{j=1;j\neq i}^{k-1}\left(1-p_{j}\right)\left(\frac{\lambda+x}{\prod_{j=1}^{k-1}(1-p_{j})}\right)^{\frac{n-k+1}{n-k}}-x\\
	= & \frac{\lambda+x}{1-p_{i}}\left(\frac{\lambda+x}{\prod_{j=1}^{k-1}(1-p_{j})}\right)^{\frac{1}{n-k}}-x.
	\end{array}
	\end{equation}	
	Now,
	\begin{equation}
	\begin{array}{rl}
	x< & \underline{s}_{k-1}=\left(1-p_{k-1}\right)^{n-k+1}\prod_{j=1}^{k-2}\left(1-p_{j}\right)-\lambda\\
	= & \left(1-p_{k-1}\right)^{n-k}\prod_{j=1}^{k-1}\left(1-p_{j}\right)-\lambda
	\end{array}
	\end{equation}
	and hence $\lambda+x<(1-p_{k-1})^{n-k}\prod_{j=1}^{k-1}(1-p_{j})$. Plugging it all together, 	
	\begin{equation}
	\begin{array}{rl}
	\pi_{i}(x)< & \frac{\lambda+x}{1-p_{i}}\left(\frac{(1-p_{k-1})^{n-k}\prod_{j=1}^{k-1}(1-p_{j})}{\prod_{j=1}^{k-1}(1-p_{j})}\right)^{^{\frac{1}{n-k}}}-x\\
	= & \frac{\lambda+x}{1-p_{i}}\left(1-p_{k-1}\right)-x\\
	= & \left(\lambda+x\right)\frac{p_{i}-p_{k-1}}{1-p_{i}}+\lambda.
	\end{array}
	\end{equation}	
	Finally, as $i+1\leq k$, $p_{i}\leq p_{k-1}$, hence $p_{i}-p_{k-1}\leq 0$, and therefore $\pi_{i}(x)<\lambda$.	
\end{proof}
\subsection{Profits}

When a bidder actually participates their expected utility, in the equilibrium, is $\lambda$, and therefore the overall expected utility for bidder $i$ is $p_{i}\lambda$ (which, naturally, decreases with $n$). Notice that, as is to be expected, a bidder's profit rises the less reliable their fellow bidders are, or the fewer participants the auction has. However, the most reliable of the bidders does not affect the profits of the rest. If a bidder can set its own participation rate, if there is no bidder with $p_{j}=1$, that is the best strategy; otherwise, the optimal probability should be $\frac{1}{2}$, as that maximizes $p_{i}(1-p_{i})\prod_{j=1;j\neq i}^{n-1}(1-p_{j})$.

\subsubsection{Expected Bid}

In order to calculate the expected bid by each bidder, we need to calculate the bidders' equilibrium PDF, for $1\leq i\leq n-1$: 
\begin{equation}
f_{i}\left(x\right)=\begin{cases}
0 & \text{\ensuremath{x\geq\underline{s}_{0}}}\\
\frac{\left(\lambda+x\right)^{\frac{2-n}{n-1}}}{p_{i}\left(n-1\right)} & x\in\left[\underline{s}_{1},\underline{s}_{0}\right)\\
\vdots & \vdots\\
\frac{\left(\lambda+x\right)^{\frac{k+1-n}{n-k}}}{p_{i}\left(n-k\right)\prod_{j=0}^{k-1}\left(1-p_{j}\right)^{\frac{1}{n-k}}} & x\in\left[\underline{s}_{k},\underline{s}_{k-1}\right)\\
\vdots & \vdots\\
\frac{\left(\lambda+x\right)^{\frac{i+1-n}{n-i}}}{p_{i}\left(n-i\right)\prod_{j=0}^{i-1}\left(1-p_{j}\right)^{\frac{1}{n-i}}}\,\,\,\, & x\in\left[\underline{s}_{i},\underline{s}_{i-1}\right)\\
0 & x<\underline{s}_{i}
\end{cases}
\end{equation}
and $f_{n}\left(x\right)=\frac{p_{n-1}}{p_{n}}f_{n-1}\left(x\right)$.
In the equilibrium, the expected bid of bidder $i$, for $1 \leq i\leq n-1$ is:
\begin{equation}
\mathbb{E}\left[bid_{i}\right]=\sum_{k=1}^{i}\int\limits _{\underline{s}_{k}}^{\underline{s}_{k-1}}x f_{i}\left(x\right)\,\mathrm{d}x.
\end{equation}

\begin{thm}
	For every $1 \leq i \leq n-1$: 
	\begin{equation}
	\begin{split}\mathbb{E}\left[bid_{i}\right]= & \frac{1}{p_{i}}\left(\frac{1}{n}+\sum_{k=1}^{i}\frac{\left(1-p_{k}\right)^{n-k}\prod_{j=1}^{k}\left(1-p_{j}\right)}{\left(n-k\right)\left(n-k+1\right)}\right.\\
	& \qquad \left.-\frac{\left(1-p_{i}\right)^{n-i}\prod_{j=1}^{i}\left(1-p_{j}\right)}{n-i}-p_{i}\lambda\right)
	\end{split}
	\end{equation}
	and 
	\begin{equation}
	\mathbb{E}\left[bid_{n}\right]=\frac{p_{n-1}}{p_{n}}\mathbb{E}\left[bid_{n-1}\right].
	\end{equation}
	\label{thm:expectedBid}
\end{thm}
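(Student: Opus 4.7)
}
The plan is to compute each piece of $\mathbb{E}\left[bid_i\right]=\sum_{k=1}^{i}\int_{\underline{s}_{k}}^{\underline{s}_{k-1}}x\,f_{i}(x)\,\mathrm{d}x$ in closed form by the substitution $u=\lambda+x$, then sum the results and observe that the bookkeeping collapses into the claimed expression. The key simplification is that the exponent in $f_{i}$ rewrites as $\frac{k+1-n}{n-k}=\frac{1}{n-k}-1$, so that on the $k$-th piece
$$x\,f_{i}(x)\;=\;\frac{(u-\lambda)\,u^{\frac{1}{n-k}-1}}{p_{i}(n-k)\prod_{j=0}^{k-1}(1-p_{j})^{1/(n-k)}},$$
which splits into $u^{1/(n-k)}$ and $-\lambda u^{1/(n-k)-1}$, both elementary to integrate.

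First I would record the crucial identity $\lambda+\underline{s}_{k}=(1-p_{k})^{n-k}\prod_{j=0}^{k-1}(1-p_{j})$, so that $(\lambda+\underline{s}_{k})^{1/(n-k)}=(1-p_{k})\prod_{j=0}^{k-1}(1-p_{j})^{1/(n-k)}$ and $(\lambda+\underline{s}_{k})^{(n-k+1)/(n-k)}$ factors cleanly. Substituting the endpoints $u=\lambda+\underline{s}_{k-1}$ and $u=\lambda+\underline{s}_{k}$ (using both equivalent forms of $\underline{s}_{k}$ to match powers) cancels the denominator $\prod_{j=0}^{k-1}(1-p_{j})^{1/(n-k)}$ exactly. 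This should yield, after routine algebra,
$$\int_{\underline{s}_{k}}^{\underline{s}_{k-1}}x\,f_{i}(x)\,\mathrm{d}x=\frac{1}{p_{i}}\!\left[\frac{(1-p_{k-1})^{n-k}\prod_{j=1}^{k-1}(1-p_{j})}{n-k+1}-\frac{(1-p_{k})^{n-k}\prod_{j=1}^{k}(1-p_{j})}{n-k+1}+\lambda(p_{k-1}-p_{k})\right].$$

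Next I would sum this from $k=1$ to $i$. The $\lambda$-terms telescope in $k$, giving $\lambda(p_{0}-p_{i})=-\lambda p_{i}$. The $k=1$ upper boundary contributes $\frac{1}{n}$ using $p_{0}=0$ and the empty product convention. For the remaining boundary terms, I would reindex the $D_{k-1}:=\frac{(1-p_{k-1})^{n-k}\prod_{j=1}^{k-1}(1-p_{j})}{n-k+1}$ piece by $l=k-1$, aligning it with the $B_{k}:=\frac{(1-p_{k})^{n-k}\prod_{j=1}^{k}(1-p_{j})}{n-k+1}$ piece. For each $1\leq k\leq i-1$ the pair combines via partial fractions
$$\frac{1}{n-k}-\frac{1}{n-k+1}=\frac{1}{(n-k)(n-k+1)},$$
producing the announced summand $\frac{(1-p_{k})^{n-k}\prod_{j=1}^{k}(1-p_{j})}{(n-k)(n-k+1)}$. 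A final check at $k=i$ shows that the leftover terms combine into the remaining summand at $k=i$ plus the residual $-\frac{(1-p_{i})^{n-i}\prod_{j=1}^{i}(1-p_{j})}{n-i}$, using the identity $\frac{1}{(n-i)(n-i+1)}-\frac{1}{n-i}+\frac{1}{n-i+1}=0$.

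Finally, for $\mathbb{E}[bid_{n}]$, it suffices to note that the atomic point of $F_{n}$ at $0$ contributes nothing to the expectation, and $f_{n}(x)=\frac{p_{n-1}}{p_{n}}f_{n-1}(x)$ pointwise on the support, so $\mathbb{E}[bid_{n}]=\frac{p_{n-1}}{p_{n}}\mathbb{E}[bid_{n-1}]$ follows immediately by linearity. The main obstacle in all of this is not any single step but the careful tracking of the three different exponent bases ($n-k$, $n-k+1$, $1/(n-k)$) and the two equivalent forms of $\underline{s}_{k}$, ensuring the boundary expressions match when reindexing; any bookkeeping slip there would break the partial-fraction collapse that delivers the final $(n-k)(n-k+1)$ denominator.
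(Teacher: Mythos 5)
Your overall route is the same as the paper's: split $\mathbb{E}[bid_i]$ over the intervals $[\underline{s}_k,\underline{s}_{k-1})$, integrate each piece via $u=\lambda+x$ using the identity $\lambda+\underline{s}_k=(1-p_k)^{n-k}\prod_{j=0}^{k-1}(1-p_j)$, and telescope. The $\lambda$-telescoping, the reindexing, the closing identity at $k=i$, and the one-line argument for $\mathbb{E}[bid_n]$ are all fine. However, your displayed per-piece formula is wrong at the upper endpoint, and as written the partial-fraction collapse you describe does not go through. The antiderivative on the $k$-th piece is $\frac{u^{1/(n-k)}}{p_i\prod_{j=0}^{k-1}(1-p_j)^{1/(n-k)}}\bigl(\frac{u}{n-k+1}-\lambda\bigr)$, and at $u=\lambda+\underline{s}_{k-1}$ the prefactor equals $1-p_{k-1}$, not $1$. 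Hence the first bracketed term should be $\frac{(1-p_{k-1})^{n-k+1}\prod_{j=1}^{k-1}(1-p_j)}{n-k+1}$ --- i.e., exactly $\frac{1}{n-k+1}$ times the numerator $(1-p_{k-1})^{n-(k-1)}\prod_{j=1}^{k-1}(1-p_j)$ that appears in the theorem at index $k-1$ --- whereas you wrote exponent $n-k$. You absorbed the analogous prefactor $(1-p_k)$ correctly into the lower-endpoint term (which is why its product runs to $j=k$), but dropped it at the upper endpoint; the slip is invisible at $k=1$ because $p_0=0$.

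This is not merely cosmetic: with your $D_{k-1}$ as written, the reindexed term is $\frac{(1-p_l)^{n-l-1}\prod_{j=1}^{l}(1-p_j)}{n-l}$ while $B_l=\frac{(1-p_l)^{n-l}\prod_{j=1}^{l}(1-p_j)}{n-l+1}$, so the pair combines to $(1-p_l)^{n-l}\prod_{j=1}^{l}(1-p_j)\bigl(\frac{1}{(n-l)(1-p_l)}-\frac{1}{n-l+1}\bigr)$, which equals the announced summand only when $p_l=0$. A concrete check: for $n=3$, $i=k=2$, the density is constant on $[0,\underline{s}_1)$ and the integral evaluates to $\frac{(1-p_1)\bigl((1-p_1)-(1-p_2)\bigr)^2}{2p_2}$, which your formula does not reproduce. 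Once the exponent is corrected to $n-k+1$, everything you describe --- the telescoping, the partial fractions, and the boundary identity $\frac{1}{(n-i)(n-i+1)}-\frac{1}{n-i}+\frac{1}{n-i+1}=0$ --- goes through and recovers the theorem; this corrected computation is precisely the paper's proof.
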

The expected bid decreases with $n$, indicating, as in the no-failure model, that as more bidders participate, the chance of losing increases, causing bidders to lower their exposure.
The proof of Theorem~\ref{thm:expectedBid} can be found at the
appendix.

\subsubsection{Auctioneer - Sum-Profit Model}
In the equilibrium,
the expected profit of the auctioneer in the sum-profit model is given
by:
\begin{equation}
\mathbb{E}\left[AP\right]=\sum\limits_{i=1}^{n}p_{i}\mathbb{E}\left[bid_{i}\right].
\end{equation}
When summing over all bidders, we receive a much simpler
expression.

\begin{thm}	
	The sum-profit auctioneer's equilibrium profits are:
\begin{equation}
	\sum_{i=1}^{n}p_{i}\mathbb{E}\left[bid_{i}\right]=1-\lambda\left(1+\sum_{i=1}^{n-1}p_{i}\right).
\end{equation}
	\label{thm:sumProfit}
\end{thm}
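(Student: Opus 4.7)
The plan is to substitute the closed form from Theorem~\ref{thm:expectedBid} into the left-hand side and show that almost everything cancels by a double-sum swap. Writing $D_k=(1-p_k)^{n-k}\prod_{j=1}^{k}(1-p_j)$, Theorem~\ref{thm:expectedBid} gives, for $1\leq i\leq n-1$,
\begin{equation}
p_i\,\mathbb{E}[bid_i] \;=\; \frac{1}{n}+\sum_{k=1}^{i}\frac{D_k}{(n-k)(n-k+1)}-\frac{D_i}{n-i}-p_i\lambda,
\end{equation}
while $p_n\,\mathbb{E}[bid_n]=p_{n-1}\mathbb{E}[bid_{n-1}]$. So the total equals $\sum_{i=1}^{n-1}p_i\mathbb{E}[bid_i]\,+\,p_{n-1}\mathbb{E}[bid_{n-1}]$, and it suffices to evaluate these two pieces separately.

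For the first piece, I would swap the order of summation in the double sum $\sum_{i=1}^{n-1}\sum_{k=1}^{i}\frac{D_k}{(n-k)(n-k+1)}$, which yields $\sum_{k=1}^{n-1}(n-k)\cdot\frac{D_k}{(n-k)(n-k+1)}=\sum_{k=1}^{n-1}\frac{D_k}{n-k+1}$. Combining with $-\sum_{i=1}^{n-1}\frac{D_i}{n-i}$ and using the partial-fraction identity $\frac{1}{n-k+1}-\frac{1}{n-k}=-\frac{1}{(n-k)(n-k+1)}$, the first piece collapses to
\begin{equation}
\sum_{i=1}^{n-1}p_i\mathbb{E}[bid_i]\;=\;\frac{n-1}{n}\;-\;\sum_{k=1}^{n-1}\frac{D_k}{(n-k)(n-k+1)}\;-\;\lambda\sum_{i=1}^{n-1}p_i.
\end{equation}

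For the second piece, I simply apply the formula at $i=n-1$, noting that $D_{n-1}=(1-p_{n-1})\lambda$ and $n-(n-1)=1$, to obtain
\begin{equation}
p_{n-1}\mathbb{E}[bid_{n-1}]\;=\;\frac{1}{n}+\sum_{k=1}^{n-1}\frac{D_k}{(n-k)(n-k+1)}-(1-p_{n-1})\lambda-p_{n-1}\lambda.
\end{equation}
Adding the two pieces, the sums of $\frac{D_k}{(n-k)(n-k+1)}$ cancel, the constants combine as $\frac{n-1}{n}+\frac{1}{n}=1$, and the $\lambda$-terms collapse to $-(1-p_{n-1})\lambda-p_{n-1}\lambda-\lambda\sum_{i=1}^{n-1}p_i=-\lambda\bigl(1+\sum_{i=1}^{n-1}p_i\bigr)$, giving the claimed formula.

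No step looks genuinely delicate; the only thing that needs care is the index bookkeeping when interchanging the order of summation and when identifying $D_{n-1}$ with $(1-p_{n-1})\lambda$ so that the extra ``boundary'' contribution from doubling the $i=n-1$ term exactly supplies the missing $-\lambda$ needed to match the right-hand side. The clean cancellation is really why summing over all bidders gives a much simpler expression than any individual $\mathbb{E}[bid_i]$.
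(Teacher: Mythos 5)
Your proposal is correct and follows essentially the same route as the paper: substitute the closed form from Theorem~\ref{thm:expectedBid}, use $p_n\mathbb{E}[bid_n]=p_{n-1}\mathbb{E}[bid_{n-1}]$, swap the order of summation, and let the telescoping cancellation do the rest. The only (immaterial) difference is bookkeeping: the paper merges the double sum with the $n$-th bidder's sum to produce $\sum_k D_k/(n-k)$, which then cancels against $-\sum_i D_i/(n-i)$, whereas you cancel within the first piece via partial fractions and let the residual cancel against the second piece.
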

In this case, growth with $n$ is monotonic increasing, and hence, any addition to $n$ is a net positive for the sum-profit auctioneer. The proof can be found at the appendix. 

\subsubsection{Auctioneer - Max-Profit Model}

To calculate a max-profit auctioneer's profits, we need to first define the max-profit auctioneer's profits equilibrium CDF: 
\begin{equation}
G\left(x\right)=\prod_{i=1}^{n}\left(p_{i}F_{i}\left(x\right)+1-p_{i}\right)
\end{equation}
That is, 
\begin{equation}
G\left(x\right)=\begin{cases}
1 & x\geq\underline{s}_{0}\\
\left(\lambda+x\right)^{\frac{n}{n-1}} & x\in\left[\underline{s}_{1},\underline{s}_{0}\right)\\
\vdots & \vdots\\
\frac{\left(\lambda+x\right)^{\frac{n-k+1}{n-k}}}{\prod_{j=0}^{k-1}\left(1-p_{j}\right)^{\frac{1}{n-k}}}\,\,\, & x\in\left[\underline{s}_{k},\underline{s}_{k-1}\right)\\
\vdots & \vdots\\
\frac{\left(\lambda+x\right)^{2}}{\prod_{j=0}^{n-2}\left(1-p_{j}\right)} & x\in\left[\underline{s}_{n-1},\underline{s}_{n-2}\right)\\
0 & x<0.
\end{cases}
\end{equation}
This is differentiable, and hence we can find $g\left(x\right)=\frac{\partial}{\partial x}G\left(x\right)$ and the max-profit auctioneer's expected profit.
\begin{thm}
	In the equilibrium, the max-profit auctioneer's profits are:	
	\begin{equation}
	\begin{split}
	\mathbb{E}\left[AP\right] = & \int\limits _{\underline{s}_{n-1}}^{\underline{s}_{0}}x g\left(x\right)\,\mathrm{d}x\\
	= & \frac{n}{2n-1}-\lambda+ \sum_{k=1}^{n-1}\left(\frac{(1-p_{k})^{2n-2k-1}\prod_{j=1}^{k}(1-p_{j})^{2}}{4(n-k)^{2}-1}\right).
	\end{split}
	\end{equation}	
	\label{thm:maxProfit}
\end{thm}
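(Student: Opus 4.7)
The plan is to evaluate $\mathbb{E}[AP] = \int_{0}^{1-\lambda} x\,g(x)\,\mathrm{d}x$ (using $\underline{s}_{n-1}=0$ and $\underline{s}_0 = 1-\lambda$) by integration by parts rather than differentiating $G$ piecewise. Since $G(0)=0$ and $G(1-\lambda)=1$, integration by parts gives
\begin{equation}
\int_{0}^{1-\lambda} x\,g(x)\,\mathrm{d}x \;=\; (1-\lambda) \;-\; \int_{0}^{1-\lambda} G(x)\,\mathrm{d}x,
\end{equation}
which reduces the task to evaluating $\int_{0}^{1-\lambda} G(x)\,\mathrm{d}x$ from its closed form on each subinterval.

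Next I would split $\int_{0}^{1-\lambda} G(x)\,\mathrm{d}x = \sum_{k=1}^{n-1} \int_{\underline{s}_{k}}^{\underline{s}_{k-1}} G(x)\,\mathrm{d}x$. On $[\underline{s}_k,\underline{s}_{k-1}]$, $G(x) = (\lambda+x)^{(n-k+1)/(n-k)}\big/\prod_{j=0}^{k-1}(1-p_j)^{1/(n-k)}$, whose antiderivative is $\tfrac{n-k}{2n-2k+1}(\lambda+x)^{(2n-2k+1)/(n-k)}\big/\prod_{j=0}^{k-1}(1-p_j)^{1/(n-k)}$. The crucial observation is that both endpoints simplify dramatically using the two equivalent forms of $\underline{s}_k$ recorded in the footnote: $\lambda+\underline{s}_k = (1-p_k)^{n-k}\prod_{j=0}^{k-1}(1-p_j)$ and $\lambda+\underline{s}_{k-1} = (1-p_{k-1})^{n-k}\prod_{j=0}^{k-1}(1-p_j)$. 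Substituting, the fractional exponents cancel with the denominator and the product becomes $\prod_{j=0}^{k-1}(1-p_j)^2$, yielding
\begin{equation}
\int_{\underline{s}_{k}}^{\underline{s}_{k-1}} G(x)\,\mathrm{d}x \;=\; \frac{n-k}{2n-2k+1}\,A_k^{2}\Bigl[(1-p_{k-1})^{2n-2k+1} - (1-p_{k})^{2n-2k+1}\Bigr],
\end{equation}
writing $A_k = \prod_{j=0}^{k-1}(1-p_j) = \prod_{j=1}^{k-1}(1-p_j)$ (since $p_0 = 0$).

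The third step is the main combinatorial obstacle: reindexing to collapse the sum. I would shift $k \mapsto k+1$ in the $(1-p_{k-1})$-piece of the sum, peel off the $k=0$ boundary term (which contributes $\tfrac{n-1}{2n-1}$ since $p_0=0$ and $A_1 = 1$), and then use the identity $A_{k+1}^2 (1-p_k)^{2n-2k-1} = A_k^2 (1-p_k)^{2n-2k+1}$ so the two sums align term-by-term. At each matched index $1\le k\le n-1$, the coefficients combine via the elementary identity
\begin{equation}
\frac{n-k-1}{2n-2k-1} - \frac{n-k}{2n-2k+1} \;=\; -\frac{1}{4(n-k)^2 - 1},
\end{equation}
which one checks by clearing denominators. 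This leaves $\int_0^{1-\lambda}G(x)\,\mathrm{d}x = \tfrac{n-1}{2n-1} - \sum_{k=1}^{n-1}\tfrac{A_k^2(1-p_k)^{2n-2k+1}}{4(n-k)^2-1}$.

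Finally, plugging back into the integration-by-parts formula, $(1-\lambda) - \tfrac{n-1}{2n-1} = \tfrac{n}{2n-1} - \lambda$, and absorbing $A_k^2(1-p_k)^{2n-2k+1} = (1-p_k)^{2n-2k-1}\prod_{j=1}^{k}(1-p_j)^2$ gives exactly the claimed expression. I expect the only delicate step to be the bookkeeping in the reindexing/telescoping; the boundary contributions at $k=0$ (providing the $\tfrac{n}{2n-1}$) and at $k=n-1$ (absorbed cleanly because $4(n-k)^2-1=3$) are easy to misplace, so I would verify both endpoints explicitly before asserting the closed form.
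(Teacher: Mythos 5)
Your proof is correct and reaches the paper's formula, but it takes a different computational route. The paper integrates $x\,g(x)$ directly on each subinterval $[\underline{s}_k,\underline{s}_{k-1}]$, which produces the bulkier antiderivative $\prod_{j=0}^{k-1}(1-p_j)^{-1/(n-k)}(\lambda+x)^{(n-k+1)/(n-k)}\bigl((\lambda+x)\tfrac{n-k+1}{2n-2k+1}-\lambda\bigr)$ and then performs the same endpoint substitution, reindexing, and telescoping that you carry out. Your integration-by-parts reduction to $\,(1-\lambda)-\int_0^{1-\lambda}G(x)\,\mathrm{d}x\,$ removes both the factor of $x$ and the trailing $-\lambda$ term from each piecewise antiderivative, so the bookkeeping in the telescoping step is genuinely lighter; the shared core of both arguments is the pair of endpoint identities $\lambda+\underline{s}_k=(1-p_k)^{n-k}\prod_{j=0}^{k-1}(1-p_j)$ and $\lambda+\underline{s}_{k-1}=(1-p_{k-1})^{n-k}\prod_{j=0}^{k-1}(1-p_j)$ together with the coefficient identity $\tfrac{n-k-1}{2n-2k-1}-\tfrac{n-k}{2n-2k+1}=-\tfrac{1}{4(n-k)^2-1}$, which you verify correctly, as are the two boundary contributions. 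One small factual slip: $G(0)$ is not $0$ but $\lambda(1-p_{n-1})$, since $G$ has an atom at $0$ coming from bidder $n$'s atom and the event that all others fail to participate; this does not damage your argument because the boundary term in the integration by parts is $0\cdot G(0)=0$ regardless, and the atom contributes nothing to the expectation, but the justification should read ``the boundary term at $0$ vanishes because $x=0$ there'' rather than ``$G(0)=0$.''
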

From Theorem~\ref{thm:maxProfit} we can see that the max-profit auctioneer would prefer to minimize $\lambda$, have two
reliable bidders ($p_{n}=p_{n-1}=1$), and the other $n-2$ bidders
as unreliable as possible. The proof 
of Theorem~\ref{thm:maxProfit} can be found at the
appendix.

\begin{example}\label{exm:exm}	
	Consider how four bidders interact. Our bidders have participation probability of
	$p_{1}=\frac{1}{3}$, $p_{2}=\frac{1}{2}$, $p_{3}=\frac{3}{4}$
	and $p_{4}=1$. Let us look at each bidder's equilibrium CDFs:	
	\begin{equation}
	\begin{array}{cl}
	F_{1}(x)= & \begin{cases}
	1 & x\geq\frac{11}{12}\\
	3\left(\frac{1}{12}+x\right)^{\frac{1}{3}}-2\quad\enskip & x\in\left[\frac{23}{108},\frac{11}{12}\right)\\
	0 & x<\frac{23}{108}
	\end{cases}\\
	\\
	F_{2}(x)= & \begin{cases}
	1 & x\geq\frac{11}{12}\\
	2\left(\frac{1}{12}+x\right)^{\frac{1}{3}}-1 & x\in\left[\frac{23}{108},\frac{11}{12}\right)\\
	2\left(\frac{3\left(\frac{1}{12}+x\right)}{2}\right)^{\frac{1}{2}}-1\; & x\in\left[\frac{1}{12},\frac{23}{108}\right)\\
	0 & x<\frac{1}{12}
	\end{cases}\\
	\\
	F_{3}(x)= & \begin{cases}
	1 & x\geq\frac{11}{12}\\
	\frac{4}{3}\left(\frac{1}{12}+x\right)^{\frac{1}{3}}-\frac{1}{3} & x\in\left[\frac{23}{108},\frac{11}{12}\right)\\
	\frac{4}{3}\left(\frac{3\left(\frac{1}{12}+x\right)}{2}\right)^{\frac{1}{2}}-\frac{1}{3} & x\in\left[\frac{1}{12},\frac{23}{108}\right)\\
	4\left(\frac{1}{12}+x\right)-\frac{1}{3} & x\in\left[0,\frac{1}{12}\right)\\
	0 & x<0
	\end{cases}\\
	\\
	F_{4}(x)= & \begin{cases}
	1 & x\geq\frac{11}{12}\\
	\left(\frac{1}{12}+x\right)^{\frac{1}{3}} & x\in\left[\frac{23}{108},\frac{11}{12}\right)\\
	\left(\frac{3\left(\frac{1}{12}+x\right)}{2}\right)^{\frac{1}{2}}\qquad\enskip & x\in\left[\frac{1}{12},\frac{23}{108}\right)\\
	3\left(\frac{1}{12}+x\right) & x\in\left(0,\frac{1}{12}\right)\\
	\frac{1}{4} & x=0\\
	0 & x<0
	\end{cases}
	\end{array}
	\end{equation}

A graphical illustration of the bidders' CDFs and PDFs can be found in the appendix. 
The expected utility for bidder 1 is $0.027$, for expected bid of $0.518$; for bidder 2, $0.041$ for expected bid of $0.394$; for bidder 3, $0.0625$ for expected bid of $0.277$; and for the last bidder, $0.083$ for expected bid of $0.207$.

A sum-profit auctioneer will see an expected profit of $0.0784$, while a max-profit one will get, in expectation, $0.490$.

As a comparison, in the case where we do not allow failures, the CDF of the bidders is $x^\frac{1}{3}$ with expected bid of $\frac{1}{4}$ and expected utility of $0$. The expected profit of the sum-profit auctioneer is $1$, while the expected profit of the max-profit auctioneer is $\frac{4}{7}$. 
\end{example}

\section{False Identity and Sabotage}\label{sec:sbtg}
\begin{center}	
	\begin{algorithm}
		\caption{Optimal Bid}\label{alg:opt}
		\renewcommand{\algorithmicrequire}{\textbf{Input:}}
		\renewcommand{\algorithmicensure}{\textbf{Output:}}	
		\begin{algorithmic}[1]
			\Require $i$, $r$, $p_{1},\dots,p_{n}$ and $p'_{r}$
			\Ensure The optimal bid for bidder $i$
			
			\State let $\lambda=\prod_{j=1}^{n-1}\left(1-p_{j}\right)$
			\For {$k=1, \dots ,\min\left\{ i,r\right\} $}
			\If {$\frac{1}{n-k}\in\left[p_{k-1},p_{k}\right]$}
			\State let $x_{i,k}=\left(1-\frac{1}{n-k}\right)^{n-k}\prod_{j=1}^{k-1}\left(1-p_{j}\right)-\lambda$
			\State let $\pi_{i,k}=\frac{1}{n-k}\left(1-\frac{1}{n-k}\right)^{n-k-1}\prod_{j=1}^{k-1}\left(1-p_{j}\right)\frac{p_{r}-p'_{r}}{p_{r}}+\lambda$
			\ElsIf {$\frac{1}{n-k} < p_{k-1}$}
			\State let $x_{i,k}=\underline{s}_{k-1}$
			\State let $\pi_{i,k}=p_{k-1}\left(1-p_{k-1}\right)^{n-k-1}\prod_{j=1}^{k-1}\left(1-p_{j}\right)\frac{p_{r}-p'_{r}}{p_{r}}+\lambda$
			\ElsIf {$\frac{1}{n-k}>p_{k}$}
			\State let $x_{i,k}=\underline{s}_{k}$
			\State let $\pi_{i,k}=p_{k}\left(1-p_{k}\right)^{n-k-1}\prod_{j=1}^{k-1}\left(1-p_{j}\right)\frac{p_{r}-p'_{r}}{p_{r}}+\lambda$		
			\EndIf
			\EndFor
			\State let $j\in\arg\max_{k}\pi_{i,k}$ \\
			\Return $x_{i,k}$
		\end{algorithmic}
	\end{algorithm}
\end{center}

Now, suppose our bidder can influence others' perceptions, and create a false sense of its participation probability.
What would its best strategy be, and how should the participation probability be altered? Any bid beyond $1 - \lambda$ is sure to win, but as that would give profit of less than $\lambda$, which is less than the expected profit for non-manipulators, it is not worthwhile. Therefore, our bidder will bid in its support, with the expected profit being $\lambda$.
However, 
 our bidder may increase its expected profit
 by trying to portray its participation probability as being as low as possible, thus lulling the other bidders with a false sense of security. Of course, this reduces the payment to auctioneers of any type, and therefore, they would try to expose such manipulation.

More interesting is the possibility of a player's changing another player's participation probability by using sabotage; thus our bidder would be the only bidder knowing the real participation probability. Our bidder, $i$, sabotages bidder $r$, with a perceived participation probability of $p_{r}$, changing its real participation probability to $p'_{r}$. Bidder $i$'s expected profit with bid $x$ is:
\begin{equation}
\pi_{i}(x)=\left(p'_{r}F_{r}(x)+1-p'_{r}\right)\prod_{j=1;j\neq i,r}^{n}\left(p_{j}F_{j}(x)+1-p_{j}\right).
\end{equation}
The values of this function change according to the relation between $r$, $i$ and $x$. To find the optimal strategy for a bidder, we must examine all the options.

\begin{thm}
	Let $p_1,\dots, p_n$ be the announced participation probabilities, and let $p'_r < p_r$ be bidder $r$ real participation probability. 
	For every $i \neq r$, Algorithm~\ref{alg:opt} finds the optimal bid for bidder $i$.
	\label{thm:sbg}	
\end{thm}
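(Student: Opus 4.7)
The proof is a piecewise one-variable optimization. Start from the identity
\begin{equation*}
\pi_i^{\mathrm{sab}}(x) \;=\; \pi_i^{\mathrm{orig}}(x) \;+\; (p_r - p'_r)\,(1 - F_r(x))\,A(x), \qquad A(x) := \prod_{j\neq i,r}(p_j F_j(x) + 1 - p_j),
\end{equation*}
obtained by subtracting the equilibrium payoff $\pi_i^{\mathrm{orig}}(x) = \prod_{j\neq i}(p_j F_j(x)+1-p_j) - x$ from the sabotage payoff and collecting the factor $(p_r-p'_r)(1-F_r)$ from $(p'_r F_r + 1 - p'_r) - (p_r F_r + 1 - p_r)$. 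By Theorem~\ref{thm:symEquil}, $\pi_i^{\mathrm{orig}}(x)=\lambda$ on $[\underline{s}_i,\underline{s}_0]$ and $\pi_i^{\mathrm{orig}}(x)<\lambda$ strictly below, so I would locate the maximizer of $\pi_i^{\mathrm{sab}}$ piecewise over the intervals $[\underline{s}_k,\underline{s}_{k-1})$ and check that it coincides with Algorithm~\ref{alg:opt}.

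The main calculation is on the regime $k \leq \min(i,r)$ that the loop iterates over. Here both bidders $i$ and $r$ are in support, so $p_i F_i + 1 - p_i = p_r F_r + 1 - p_r = H_k(x)$ and the $n$-fold product telescopes to $H_k(x)(\lambda+x)$; consequently $A(x) = (\lambda+x)/H_k(x)$ and $1-F_r(x) = (1-H_k(x))/p_r$. Substituting $y = H_k(x) \in [1-p_k,\,1-p_{k-1}]$ together with $\lambda+x = \prod_{j=1}^{k-1}(1-p_j)\, y^{n-k}$ leaves
\begin{equation*}
\pi_i^{\mathrm{sab}}(x) - \lambda \;=\; \frac{p_r - p'_r}{p_r}\,\Bigl(\textstyle\prod_{j=1}^{k-1}(1-p_j)\Bigr)\, y^{n-k-1}(1-y).
\end{equation*}
Differentiating $y^{n-k-1}(1-y)$ gives the unique unconstrained optimum $y^\star = 1 - \tfrac{1}{n-k}$, and the three branches of the algorithm are precisely whether $y^\star$ lies inside, above, or below $[1-p_k,\,1-p_{k-1}]$, i.e.\ whether $\tfrac{1}{n-k} \in [p_{k-1},p_k]$, $\tfrac{1}{n-k} < p_{k-1}$, or $\tfrac{1}{n-k} > p_k$. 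Back-substituting the chosen $y$ into $\lambda + x = \prod_{j=1}^{k-1}(1-p_j)\, y^{n-k}$ and into the displayed increment reproduces the algorithm's $x_{i,k}$ and $\pi_{i,k}$ line for line.

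Closing the argument requires ruling out optima outside $k \in \{1,\ldots,\min(i,r)\}$. If $r < k \leq i$ (bidder $r$ below support), $F_r \equiv 0$ collapses the increment to the \emph{$k$-independent} linear function $(p_r - p'_r)(\lambda+x)/(1-p_r)$, strictly increasing in $x$; its supremum over $[\underline{s}_i,\underline{s}_r)$ is therefore attained as $x\to\underline{s}_r^-$, and by continuity this value equals the Case-A formula at $k=r$ evaluated at the left endpoint $y = 1-p_r$, which the algorithm's $k=r$ iteration meets or exceeds in all three branches. For $k > i$ (bidder $i$ below support), $\pi_i^{\mathrm{orig}}(x) < \lambda$ strictly; splitting further on whether $r \geq k$ or $r < k$, one writes $\pi_i^{\mathrm{sab}}$ as an explicit polynomial in $y = H_k(x)$ and verifies, by tracking the sign of its derivative, that its maximum on the piece is bounded above by $\lim_{x\uparrow\underline{s}_i}\pi_i^{\mathrm{sab}}(x)$, so continuity hands the supremum to the $k=i$ iteration.

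The principal obstacle is precisely this final step. When bidder $i$ is inactive the increment is no longer the tidy $y^{n-k-1}(1-y)$ but a polynomial in $y$ with coefficients mixing $p_i,p_r,p'_r$ of indefinite sign, forcing a piecewise derivative-bound argument rather than the clean critical-point extraction used in the main case. Modulo this bookkeeping the theorem reduces to the one-variable calculus in the $k \leq \min(i,r)$ regime, whose three branches are exactly what Algorithm~\ref{alg:opt} returns.
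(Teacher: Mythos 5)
Your proposal is correct and follows essentially the same route as the paper: the paper's Lemmata~\ref{lem:sup1} and~\ref{lem:sup2} are your two in-support regimes (your substitution $y=H_{k}(x)$, which turns the objective into $\frac{p_{r}-p'_{r}}{p_{r}}\prod_{j=1}^{k-1}(1-p_{j})\,y^{n-k-1}(1-y)$ on $[1-p_{k},1-p_{k-1}]$, is an algebraically cleaner way of locating the same critical point $\frac{1}{n-k}\in[p_{k-1},p_{k}]$ and the same three branches of Algorithm~\ref{alg:opt}), and the step you defer as ``bookkeeping'' is exactly the paper's Lemma~\ref{lem:outSupport}. There the paper does what you describe: on each piece below $\underline{s}_{i}$ the payoff is convex in $x$ (nonnegative second derivative), so the maximum sits at an endpoint, and a chain of product inequalities bounds those endpoint values by $\pi_{i}(\underline{s}_{i})$ or $\pi_{i}(\underline{s}_{r})$ --- so your plan closes as claimed.
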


To summarize, bidder $i$ best interest is to bid in the intersection of its support and bidder $r$'s support. 
Given the index of the saboteur bidder, the index of
the sabotaged bidder, $p_{1},\dots,p_{n}$ and the participation probability
after the sabotage, Algorithm~\ref{alg:opt} finds the optimal bid. The full proof can be found at the appendix.

\section{Uniform Failure Probabilities}\label{sec:uni}	
	If we allow our bidders to have the same probability of failure (e.g.,
	when failures stem from weather conditions), many of the calculations
	become more tractable, and we are able to further understand the scenario.
	\subsection{Bids}
	
	As this case is a particular instance of the general case presented
	above, we can calculate the expected equilibrium bid of every bidder and its variance.
	\begin{thm}			
			The expected equilibrium bid of every bidder is:
			\begin{equation}
			\mathbb{E}\left[bid\right] = \frac{1}{n p}\left(1-\lambda\left(1+p\left(n-1\right)\right)\right)
		\end{equation}
		and the variance of the bid is:
			\begin{equation}
			\mathrm{Var}\left[bid\right]= \frac{1-(1-p)^{2n-1}}{(2n-1)p}-\frac{\left(1-(1-p)^{n}\right)^{2}}{n^{2}p^{2}}.
			\end{equation}
				The expected bid and the variance are 
			neither monotonic in $n$ nor in $p$.
	\label{prop:expbid}		
	\end{thm}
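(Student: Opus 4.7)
The plan is to specialize the machinery of Section~\ref{sec:ownF} to $p_1=\dots=p_n=p$, in which case $\lambda=(1-p)^{n-1}$ and the piecewise form of Equation~(\ref{eq:cdf_i}) collapses. A direct check shows that for every $k\geq 1$, $\underline{s}_k=(1-p)^{n-k}(1-p)^{k-1}-(1-p)^{n-1}=0$, so every bidder's support reduces to the single interval $[0,1-\lambda]$, on which
\begin{equation*}
F(x)=\frac{(\lambda+x)^{1/(n-1)}+p-1}{p},\qquad f(x)=\frac{(\lambda+x)^{(2-n)/(n-1)}}{p(n-1)}.
\end{equation*}

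For the expected bid I would avoid direct integration and instead invoke Theorem~\ref{thm:sumProfit}: by symmetry $\sum_{i=1}^{n}p_i\,\mathbb{E}[bid_i]=np\,\mathbb{E}[bid]$, while the theorem evaluates this sum to $1-\lambda\bigl(1+(n-1)p\bigr)$. Dividing by $np$ produces the claimed closed form in a single line, bypassing the telescoping that would otherwise be needed if one specialized Theorem~\ref{thm:expectedBid} directly.

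For the variance I would compute $\mathbb{E}[bid^2]=\int_0^{1-\lambda}x^2 f(x)\,\mathrm{d}x$ via the substitution $u=\lambda+x$, which reduces the integrand to three elementary powers of $u$ with exponents $n/(n-1)$, $1/(n-1)$, and $(2-n)/(n-1)$. Using the identity $\lambda^{1/(n-1)}=1-p$, the lower-limit contributions collapse into a single multiple of $(1-p)^{2n-1}$, and $\mathbb{E}[bid^2]$ emerges as $\frac{1-(1-p)^{2n-1}}{(2n-1)p}$ plus correction terms in $(1-p)^{n-1}$. Writing $\mathbb{E}[bid]=A-B$ with $A=\frac{1-(1-p)^n}{np}$ and $B=(1-p)^{n-1}$, one checks that the correction terms cancel exactly against the $-2AB+B^2$ portion of $\mathbb{E}[bid]^2$, leaving $A^2$ as the subtracted quantity. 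The main obstacle is precisely this bookkeeping: the cancellation is not obvious because the lower-limit evaluation carries rational coefficients such as $2(n-1)^2/(n(2n-1))$, and one must verify algebraically that they recombine into the advertised compact form.

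For the non-monotonicity claims I would supply explicit counter-examples instead of a symbolic derivative analysis. In the case $n=3$ the expected bid reduces to $p(3-2p)/3$, which increases on $[0,3/4]$ and decreases on $[3/4,1]$, proving non-monotonicity in $p$. Fixing $p=1/2$ and evaluating at $n=2,3,4,5$ gives the rising-then-falling sequence $\tfrac{1}{4},\,\tfrac{1}{3},\,\tfrac{11}{32},\,\tfrac{13}{40}$, proving non-monotonicity in $n$. Finite evaluations of the variance formula at the same parameter choices dispose of the corresponding claims for $\mathrm{Var}[bid]$.
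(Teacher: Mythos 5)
Your proposal is correct, and it diverges from the paper's proof in two places worth noting. For the expected bid, the paper simply specializes the PDF to $f_{n,p}(x)=\frac{(\lambda+x)^{(2-n)/(n-1)}}{p(n-1)}$ on $[0,1-\lambda]$ and evaluates $\int_0^{1-\lambda}x f_{n,p}(x)\,\mathrm{d}x$ directly, whereas you derive it from Theorem~\ref{thm:sumProfit} via the symmetry $\sum_i p_i\,\mathbb{E}[bid_i]=np\,\mathbb{E}[bid]$ (legitimate here because all $F_i$ coincide and bidder $n$'s atom $1-\frac{p_{n-1}}{p_n}$ vanishes); this is a clean one-line shortcut, though note that Theorem~\ref{thm:sumProfit} is itself proved from Theorem~\ref{thm:expectedBid}, so the telescoping you bypass is merely relocated, not eliminated. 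For the variance, your route is essentially the paper's: both compute $\mathbb{E}[bid^2]=\int_0^{1-\lambda}x^2 f_{n,p}(x)\,\mathrm{d}x$ and simplify, and your decomposition $\mathbb{E}[bid]=A-B$ with $A=\frac{1-(1-p)^n}{np}$, $B=(1-p)^{n-1}$ checks out --- one can verify that $\mathbb{E}[bid^2]+2AB-B^2=\frac{1-(1-p)^{2n-1}}{(2n-1)p}$ exactly as you claim, which is precisely the cancellation hidden in the paper's last displayed line. Where your write-up actually improves on the paper is the non-monotonicity claim: the paper asserts it with no argument, while your explicit counterexamples ($\mathbb{E}[bid]=\frac{p(3-2p)}{3}$ peaking at $p=\frac34$ for $n=3$, and the sequence $\frac14,\frac13,\frac{11}{32},\frac{13}{40}$ at $p=\frac12$) are verifiably correct and constitute a genuine proof of that part of the statement.
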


	\subsection{Profits }
	
	We are now ready to examine the profits of all the parties, the bidder
	and the auctioneer, both in the sum-profit model and the max-profit
	model.

	\subsubsection{Bidder}
	
	From the general case we may deduce that 
	expected equilibrium profit of every bidder is $p\left(1-p\right)^{n-1}$. 
	Note the profit decreases as $n$ increases, and is maximized when $p=\frac{1}{n}$.
	We can now compute the variance of bidder profit.
	\begin{thm}
			The variance of the bidder equilibrium profit is:
			\begin{equation}
			\mathrm{Var}\left[BP\right]= \frac{n-1}{n\left(2n-1\right)}-\frac{\left(1-p\right)^{n}}{n}+\left(p+\frac{1}{2n-1}\right)\left(1-p\right)^{2n-1}.
			\end{equation}			
			And the variance is monotonic increasing in $p$.
	\label{prop:varbid}\end{thm}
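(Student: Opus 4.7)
The plan is to compute the second moment of bidder profit by conditioning on (a) participation and (b) the realized bid, using the equilibrium indifference condition to get a closed form, and then to substitute the expressions already available from Theorem~\ref{prop:expbid}. Concretely, if the bidder fails to participate the profit is $0$, which contributes nothing to either moment. If the bidder participates and bids $x$ in the equilibrium support, the profit equals $1-x$ with probability $q(x)$ (the probability of winning) and $-x$ with probability $1-q(x)$. Since $\pi_i(x)=q(x)-x=\lambda$ throughout the support, we have $q(x)=\lambda+x$, so
\begin{equation}
\mathbb{E}[BP^{2}\mid\text{bids }x] = (1-x)^{2}(\lambda+x) + x^{2}(1-\lambda-x) = \lambda + (1-2\lambda)x - x^{2}.
\end{equation}

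Taking expectations over $x$ and multiplying by $p$ for the participation event gives
\begin{equation}
\mathbb{E}[BP^{2}] = p\bigl(\lambda + (1-2\lambda)\mathbb{E}[bid] - \mathbb{E}[bid^{2}]\bigr),
\end{equation}
so that $\mathrm{Var}[BP] = \mathbb{E}[BP^{2}] - (p\lambda)^{2}$. I would then plug in $\mathbb{E}[bid]$ from Theorem~\ref{prop:expbid} and $\mathbb{E}[bid^{2}] = \mathrm{Var}[bid] + \mathbb{E}[bid]^{2}$ (also from Theorem~\ref{prop:expbid}, after using $\lambda=(1-p)^{n-1}$). The $\mathbb{E}[bid]^{2}$ term cancels exactly against $p^{2}\lambda^{2}$ in a few steps, which is what allows the formula to collapse; the remaining pieces rearrange into
\begin{equation}
\frac{n-1}{n(2n-1)} - \frac{(1-p)^{n}}{n} + \left(p+\frac{1}{2n-1}\right)(1-p)^{2n-1}.
\end{equation}
As sanity checks, $p=1$ recovers the no-failure variance $\frac{n-1}{n(2n-1)}$ from Table~\ref{nonFailAuctionTable}, and $p=0$ yields $0$, since then the bidder never participates.

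For monotonicity, I would differentiate the closed form directly. Writing $a=(1-p)^{n-1}$, the derivative simplifies (after factoring) to
\begin{equation}
\frac{d}{dp}\mathrm{Var}[BP] = (1-p)^{n-1}\bigl(1 - 2np(1-p)^{n-1}\bigr).
\end{equation}
The function $g(p)=p(1-p)^{n-1}$ is maximized at $p=1/n$, where $g(1/n)=(1-1/n)^{n-1}/n$, so $2np(1-p)^{n-1} \le 2(1-1/n)^{n-1}$. Since $(1-1/n)^{n-1}$ increases to $1/e$ and equals $1/2$ at $n=2$, this bound is at most $1$ for all $n\ge 2$, giving non-negativity of the derivative.

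The main obstacle is entirely algebraic: the substitution of $\mathbb{E}[bid]$ and $\mathrm{Var}[bid]$ produces several terms in $(1-p)^{n}$, $(1-p)^{2n-1}$, $(1-p)^{2(n-1)}$, and $1/(np)$, and one must verify that all of the fractional and $(1-p)^{2(n-1)}$ terms cancel to leave only the three listed summands. A clean way to keep track of this is to group the contribution of $\mathbb{E}[bid]^{2}$ with $(p\lambda)^{2}$ first (since both share the same awkward $1/(np)^{2}$ denominator) before handling the $\mathrm{Var}[bid]$ term, which is what forces the $1/(2n-1)$ coefficient in the final expression.
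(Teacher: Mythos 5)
Your proposal is correct and reaches the paper's formula, but it computes the second moment by a somewhat different decomposition. The paper derives (implicitly) the density $g_{BP}$ of the profit random variable itself and integrates $z^{2}g_{BP}(z)\,\mathrm{d}z$ over the two profit ranges $[\lambda-1,0]$ (losing) and $[\lambda,1]$ (winning), doing the antiderivatives by hand; you instead condition on the bid $x$, use the equilibrium indifference identity $\Pr(\text{win}\mid x)=\lambda+x$ to get $\mathbb{E}[BP^{2}\mid x]=\lambda+(1-2\lambda)x-x^{2}$, and reduce everything to the already-computed $\mathbb{E}[bid]$ and $\mathbb{E}[bid^{2}]$ from Theorem~\ref{prop:expbid}. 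I checked that $p\bigl(\lambda+(1-2\lambda)\mathbb{E}[bid]-\mathbb{E}[bid^{2}]\bigr)-(p\lambda)^{2}$ does collapse to $\frac{n-1}{n(2n-1)}-\frac{(1-p)^{n}}{n}+\bigl(p+\frac{1}{2n-1}\bigr)(1-p)^{2n-1}$, so the plan is sound and arguably cleaner, since it avoids writing down the profit density and reuses work. The monotonicity argument is identical to the paper's: the derivative factors as $(1-p)^{n-1}\bigl(1-2np(1-p)^{n-1}\bigr)$, and $p(1-p)^{n-1}\le\frac{1}{n}(1-\frac{1}{n})^{n-1}\le\frac{1}{2n}$. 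Two minor slips, neither fatal: $(1-1/n)^{n-1}$ \emph{decreases} from $1/2$ at $n=2$ toward $1/e$, it does not increase (the needed bound $\le 1/2$ still holds, and indeed requires the decreasing direction); and the parenthetical claim that $\mathbb{E}[bid]^{2}$ cancels ``exactly'' against $p^{2}\lambda^{2}$ is not literally true --- note $\mathbb{E}[bid]=\frac{1}{np}\bigl(1-\lambda(1+p(n-1))\bigr)\neq\frac{\lambda}{1}\cdot p$ in general --- but this is only a bookkeeping remark and the full substitution still works out.
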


	\subsubsection{Auctioneer - Sum-Profit Model}
	
	The expected bid of every bidder is $\frac{1}{pn}\left(1-\left(1-p\right)^{n-1}\left(1+p\left(n-1\right)\right)\right)$,
	therefore, the expected profit of the sum-profit auctioneer, in the equilibrium, is: 
	\begin{equation}
	\begin{array}{rl}
	\mathbb{E}\left[AP\right]= & np\cdot\mathbb{E}\left[bid\right]\\
	= & 1-\left(1-p\right)^{n-1}\left(1+p\left(n-1\right)\right)
	\end{array}
	\end{equation}
	which increases with $p$ and $n$. Therefore, the auctioneer best interest is to have as many bidders as possible.
	Note that as $n$ grows, the auctioneer's expected revenue approaches
	that of the no-failure case.
	From Theorem~\ref{prop:varbid} we get the variance of the auctioneer equilibrium profit in the sum profit model:
		\begin{equation}
		\begin{array}{rl}
		\mathrm{Var}\left[AP\right]= & n p^{2}\mathrm{Var}\left[bid\right]\\
		= & \frac{n\cdot p\left(1-(1-p)^{2n-1}\right)}{2n-1}-\frac{\left(1-(1-p)^{n}\right)^{2}}{n}.
		\end{array}
		\end{equation}

	\subsubsection{Auctioneer - Max-Profit Model}
	
	For the max-profit auctioneer, the expected profit in equilibrium is:
	\begin{equation}
	\frac{n}{2n-1}+\frac{n-1}{2n-1}\left(1-p\right)^{2n-1}-\left(1-p\right)^{n-1}
	\end{equation}
	which is, monotonically increasing in $p$ and $n$ (for $n \geq 1$);
	for large enough $n$ it approaches the expected revenue in the no-failure
	case. 
		\begin{thm}
		The variance of the auctioneer equilibrium profit in the max profit model is:
			\begin{equation}
			\begin{array}{rl}
			\mathrm{Var}\left[AP\right]= & \left(1-p\right)^{2n-2}-\frac{2n\left(1-p\right)^{n-1}}{2n-1}+\frac{n}{3n-2}-\frac{2\left(n-1\right)^{2}\left(1-p\right)^{3n-2}}{\left(3n-2\right)\left(2n-1\right)}\\
			& -\left(\frac{n}{2n-1}+\frac{n-1}{2n-1}\left(1-p\right)^{2n-1}-\left(1-p\right)^{n-1}\right)^{2}.
			\end{array}
			\end{equation}\label{prop:varmax}
		\end{thm}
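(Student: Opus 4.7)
The plan is to compute the second moment $\mathbb{E}[AP^2]$ from the max-bid CDF $G$ and then subtract $(\mathbb{E}[AP])^2$ using the closed form displayed just above the theorem. Because the statement leaves $(\mathbb{E}[AP])^2$ un-expanded, the claimed variance will fall out directly once $\mathbb{E}[AP^2]$ is in hand.

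The first step is to specialize the general piecewise formula for $G$ to the uniform case. With $p_1 = \cdots = p_n = p$ we have $\lambda = (1-p)^{n-1}$, and a direct substitution in the definition of $\underline{s}_k$ gives $\underline{s}_k = 0$ for every $k \geq 1$ and $\underline{s}_0 = 1-\lambda$. Every piece of the piecewise $G$ below the top therefore has empty support, leaving
\begin{equation}
G(x) = (\lambda + x)^{n/(n-1)}, \qquad x \in [0, 1-\lambda],
\end{equation}
with an atom of mass $(1-p)^n$ at $x=0$ corresponding to total non-participation. Equivalently, by independence $G(x) = (1-p+pF(x))^n$, and the equilibrium condition from Theorem~\ref{thm:symEquil} gives $(1-p+pF(x))^{n-1} = \lambda + x$ on the support, reproducing the same expression.

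Differentiating yields $g(x) = \tfrac{n}{n-1}(\lambda+x)^{1/(n-1)}$ on the open support, and the atom at $0$ contributes $0$ to the second moment. Setting $u = \lambda + x$ in $\mathbb{E}[AP^2] = \int_0^{1-\lambda} x^2 g(x)\,dx$ and expanding $x^2 = u^2 - 2\lambda u + \lambda^2$ breaks the integrand into three monomials in $u$ with exponents $(2n-1)/(n-1)$, $n/(n-1)$, and $1/(n-1)$, each integrable on $[\lambda, 1]$ by the power rule. Substituting back $\lambda = (1-p)^{n-1}$ produces a constant term $\tfrac{n}{3n-2}$, a $(1-p)^{n-1}$ term with coefficient $-\tfrac{2n}{2n-1}$, and a $(1-p)^{2n-2}$ term with coefficient exactly $1$.

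The main obstacle is bookkeeping the $(1-p)^{3n-2}$ coefficient, which combines three contributions and simplifies via the identity
\begin{equation}
-\frac{n}{3n-2} + \frac{2n}{2n-1} - 1 = -\frac{2(n-1)^2}{(3n-2)(2n-1)},
\end{equation}
itself a consequence of $-n(2n-1) + (3n-2) = -2(n-1)^2$. This is the one step where a sign or factorization slip would produce a messier-looking answer. Once it is handled, $\mathbb{E}[AP^2]$ matches the first four terms of the stated formula, and subtracting the previously computed $(\mathbb{E}[AP])^2$ yields the variance in exactly the form claimed.
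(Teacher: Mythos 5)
Your proposal is correct and follows essentially the same route as the paper: compute $\mathbb{E}\left[AP^{2}\right]=\int_{0}^{1-\lambda}x^{2}\,\frac{n}{n-1}\left(\lambda+x\right)^{\frac{1}{n-1}}\mathrm{d}x$ and subtract the square of the already-known mean, with your substitution $u=\lambda+x$ being just a repackaging of the paper's direct antiderivative in powers of $\lambda+x$. Your specialization of $G$, the observation that the atom $(1-p)^{n}$ at $0$ contributes nothing, and the coefficient identity $-n(2n-1)+(3n-2)=-2(n-1)^{2}$ for the $(1-p)^{3n-2}$ term all check out.
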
	
	
	\section{Conclusion and Discussion }
	
	Bidders failing to participate in auctions happen commonly, as people
	choose to apply to one job but not another, or to participate in the
	Netflix challenge but not a similar challenge offered by a competitor.
	Examining these scenarios enables us to understand certain fundamental
	issues in all-pay auctions. In the complete reliability, classic versions,
	each bidder has an expected revenue of $0$.
	In contrast, in a limited reliability
	scenario, such as the one we dealt with, bidders have positive expected
	revenue, and are incentivized to participate in the auction.
	Auctioneers, on the other hand, mostly lose their strong control of
	the auction, and no longer pocket almost all revenues involved in
	the auction. However, by influencing participation probabilities,
	max-profit auctioneers can effectively increase their revenue in comparison
	to the no-failure model.

The idea of the equilibrium we explored was that frequent participants could allow themselves to bid lower, as there would be
plenty of contests where they would be one of the few participants,
and hence win with smaller bids. Infrequent bidders, on the other
hand, would wish to maximize the few times they participate,
and therefore bid fairly high bids. As exists in the no-failure case as well,
as more and more participants join, there is a concentration of bids
at lower price points, as bidders are more afraid of the fierce competition.
Hence, it is fairly easy to see in all of our results that as $n$
approached larger numbers, the various variables were closer and closer
to their no-failure brethren.

There is still much left to explore in these models --- not only more
techniques of manipulation by bidders and potential incentives by
auctioneers, but also further enrichment of the model. Currently,
participation rates are not influenced by other bidders' probability
of participation, but, obviously, many scenarios in real-life have,
effectively, a feedback loop in this regard.
We assumed that the item is commonly valued by all the bidders and the cost of
effort is common, which is not always the case.
A future research could examine a more realistic model with heterogeneous costs or valuations.
In our model the failure happened before the bidder placed their bid,
but in other models
the failure could happen after the bidders place their bid and before the auctioneer collected the bid. 
Finding a suitable model
for such interactions, while an ambitious goal, might help us gain
even further insight into these types of interactions.

\bibliographystyle{plainnat}
\bibliography{main}

\newpage{}
\appendix

\subsection*{Proof of Theorem \ref{thm:positiveProfit} \label{sec:app1}}
Before we prove Theorem \ref{thm:positiveProfit}, we begin with some
notations; let $0<p_{1}\leq\dots\leq p_{n} \leq1$ be the participation probability of the bidders. we assume
that $p_{n-1}<1$, that is, at most one bidder surely participates
in the auction. Let $F_{1},\dots F_{n}$, be a Nash equilibrium, such
that $F_{i}$ denotes bidder $i$ bid distribution, where $F_{i}\left(x\right)=\Pr\left[bid_{i}\leq x\right]$,
note that $F_{i}$ is a right-continuous function, and $F_{i}$ has
a left-discontinuous point at $x$ if and only if the bidder has an atomic point at
$x$.

First, we should note that there indeed exists an equilibrium, as 
the strategy profile 
$F_1,\dots,F_n$ defined in Equations~(\ref{eq:cdf_i}) and~(\ref{eq:cdf_n}) is an equilibrium (see Theorem~\ref{thm:symEquil}).

All the bidders value the item at $1$, hence every bidder has no
incentive to bid more than $1$, bids less the $0$ are ruled out,
that is, the bids are drawn from $\left[0,1\right].$ 

For every $i$ let 
\begin{equation}
\alpha_{i}\left(x\right)=F_{i}\left(x\right)-\lim_{y\rightarrow x^{-}}F_{i}\left(y\right),
\end{equation}
that is, $\alpha_{i}\left(x\right)>0$ if and only if bidder $i$ has an atomic
point at $x$.

For every $i$ and $x$ denote by $\Pr\left(i\,\text{wins}\mid x\right)$
the probability that bidder $i$ wins the item, when they might share
the item with other bidders,\footnote{
For example, if when bidding $x$, bidder $i$ has a probability of
$p_{1}$ to be the only winner of the item, probability of $p_{2}$
to share the item with one other bidder and probability of $0$ to
share the item with other 2 or more bidders, then $\Pr\left(i\, \text{wins} \mid x\right)=p_{1}+\frac{1}{2}p_{2}$.}
 if all the other bidders are bidding according to $F_{-i}$;
 and denote by $\pi_i\left(x\right)$ the expected profit of bidder $i$ from bidding $x$.
 That is, $\Pr\left(i\,\text{wins}\mid x\right)=\pi_{i}\left(x\right)+x$.

Let $\underline{s}_{i}=\inf_{x}\left\{x\in \left[0,1\right]: F_{i}\left(x\right)>F_{i}\left(0\right)\right\} $
and $\overline{s}_{i}=\inf_{x}\left\{x \in \left[0,1\right]: F_{i}\left(x\right)=1\right\} $,
the lower and upper bound of player $i$'s equilibrium bid distribution,
excluding the atomic point at $0$ (if exists), respectively. Let $\pi_{i}^{*}$
be player $i$'s equilibrium profit, if they have not failed. We said
that $F_{i}$ is strictly increasing at $x>0$, if for every $\varepsilon>0$:
$F_{i}\left(x\right)>F_{i}\left(x-\varepsilon\right)$.
If $F_{i}$ is strictly increasing at $x$, then $x$ is in bidder $i$'s support.

The following lemmata characterize any Nash equilibrium, $F_1,\dots,F_n$ when $p_{n-1}<1$,
and help us to show that the equilibrium is indeed unique.
\begin{lem}
\label{lem:atom}For every $i$ and for every $x>0$: $\alpha_{i}\left(x\right)=0$.\end{lem}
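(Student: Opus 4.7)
The plan is a proof by contradiction that mirrors the classical ``no atom at positive bids'' argument for full-information all-pay auctions, with the twist that bidder $i$'s atom at $x$ enters the payoff of any other bidder scaled by the participation probability, i.e.\ as $p_i\alpha_i(x)$, and our standing assumption $p_1>0$ ensures this weight is nonzero for every bidder. Suppose for contradiction that $\alpha_i(x)>0$ for some bidder $i$ and some $x>0$; I aim to construct a strictly profitable deviation for bidder $i$ by moving that atom slightly to the left.

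The preparatory step is to show there exists $\varepsilon_0>0$ such that for every $j\neq i$ the CDF $F_j$ is constant on $(x-\varepsilon_0, x]$, and in particular no other bidder has an atom at $x$. The idea is a one-sided marginal comparison: for any $y\in(x-\varepsilon_0, x]$, bidding $x+\varepsilon_0$ rather than $y$ gains at least $\tfrac{1}{2}p_i\alpha_i(x)$ in winning probability (by \emph{beating} $i$'s atom rather than, at best, sharing it), while the extra payment is at most $2\varepsilon_0$. Choosing $\varepsilon_0<\tfrac{1}{4}p_i\alpha_i(x)$ makes $x+\varepsilon_0$ strictly better than any such $y$, which by the best-response property excludes $y$ from the support of $F_j$; a small perturbation lets me pick $\varepsilon_0$ so that $x+\varepsilon_0$ itself misses the at most countably many atoms that the other $F_j$'s may have.

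The contradiction then follows almost immediately. Fix $0<\delta<\varepsilon_0$. By the empty-gap property just established, $\prod_{j\neq i}(p_jF_j(x-\delta)+1-p_j)=\prod_{j\neq i}(p_jF_j(x)+1-p_j)$, so bidder $i$'s winning probability is the same whether they bid $x$ or $x-\delta$. Hence replacing the atom of $F_i$ at $x$ by an atom at $x-\delta$ leaves expected winnings unchanged but reduces expected payment by $\alpha_i(x)\cdot\delta>0$; this strictly increases bidder $i$'s expected utility, contradicting that $F_i$ is a best response. The main technical care is concentrated in the preparatory step: one must uniformly choose a single $\varepsilon_0$ that works for all $j\neq i$ (easy, since $p_i\alpha_i(x)$ does not depend on $j$) and, as a small subcase, rule out a second atom at $x$ by the same shifting argument applied to $j$ instead of $i$.
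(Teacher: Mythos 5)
Your overall architecture is sound and is essentially a reorganization of the paper's own argument: the paper splits into three cases according to whether some other bidder has an atom at $x$, has $x$ in its support without an atom, or has no mass near $x$, deriving the contradiction from a deviation by that other bidder $j$ in the first two cases and by bidder $i$ itself in the third; you instead show that in equilibrium the first two situations cannot occur (your preparatory step) and then always conclude by letting bidder $i$ slide its atom down to $x-\delta$, which is exactly the paper's third case. Your final step is correct as stated: once every $F_j$, $j\neq i$, is constant on $(x-\varepsilon_0,x]$ and atomless at $x$, the winning probability at $x-\delta$ equals that at $x$ and the payment strictly drops.

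The one step that does not hold as written is the quantitative claim in the preparatory step that bidding $x+\varepsilon_0$ rather than $y\in(x-\varepsilon_0,x]$ gains at least $\tfrac{1}{2}p_i\alpha_i(x)$ in winning probability. The gain from clearing bidder $i$'s atom is $p_i\alpha_i(x)$ (or half of it, at a tie) \emph{multiplied by} the probability that every remaining bidder $k\neq i,j$ is also outbid, i.e.\ by a factor of the form $\prod_{k\neq i,j}\left(p_kF_k(\cdot)+1-p_k\right)$, which can be arbitrarily small, and is exactly $0$ if some $k\neq i,j$ has $p_k=1$ and places no mass below $x+\varepsilon_0$ (not excluded here, since only $p_{n-1}<1$ is assumed). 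So the uniform choice $\varepsilon_0<\tfrac{1}{4}p_i\alpha_i(x)$ does not by itself make $x+\varepsilon_0$ strictly better than every such $y$. The repair is the observation the paper itself invokes: since bidding $0$ guarantees nonnegative profit, any $y>0$ that bidder $j$ actually plays must satisfy $\Pr\left(j\,\text{wins}\mid y\right)\geq y$, which bounds the offending product below by $x-\varepsilon_0>0$; and if that product vanishes, $y$ is excluded from $j$'s support for the trivial reason that it yields strictly negative profit. With $\varepsilon_0$ re-chosen to depend on $x$ as well as on $p_i\alpha_i(x)$, the rest of your proof goes through.
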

\begin{proof}
Falsely assume there is a bidder, say $i$, with $\alpha_{i}\left(x\right)>0$
for $x>0$.
We shall consider three cases:
\begin{itemize}
	\item There is other bidder, say $j$, 	
	 such that $\alpha_j\left(x\right) > 0$. 
 	Bidder $j$ can bid $0$, with an expected
 	profit of at least $0$. Since $x>0$, it must hold that $\Pr\left(j\,\text{wins}\mid x\right)>0$. 
 	In addition, $x<1$ as $\Pr\left(j\,\text{wins}\mid x\right)<1$.
 	$\Pr\left(j\,\text{wins}\mid \cdot \right)$
 	 has an upward jump at $x$ (since a tie with $i$ is no longer possible), therefore there is 
 	 a sufficiently small $\varepsilon$ such that 
 	 $\Pr\left(j\,\text{wins}\mid x+\varepsilon\right)-\Pr\left(j\,\text{wins}\mid x\right) > \varepsilon$.
 	 Now,
 	\begin{equation}
 	\pi_j(x) = \Pr\left(j \text{wins}\mid x \right) - x < \Pr\left(j\,\text{wins}\mid x + \varepsilon \right) - \left(x + \varepsilon\right) = \pi_j(x + \varepsilon),
 	\end{equation}
 and thus
	 bidder $j$ has an incentive to raise $x$ and place an atom at $ x + \varepsilon$. 
	\item 
	 There is no other bidder with a mass at $x$, but there is other bidder, 
	 say $j$, 	 such that $x$ is in bidder $j$'s support. 
	 As 	$\Pr\left(j\,\text{wins}\mid \cdot \right)$
	 has an upward jump at $x$,
	it pays for $j$ to transfer mass from an 
	$\varepsilon$-neighborhood below $x$ 
	to some $\delta$-neighborhood above $x$.
	\item	
	There is no other bidder with $x$ in its support, in this case bidder
	$i$ has an incentive to reduce $x$ by a sufficiently small $\varepsilon$.
\end{itemize}
 Hence, for every $x>0$ there is no bidder with an atomic point at $x$. 
\end{proof}

\begin{lem}
	There is at most one bidder with an atomic point at $0$.\label{lem:pn-1}\end{lem}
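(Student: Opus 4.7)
The plan is to argue by contradiction: suppose bidders $i \neq j$ both satisfy $\alpha_i(0) > 0$ and $\alpha_j(0) > 0$, and exhibit a profitable deviation for bidder $i$ by moving its atom at $0$ to a small $\varepsilon > 0$. The key intuition is that a tie at $0$ forces $i$ to share the item with bidder $j$ (and possibly others), whereas an infinitesimal bump up to $\varepsilon$ lets $i$ beat outright all the mass that the other bidders place at $0$, at a cost tending to $0$.

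First I would write down the winning probabilities at $0$ and at $\varepsilon$. By Lemma~\ref{lem:atom} no bidder has an atom at any $x > 0$, so bidding $\varepsilon$ involves almost no ties and yields
\begin{equation}
\Pr\left(i\,\text{wins}\mid \varepsilon\right) = \Pr\left(\forall k \neq i:\, k\text{ fails or } \text{bid}_k < \varepsilon\right),
\end{equation}
while bidding $0$ forces fractional sharing whenever another participating bidder also bids $0$:
\begin{equation}
\Pr\left(i\,\text{wins}\mid 0\right) = \mathbb{E}\!\left[\frac{1}{1+N}\,\mathbf{1}\!\left[\forall k \neq i:\, k\text{ fails or bids }0\right]\right],
\end{equation}
where $N$ counts the other participating bidders that bid $0$. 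Sending $\varepsilon \to 0^+$, each $F_k$ is right-continuous and atom-free on $(0,1]$, so the mass $F_k$ places on $(0,\varepsilon)$ vanishes, and the right-hand side of the first display tends to $\Pr(\forall k \neq i:\, k\text{ fails or bids }0)$.

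The second step is to exploit the assumption $\alpha_j(0) > 0$ together with $p_j > 0$: the event ``$j$ participates and bids $0$'' has strictly positive probability $p_j\alpha_j(0)$, and on that event $N \geq 1$, so $\frac{1}{1+N} \leq \frac{1}{2}$. This furnishes a fixed constant $c > 0$, independent of $\varepsilon$, with
\begin{equation}
\lim_{\varepsilon \to 0^+}\Pr\left(i\,\text{wins}\mid \varepsilon\right) - \Pr\left(i\,\text{wins}\mid 0\right) \;\geq\; c.
\end{equation}
Hence for all sufficiently small $\varepsilon > 0$ we obtain $\Pr(i\,\text{wins}\mid \varepsilon) - \Pr(i\,\text{wins}\mid 0) > \varepsilon$, i.e.\ $\pi_i(\varepsilon) > \pi_i(0)$, contradicting the requirement that $0$ be a best response for bidder $i$ (which it must, since $i$ places positive mass there in equilibrium). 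I do not foresee a substantive obstacle: the only care needed is in the passage to the limit as $\varepsilon \to 0^+$, which follows directly from Lemma~\ref{lem:atom} and right-continuity of the $F_k$.
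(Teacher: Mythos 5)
Your proof is correct and follows essentially the same route as the paper's: both derive a contradiction by observing that the other bidder's atom at $0$ creates an upward jump in $\Pr\left(i\,\text{wins}\mid\cdot\right)$ at $0$, so that deviating from the atom at $0$ to a small $\varepsilon>0$ strictly increases the payoff. You simply make explicit the size of that jump (via the tie-sharing term $\tfrac{1}{1+N}$ and the probability $p_j\alpha_j(0)>0$), which the paper asserts without computation.
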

\begin{proof}
	Suppose there are two bidders,
	 $i$ and $j$,
	 with an atomic point at $0$.
	 	$\Pr\left(j\,\text{wins}\mid \cdot \right)$
	 	has an upward jump at $0$, therefore,
	for a sufficiently small $\varepsilon$: $\Pr\left(j\,\text{wins}\mid\varepsilon\right) - \Pr\left(j\,\text{wins}\mid0\right) > \varepsilon$,
	and therefore 
	\begin{equation}
	\pi_j(0) = \Pr\left(j\,\text{wins}\mid0\right) < \Pr\left(j\,\text{wins}\mid\varepsilon\right) - \varepsilon = \pi_j(\varepsilon).
	\end{equation}
	Hence bidder $j$ has an incentive to raise its bid. 
\end{proof}

The bids are distributed with a non-atomic distribution (except maybe at $0$), therefore we do not need to address cases of ties between them, and we have that
for every $i$ and for every $x>0$, the following holds: 
\begin{equation}
\Pr\left(i\,\text{wins}\mid x\right)=\prod_{j\neq i}\left(p_{j}F_{j}\left(x\right)+1-p_{j}\right).
\end{equation}

\begin{lem}
For $x,y>0$ and $i$, if $x$ and $y$ are in the support of bidders $i$,
then $\pi_i(x) = \pi_i(y)$. \label{lem:supportEQ}
\end{lem}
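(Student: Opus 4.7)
The plan is to use the standard indifference argument for mixed-strategy equilibria: on the support of a mixed strategy, the player must be indifferent, because otherwise they could profitably shift probability mass toward better points.

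First, I would observe that by Lemma \ref{lem:atom}, for every $j$ and every $z > 0$ we have $\alpha_j(z) = 0$, so each $F_j$ is continuous on $(0,1]$. Consequently, the function $\Pr(i \text{ wins} \mid \cdot) = \prod_{j \neq i}(p_j F_j(\cdot) + 1 - p_j)$ is continuous at every point $z > 0$, and hence so is the expected profit $\pi_i(z) = \Pr(i \text{ wins} \mid z) - z$ on $(0,1]$.

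Next, I would proceed by contradiction. Suppose $x, y > 0$ both lie in the support of $F_i$ but $\pi_i(x) \neq \pi_i(y)$; without loss of generality $\pi_i(x) < \pi_i(y)$. By continuity of $\pi_i$ at $x$ and $y$, there is $\delta > 0$ and open neighborhoods $U_x = (x - \delta, x + \delta)$ and $U_y = (y - \delta, y + \delta)$ (disjoint, inside $(0,1]$) on which $\pi_i$ stays strictly separated, i.e., $\sup_{z \in U_x} \pi_i(z) < \inf_{w \in U_y} \pi_i(w)$. Because $x$ is in the support of $F_i$, bidder $i$ assigns some positive mass $m = \Pr[\mathrm{bid}_i \in U_x] > 0$. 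Consider the alternative strategy $\tilde F_i$ obtained from $F_i$ by removing the mass $m$ from $U_x$ and reassigning it to a single point $y' \in U_y$ (or, equivalently, to any distribution supported in $U_y$). Since competitors' strategies are fixed, the change in bidder $i$'s expected profit is
\begin{equation}
\int_{U_x} \bigl(\pi_i(y') - \pi_i(z)\bigr)\, dF_i(z) \;\geq\; m \cdot \Bigl(\inf_{w \in U_y} \pi_i(w) - \sup_{z \in U_x} \pi_i(z)\Bigr) \;>\; 0,
\end{equation}
so $\tilde F_i$ strictly dominates $F_i$, contradicting the assumption that $F_i$ is a best response in the Nash equilibrium.

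The only subtlety is handling the atomic point at $0$ (which may exist by Lemma \ref{lem:pn-1}), but this is outside the scope of the statement since the claim is restricted to $x, y > 0$; thus the continuity argument applies cleanly at both endpoints of the shift. The main obstacle, if any, is simply to spell out the ``shifting mass'' construction precisely, ensuring that $\tilde F_i$ is a valid CDF on $[0,1]$ (which it is, being obtained by moving mass between two disjoint subintervals of $[0,1]$).
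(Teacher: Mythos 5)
Your proof is correct and follows essentially the same route as the paper: invoke Lemma~\ref{lem:atom} to get continuity of $\pi_i$ on $(0,1]$, then derive a contradiction by shifting probability mass from a neighborhood of the lower-profit support point to a neighborhood of the higher-profit point. You simply spell out the mass-transfer step more explicitly than the paper does.
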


\begin{proof}
	Otherwise, assume without loss of generality that $\pi_i(x) > \pi_i(y)$. As the bids are distributed with a non-atomic distribution, (except maybe one bidder with an atomic at $0$), the utility function $\pi_i$ is continuous in $\left(0,1\right)$. Since $\pi_i(x) > \pi_i(y)$, 
	it pays for $j$ to transfer mass from an 
	$\varepsilon$-neighborhood around $y$ 
	to some $\delta$-neighborhood around $x$.
\end{proof}

\begin{lem}
	For every $i$ and $x>0$, if $x$ is in the support of bidder $i$, then $\pi_i(x) = \pi^*_i$. 
	Moreover, if $\alpha_i(0)>0$, then $\pi_i(0) = \pi^*_i$
	 \label{lem:supportEq2}
\end{lem}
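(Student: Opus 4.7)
Plan: The strategy is to use Lemma \ref{lem:supportEQ} to pin down that $\pi_i$ is constant on the positive part of bidder $i$'s support, identify this constant with $\pi_i^*$ by integrating against $F_i$, and finally invoke a best-response deviation argument to handle the possible atom at $0$.

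First, let $c$ denote the common value of $\pi_i(x)$ guaranteed by Lemma \ref{lem:supportEQ} for $x > 0$ in the support of $F_i$ (if no such points exist, the first claim is vacuous). Since $\pi_i^*$ is by definition the expected profit of bidder $i$ under $F_i$, I decompose the expectation according to the atom at $0$:
\begin{equation*}
\pi_i^* = \alpha_i(0)\,\pi_i(0) + \int_{(0,1]}\pi_i(x)\,dF_i(x) = \alpha_i(0)\,\pi_i(0) + \bigl(1 - \alpha_i(0)\bigr)c.
\end{equation*}
If $\alpha_i(0) = 0$, this immediately gives $\pi_i^* = c$, establishing $\pi_i(x) = \pi_i^*$ for every $x > 0$ in the support.

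For the case $\alpha_i(0) > 0$, I would show $\pi_i(0) = c$, which then yields both claims simultaneously. The argument is a mass-transfer deviation of the flavor used in Lemmas \ref{lem:atom} and \ref{lem:supportEQ}. If $\pi_i(0) < c$, bidder $i$ could strictly improve by transferring the atomic mass from $0$ to a small neighborhood of any positive support point; by Lemma \ref{lem:atom} there are no atoms at positive points, so $\pi_i$ is continuous on $(0,1)$ and the transferred mass earns approximately $c > \pi_i(0)$. Conversely, if $\pi_i(0) > c$, shifting a small amount of mass from an $\varepsilon$-neighborhood of a positive support point to $0$ improves expected profit by roughly $\pi_i(0) - c$ per unit. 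Either deviation contradicts the assumption that $F_i$ is a best response, so $\pi_i(0) = c$, and the decomposition then gives $\pi_i^* = c = \pi_i(0)$.

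The main obstacle is being careful about the mass-transfer deviation at the atom: unlike redistributing mass between two continuous regions, moving mass to or from a Dirac point requires care. The key ingredient is continuity of $\pi_i$ away from $0$, which follows from Lemma \ref{lem:atom} (no positive atoms among any bidder), together with right-continuity of $\pi_i$ at $0$ inherited from right-continuity of every $F_j$. With these in hand, the construction of a profitable deviation goes through in the standard way, closing the argument.
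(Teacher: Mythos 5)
Your proof is correct and follows essentially the same route as the paper: in the atomless case the claim reduces to Lemma~\ref{lem:supportEQ}, and when $\alpha_i(0)>0$ one shows $\pi_i(0)$ equals the common support value via the two-directional mass-transfer deviation (your explicit decomposition $\pi_i^* = \alpha_i(0)\pi_i(0) + (1-\alpha_i(0))c$ is just a slightly more spelled-out version of the paper's ``it is sufficient to show $\pi_i(0)=\pi_i(x)$''). Your added remark on right-continuity of $\pi_i$ at $0$, which uses the fact that no other bidder has an atom there (Lemma~\ref{lem:pn-1}), is a welcome clarification the paper leaves implicit.
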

\begin{proof}
	If bidder $i$ does not have an atomic point at $0$ then the claim follows immediately 
from Lemma~\ref{lem:atom} and Lemma~\ref{lem:supportEQ}. If bidder $i$ has an atomic point at $0$ then it is sufficient to show that for some $x$ in the support of bidder $i$: $\pi_i(0) = \pi_i(x)$.
If $\pi_i(0) > \pi_i(x)$ then 	it pays for $i$ to transfer mass from an 
$\varepsilon$-neighborhood of $x$ 
to $0$;
 and if
 $\pi_i(0) < \pi_i(x)$ then 	it pays for $i$ to transfer mass from
 $0$ to 
 an 
 $\varepsilon$-neighborhood of $x$. 
 \end{proof}

 \begin{lem}
For every $i$ and $j$, if $\pi_{i}^{*}<\pi_{j}^{*}$ then $\overline{s}_{j}>0$.\label{lem:upper}\end{lem}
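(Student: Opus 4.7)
The plan is to prove the contrapositive: assume $\overline{s}_j=0$, and deduce $\pi_j^*\leq\pi_i^*$. First, $\overline{s}_j=0$ together with the right-continuity of $F_j$ forces $F_j(0)=1$, so bidder $j$ always bids $0$ when participating and has $\alpha_j(0)=1$; Lemma~\ref{lem:supportEq2} then gives $\pi_j^*=\pi_j(0)$. By Lemma~\ref{lem:pn-1}, no other bidder can have an atom at $0$, so for every $k\neq j$ we have $F_k(0)=0$. Therefore $j$ wins (a share of) the item only when all other bidders fail to participate, giving
\[
\pi_j^*=\Pr(j\,\text{wins}\mid 0)=\prod_{k\neq j}(1-p_k)=(1-p_i)\prod_{k\neq i,j}(1-p_k).
\]

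Next I would bound $\pi_i^*$ from below by considering the deviation in which bidder $i$ bids an arbitrarily small $\varepsilon>0$. Since $F_j(\varepsilon)=1$ for every $\varepsilon\geq 0$ and each $F_k$ ($k\neq j$) is right-continuous with $F_k(0)=0$, atomlessness of $F_k$ at $\varepsilon$ rules out ties, and letting $\varepsilon\to 0^+$ yields
\[
\pi_i(\varepsilon)=\prod_{k\neq i}\bigl(p_kF_k(\varepsilon)+1-p_k\bigr)-\varepsilon \;\longrightarrow\; \prod_{k\neq i,j}(1-p_k).
\]
Equilibrium implies $\pi_i^*\geq\pi_i(\varepsilon)$ for every $\varepsilon\in(0,1)$, and passing to the limit gives
\[
\pi_i^*\;\geq\;\prod_{k\neq i,j}(1-p_k)\;\geq\;(1-p_i)\prod_{k\neq i,j}(1-p_k)\;=\;\pi_j^*,
\]
contradicting $\pi_i^*<\pi_j^*$.

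The step I expect to be the main obstacle is the analysis near $x=0$: specifically, invoking Lemma~\ref{lem:pn-1} to forbid any other bidder from placing an atom at $0$ (so that the right-continuous limits $F_k(\varepsilon)\to 0$ behave as claimed), and confirming that tie contributions to $\Pr(i\,\text{wins}\mid\varepsilon)$ vanish throughout a right-neighborhood of $0$, so the clean product formula for $\Pr(i\,\text{wins}\mid\varepsilon)$ remains valid and the limit of $\pi_i(\varepsilon)$ equals $\prod_{k\neq i,j}(1-p_k)$ as asserted.
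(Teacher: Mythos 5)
Your proof is correct and follows essentially the same route as the paper's: both deduce that $\overline{s}_j=0$ forces an atom of mass $1$ at $0$ for bidder $j$ (so $\pi_j^*=\prod_{k\neq j}(1-p_k)$), and then show bidder $i$ can secure at least that much by bidding a small $\varepsilon>0$, contradicting $\pi_i^*<\pi_j^*$. Your version is in fact slightly more careful than the paper's, since taking the limit $\varepsilon\to 0^+$ cleanly absorbs the $-\varepsilon$ cost that the paper's one-line ``outbid bidder $j$'' step glosses over.
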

\begin{proof}
Suppose that $\pi_{i}^{*}<\pi_{j}^{*}$ and $\overline{s}_{j}=0$.
Note that $0 \leq \pi_{i}^{*}<\pi_{j}^{*}$.
If $\overline{s}_{j}=0$
then bidder $j$ has an atomic point at $0$ of $1$, and $\pi_{j}^{*}=\pi_j\left(0\right)=\Pr\left(j\, \text{wins} \mid 0\right)$.
$\Pr\left(i\,\text{wins}\mid \cdot \right)$
has an upward jump at $0$, therefore
for a small $\varepsilon>0$, bidder $i$ can bid $\varepsilon$ and
outbid bidder $j$, and hence $\pi_i\left(\varepsilon\right) \geq \pi^*_j$.
As $\pi^*_i \geq \pi_i\left(\varepsilon\right)$,
we have that $\pi_{i}^{*}\geq\pi_{j}^{*}$
in contradiction to the assumption,
thus $\overline{s}_{j}>0$. \end{proof}
\begin{lem}
For every $i$ and $j$, $\pi_{i}^{*}=\pi_{j}^{*}$\label{lem:lambda}.\end{lem}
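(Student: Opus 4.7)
The plan is to compare bidders' profits at the right endpoint $\overline{s}_i$ of each bidder's support. First I would show that for every bidder $i$ with $\overline{s}_i > 0$ and every bidder $j$, $\pi_i^* \leq \pi_j^*$. Since no bidder has an atom above $0$ (Lemma~\ref{lem:atom}), each $F_k$ is continuous on $(0,1]$, the tie-free product formula $\pi_k(x) = \prod_{m \neq k}(p_m F_m(x) + 1 - p_m) - x$ is valid at $\overline{s}_i$, and $\overline{s}_i$ lies in the closed support of $i$, so Lemma~\ref{lem:supportEq2} gives $\pi_i(\overline{s}_i) = \pi_i^*$. Writing $a_m = p_m F_m(\overline{s}_i) + 1 - p_m$, one has $a_i = 1$ because $F_i(\overline{s}_i) = 1$, and $a_j \leq 1$; factoring $\prod_{m \neq i,j} a_m$ out of both products yields $\pi_j(\overline{s}_i) - \pi_i(\overline{s}_i) = \prod_{m \neq i,j} a_m \cdot (1 - a_j) \geq 0$, so $\pi_i^* = \pi_i(\overline{s}_i) \leq \pi_j(\overline{s}_i) \leq \pi_j^*$.

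Then I would use Lemma~\ref{lem:pn-1}: at most one bidder has an atom at $0$, and hence at most one bidder, call it $i_0$ if it exists, can have $\overline{s}_{i_0} = 0$. Applying step one in both directions to any two bidders $i, j$ with $\overline{s}_i, \overline{s}_j > 0$ yields $\pi_i^* = \pi_j^*$; denote this common value $\pi^*$. Finally I would close the gap for the potential exception $i_0$: step one with $j = i_0$ gives $\pi^* \leq \pi_{i_0}^*$, and the contrapositive of Lemma~\ref{lem:upper} applied to $j = i_0$ (using $\overline{s}_{i_0} = 0$) gives $\pi_k^* \geq \pi_{i_0}^*$ for every $k$, so in particular $\pi^* \geq \pi_{i_0}^*$. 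Combining the two yields $\pi_{i_0}^* = \pi^*$, completing the proof.

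The main obstacle is this all-at-zero bidder $i_0$: the basic product-based comparison cannot be carried out at $\overline{s}_{i_0} = 0$, since the tie-free product formula for $\pi_{i_0}$ is invalid at $0$ in the presence of $i_0$'s atom there, and there is no ``upper endpoint'' of $i_0$'s distribution to probe the other distributions with. Lemma~\ref{lem:upper} is exactly the tool that breaks the asymmetry by supplying the matching inequality from the other side; everything else reduces to the one-factor product comparison above.
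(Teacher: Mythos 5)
Your proof is correct and is essentially the paper's argument: both evaluate the tie-free product formula at the upper endpoint of a bidder's support, exploit $F(\overline{s})=1$ there to get the one-factor comparison, and fall back on Lemma~\ref{lem:upper} to handle the bidder whose entire mass sits at $0$. The only difference is organizational --- you argue directly and use Lemma~\ref{lem:pn-1} to isolate the single possible exceptional bidder, whereas the paper argues by contradiction so that Lemma~\ref{lem:upper} immediately hands it the bidder with $\overline{s}>0$ --- but the substance is the same.
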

\begin{proof}
Suppose there exist $i$ and $j$ such that $\pi_{i}^{*}\neq\pi_{j}^{*}$,
and without loss of generality $\pi_{i}^{*}<\pi_{j}^{*}$. From Lemma
\ref{lem:upper}, it must hold that $\overline{s}_{j}>0$. Since $\overline{s}_{j}>0$
from Lemma \ref{lem:atom}, the expected profit of bidder $j$ from
bidding $\overline{s}_{j}$ is:
\begin{equation}
\pi_{j}\left(\overline{s}_{j}\right)=\left(p_{i}F_{i}\left(\overline{s}_{j}\right)+1-p_{i}\right)\prod_{k\neq i,j}\left(p_{k}F_{k}\left(\overline{s}_{j}\right)+1-p_{k}\right)-\overline{s}_{j}.
\end{equation}
And from Lemma~\ref{lem:supportEq2}, as 
$\overline{s}_{j} > 0$ and in bidder $j$'s support, $\pi_{j}\left(\overline{s}_{j}\right)=\pi_{j}^{*}$.
Bidder $i$'s expected profit from bidding $\overline{s}_{j}$ is: 
\begin{equation}
\pi_{i}\left(\overline{s}_{j}\right)=\left(p_{j}F_{j}\left(\overline{s}_{j}\right)+1-p_{j}\right)\prod_{k\neq i,j}\left(p_{k}F_{k}\left(\overline{s}_{j}\right)+1-p_{k}\right)-\overline{s}_{j}
\end{equation}
 $\overline{s}_{j}$ might not be in bidder $i$'s support, hence
$\pi_{i}^{*}\geq\pi_{i}\left(\overline{s}_{j}\right)$, $F_{j}\left(\overline{s}_{j}\right)=1$
and $F_{i}\left(\overline{s}_{j}\right)\leq1$, thus 
\begin{equation}
\begin{array}{rl}
\pi_{i}^{*}\geq & \pi_{i}\left(\overline{s}_{j}\right)\\
= & \left(p_{j}F_{j}\left(\overline{s}_{j}\right)+1-p_{j}\right)\prod_{k\neq i,j}\left(p_{k}F_{k}\left(\overline{s}_{j}\right)+1-p_{k}\right)-\overline{s}_{j}\\
= & \prod_{k\neq i,j}\left(p_{k}F_{k}\left(\overline{s}_{j}\right)+1-p_{k}\right)-\overline{s}_{j}\\
\geq & \left(p_{i}F_{i}\left(\overline{s}_{j}\right)+1-p_{i}\right)\prod_{k\neq i,j}\left(p_{k}F_{k}\left(\overline{s}_{j}\right)+1-p_{k}\right)-\overline{s}_{j}\\
= & \pi_{j}\left(\overline{s}_{j}\right)\\
= & \pi_{j}^{*}
\end{array}
\end{equation}
This contradicts the assumption that $\pi_{i}^{*}<\pi_{j}^{*}$, hence
for every $i$ and $j$: $\pi_{i}^{*}=\pi_{j}^{*}$.
\end{proof}
Let $\lambda$ denote the expected profit of every bidder if have not failed, that is the expected utility of bidder $i$ is $p_{i}\lambda$.
Bidder $n$ can bid $0$, and win the item if all the other bidders
failed to participate in the auction, therefore $\lambda\geq\prod_{i=1}^{n-1}\left(1-p_{i}\right)>0$. 
\begin{lem}
If $\underline{s}_{i}<\underline{s}_{j}$ for some $i\neq j$, then
$p_{i}>p_{j}$.\label{lem:sisj}\end{lem}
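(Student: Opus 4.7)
The plan is to compare bidders $i$ and $j$ via the identity
\begin{equation}
\pi_{i}(x) - \pi_{j}(x) \;=\; C(x)\bigl[p_{i}(1 - F_{i}(x)) - p_{j}(1 - F_{j}(x))\bigr],
\end{equation}
where $C(x)=\prod_{k\neq i,j}(p_{k}F_{k}(x)+1-p_{k}) > 0$; this identity is obtained by pulling the common factors out of the two winning-probability products, and it converts equilibrium profit comparisons into inequalities between the participation-weighted survival functions.

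First I would pick a test point $x\in(\underline{s}_{i},\underline{s}_{j})$ lying in bidder $i$'s support. Such an $x$ exists because the definition of $\underline{s}_{i}$ together with Lemma~\ref{lem:atom} forces $F_{i}$ to strictly increase on arbitrarily small intervals just above $\underline{s}_{i}$, and we may pick one such interval below $\underline{s}_{j}$. At this $x$, Lemma~\ref{lem:supportEq2} gives $\pi_{i}(x)=\lambda$, while $x<\underline{s}_{j}$ places $x$ outside bidder $j$'s positive support so $\pi_{j}(x)\leq \lambda$. Substituting $F_{j}(x)=\alpha_{j}(0)$ (the constant value of $F_{j}$ on $[0,\underline{s}_{j})$) into the identity yields
\begin{equation}
p_{i}(1 - F_{i}(x)) \;\geq\; p_{j}(1 - \alpha_{j}(0)),
\end{equation}
and the strict inequality $F_{i}(x)>0$, which follows from $x>\underline{s}_{i}$ together with Lemma~\ref{lem:atom}, provides the strict bound needed.

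If $\alpha_{j}(0)=0$ the inequality immediately collapses to $p_{i}(1-F_{i}(x))\geq p_{j}$, and $F_{i}(x)>0$ gives $p_{i}>p_{j}$ in one line. The main obstacle is the atom case $\alpha_{j}(0)>0$, in which the displayed inequality only yields $p_{j}(1-\alpha_{j}(0))<p_{i}$, not the desired $p_{j}<p_{i}$. I would resolve this by combining with a second constraint obtained at $x=0$: Lemma~\ref{lem:pn-1} tells us that only $j$ has an atom at $0$, so $\pi_{j}(0)=\prod_{k\neq j}(1-p_{k})=\lambda$ by Lemma~\ref{lem:supportEq2}; meanwhile $\pi_{i}(0)$ can be computed explicitly via the $\tfrac{1}{2}$ tie-sharing convention, and imposing $\pi_{i}(0)\leq\lambda$ gives a complementary upper bound on $p_{i}$ in terms of $p_{j}$ and $\alpha_{j}(0)$. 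Applying the same deviation argument to bidder $n$ then pins $j$ as the unique bidder with the maximum participation probability, and in this rigid configuration the two bounds (the strict lower bound from the support inequality versus the upper bound from the deviation to $0$) can be combined to either produce a direct contradiction or to force $\underline{s}_{j}=0$, which makes the lemma's hypothesis vacuous. This reconciliation between the two inequalities, crucially exploiting the tie-sharing rule at $0$, is the delicate step I expect to be the bottleneck of the proof.
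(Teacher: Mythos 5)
Your identity $\pi_i(x)-\pi_j(x)=C(x)\bigl[p_i(1-F_i(x))-p_j(1-F_j(x))\bigr]$ is correct, and your main case is sound: picking $x\in(\underline{s}_i,\underline{s}_j)$ in bidder $i$'s support, using $\pi_i(x)=\lambda\geq\pi_j(x)$ together with $F_j(x)=\alpha_j(0)$ and $F_i(x)>0$ is essentially the paper's own second-case computation in a cleaner algebraic form; indeed, when $\alpha_j(0)=0$ your one-line conclusion also absorbs the paper's first case (where $i$ carries the atom at $0$), since nothing in your argument cares whether $F_i(0)>0$.

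The genuine gap is the case $\alpha_j(0)>0$, which you rightly flag as the bottleneck but do not close, and the repair you sketch fails as described. The two bounds you propose to combine are mutually consistent rather than contradictory: the support comparison gives $p_i>p_i(1-F_i(x))\geq p_j(1-\alpha_j(0))$, while the deviation of $i$ to $0$ under the tie-sharing rule gives $\pi_i(0)=\left(1-p_j+\tfrac{1}{2}p_j\alpha_j(0)\right)\prod_{k\neq i,j}(1-p_k)\leq\lambda=(1-p_i)\prod_{k\neq i,j}(1-p_k)$, i.e., $p_i\leq p_j\left(1-\tfrac{1}{2}\alpha_j(0)\right)$. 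The interval $\left(p_j(1-\alpha_j(0)),\,p_j\left(1-\tfrac{1}{2}\alpha_j(0)\right)\right]$ is nonempty whenever $\alpha_j(0)>0$, so there is no ``direct contradiction''; worse, the second bound points the wrong way, yielding $p_i<p_j$. What you actually need is your other alternative --- that a bidder with an atom at $0$ must be a minimizer of $\underline{s}$, so that $\underline{s}_i<\underline{s}_j$ cannot occur --- but that is a nontrivial structural fact you assert rather than prove; in the paper it only emerges downstream (Lemma~\ref{lem:0}, Lemma~\ref{lem:2}, Lemma~\ref{lem:pnpi}), partly with the help of the present lemma. In fairness, the paper's own write-up of its second case also silently substitutes $F_j(x)=0$ on $(\underline{s}_i,\underline{s}_j)$ and thus leaves this same sub-case implicit; but as a standalone proof your proposal still needs a real argument ruling out the configuration $\alpha_j(0)>0$ with $\underline{s}_j>\underline{s}_i$, and the one you outline does not supply it.
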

\begin{proof}
Assume to the contrary that there exist $i$ and $j$ such that $\underline{s}_{i}<\underline{s}_{j}$
but $p_{i}\leq p_{j}$. 
\begin{itemize}
\item If $\alpha_{i}\left(0\right)>0$, from Lemma \ref{lem:pn-1} there is no other bidder with $\alpha\left(0\right)=0$ and from Lemma~\ref{lem:supportEq2}: $\lambda=\pi_{i}\left(0\right)=\prod_{k\neq i}\left(1-p_{k}\right)$.
Since bidder $i$ has an atomic point at $0$, it holds that 
\begin{equation}
\begin{array}{rl}
\pi_{j}\left(0\right)= & \left(1-p_{i}+\frac{1}{2}\left(p_{i}\cdot\alpha_{i}\left(0\right)\right)\right) \prod_{k\neq i,j}\left(1-p_{k}\right)\\
> & \left(1-p_{i}\right) \prod_{k\neq i,j}\left(1-p_{k}\right)\\
\geq & \left(1-p_{j}\right) \prod_{k\neq i,j}\left(1-p_{k}\right)\\
= & \lambda
\end{array}
\end{equation}

\item If bidder $i$ does not have an atomic point at $0$, let 
$x\in\left(\underline{s}_{i},\underline{s}_{j}\right)$, since $x>0$ it holds that 
\begin{equation}
\begin{array}{rl}
\pi_{i}\left(x\right)= & \left(p_{j}F_{j}\left(x\right)+1-p_{j}\right)\prod_{k\neq i,j}\left(p_{k}F_{k}\left(x\right)+1-p_{k}\right)-x\\
= & \left(1-p_{i}\right)\prod_{k\neq i,j}\left(p_{k}F_{k}\left(x\right)+1-p_{k}\right)-x\\
= & \lambda
\end{array}
\end{equation}
Bidder $j$'s expected profit from bidding $x$ is: 
\begin{equation}
\begin{array}{rl}
\pi_{j}\left(x\right)= & \left(p_{i}F_{i}\left(x\right)+1-p_{i}\right)\prod_{k\neq i,j}\left(p_{k}F_{k}\left(x\right)+1-p_{k}\right)-x\\
> & \left(1-p_{i}\right)\prod_{k\neq i,j}\left(p_{k}F_{k}\left(x\right)+1-p_{k}\right)-x\\
\geq & \left(1-p_{j}\right)\prod_{k\neq i,j}\left(p_{k}F_{k}\left(\underline{s}_{j}\right)d-p_{k}\right)-x\\
= & \lambda
\end{array}
\end{equation}
Where the inequality holds due to the fact that $F_{i}\left(x\right)>0$. 
\end{itemize}
That is, in both cases there exists $x$ such that $\pi_{j}\left(x\right)>\lambda$,
a contradiction to Lemma~\ref{lem:lambda}.
\end{proof}
\begin{lem}
\label{lem:eq}If $p_{i}=p_{j}$ then $\underline{s}_{i}=\underline{s}_{j}$\end{lem}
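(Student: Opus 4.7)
The plan is to derive this lemma as an immediate consequence of Lemma \ref{lem:sisj}, essentially by taking its contrapositive and applying it symmetrically in $i$ and $j$. Since Lemma \ref{lem:sisj} already establishes the implication ``$\underline{s}_i < \underline{s}_j \Rightarrow p_i > p_j$'' for arbitrary distinct indices, the present statement should require almost no new work — it is really a clean-up corollary of the structural fact that a strictly smaller lower endpoint forces a strictly larger participation probability.

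Concretely, I would assume for contradiction that $p_i = p_j$ but $\underline{s}_i \neq \underline{s}_j$, and split into two symmetric cases. In the first case, $\underline{s}_i < \underline{s}_j$: then Lemma \ref{lem:sisj} (applied to the ordered pair $(i,j)$) gives $p_i > p_j$, contradicting $p_i = p_j$. In the second case, $\underline{s}_j < \underline{s}_i$: then the same lemma applied to the ordered pair $(j,i)$ gives $p_j > p_i$, again contradicting $p_i = p_j$. Both cases being impossible, we conclude $\underline{s}_i = \underline{s}_j$.

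There is no real obstacle here, since all the work has already been done in Lemma \ref{lem:sisj}; the only thing to be careful about is to note that Lemma \ref{lem:sisj} is stated for arbitrary $i \neq j$ (not only for $i < j$ in the fixed ordering of $p$'s), so the symmetric application in the second case is legitimate. This justifies stating the result as an immediate corollary without any further case analysis on atomic points at $0$ or on supports, both of which were already absorbed into the proof of Lemma \ref{lem:sisj}.
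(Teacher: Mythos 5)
Your proposal is correct and matches the paper's proof, which simply states that the lemma is immediate from Lemma~\ref{lem:sisj}; your contrapositive-plus-symmetry argument is exactly the intended reasoning, just spelled out. No gaps.
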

\begin{proof}
Immediate from Lemma \ref{lem:sisj}.
\end{proof}

\begin{lem}
For every $j$ such that $p_{j}=p_{n}$, $j\in\arg\min_{i}\underline{s}_{i}$.
\label{lem:maxp}\end{lem}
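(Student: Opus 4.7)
The plan is to derive the lemma as an almost immediate corollary of Lemma~\ref{lem:sisj}, rather than reworking the underlying equilibrium analysis from scratch. The key observation is that Lemma~\ref{lem:sisj} is really a contrapositive statement: it says that whenever one bidder's support starts strictly lower than another's, the lower-starting bidder must be strictly more reliable. Taking contrapositives, if $p_i \leq p_j$ then we must have $\underline{s}_i \geq \underline{s}_j$.

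Given this, the proof will be one short paragraph. Fix any $j$ with $p_j = p_n$, so that $p_j$ is maximal among all participation probabilities. For an arbitrary bidder $i$, we have $p_i \leq p_n = p_j$. I would then argue by contradiction: if $\underline{s}_i < \underline{s}_j$, Lemma~\ref{lem:sisj} applied to the ordered pair $(i,j)$ forces $p_i > p_j$, contradicting $p_i \leq p_j$. Therefore $\underline{s}_j \leq \underline{s}_i$ for every $i$, which is exactly the statement $j \in \arg\min_i \underline{s}_i$.

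There is essentially no obstacle here — the only thing to be careful about is applying Lemma~\ref{lem:sisj} in the correct direction (the lemma takes a strict inequality $\underline{s}_i < \underline{s}_j$ as hypothesis and concludes $p_i > p_j$, which is exactly what we want to rule out). Note in particular that Lemma~\ref{lem:eq} is not needed for this claim, though it is consistent with it: all $j$ achieving $p_j = p_n$ share the same value of $\underline{s}_j$, and that common value is the minimum. I would mention this consistency check at the end as a sanity remark, but the argument itself is complete after one application of the contrapositive.
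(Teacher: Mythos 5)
Your proof is correct and matches the paper's, which derives the claim as ``immediate from Lemma~\ref{lem:eq} and Lemma~\ref{lem:sisj}''; your argument is precisely the contrapositive of Lemma~\ref{lem:sisj} applied to a maximal-probability bidder. Your observation that Lemma~\ref{lem:eq} is not actually needed is accurate.
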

\begin{proof}
Immediate from Lemma \ref{lem:eq} and Lemma \ref{lem:sisj}.\end{proof}
\begin{lem}
\textup{$\min_{i}\underline{s}_{i}=0$.\label{lem:0}}\end{lem}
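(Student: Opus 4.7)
The plan is a proof by contradiction: assume $s := \min_i \underline{s}_i > 0$ and exhibit a profitable deviation for bidder $n$ into the interval $(0,s)$, contradicting that we are in a Nash equilibrium. By Lemma \ref{lem:maxp}, bidder $n$ attains the minimum, so $\underline{s}_n = s$.

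First I would exploit the assumption $s > 0$ structurally: since $\underline{s}_j \geq s$ for every bidder $j$, each CDF $F_j$ is constant on $[0, s)$, taking the value of its atomic mass $\alpha_j(0)$ at zero (and zero if that bidder has no atom at $0$). Consequently, for any $x \in (0, s)$, where Lemma \ref{lem:atom} forbids any atom, bidder $n$'s winning probability is a constant and the expected profit simplifies to
\begin{equation*}
\pi_n(x) \;=\; \prod_{j \neq n}\bigl(p_j\,\alpha_j(0) + 1 - p_j\bigr) \;-\; x,
\end{equation*}
a strictly decreasing affine function of slope $-1$.

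Next I would pin down the constant term by evaluating the one-sided limit at the right endpoint. Lemma \ref{lem:atom} rules out an atom at the positive point $s$, so $\pi_n$ is continuous there; and Lemma \ref{lem:supportEq2} gives $\pi_n(s) = \pi_n^* = \lambda$ because $s$ lies in bidder $n$'s support. Hence $\lim_{x \to s^-} \pi_n(x) = \lambda$, which identifies $\prod_{j \neq n}(p_j \alpha_j(0) + 1 - p_j) = \lambda + s$.

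Plugging this back in, for any $x_0 \in (0, s)$ I get $\pi_n(x_0) = \lambda + s - x_0 > \lambda$, producing the required profitable deviation and contradicting $\pi_n^* = \lambda$. Hence $s = 0$. The one delicate point I anticipate is justifying the closed-form $\pi_n(x) = \prod_{j \neq n}(p_j F_j(x) + 1 - p_j) - x$ on the open interval $(0, s)$ without being tripped up by ties: this is precisely where Lemma \ref{lem:atom} (no atoms at positive points) together with the restriction $x > 0$ prevents any tie at $x$ even if some other bidder still carries an atom at $0$, so no tie-breaking correction is needed in the deviation calculation.
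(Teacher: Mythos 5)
Your proof is correct and takes essentially the same route as the paper's: both argue that if $s=\min_{i}\underline{s}_{i}>0$, then no bidder places mass on $(0,s)$ and there is no atom at $s$ (Lemma~\ref{lem:atom}), so the winning probability is constant there and the bidder whose support starts at $s$ profits by shifting mass just below $s$, contradicting Lemma~\ref{lem:supportEq2}. You merely make the computation explicit (identifying the constant as $\lambda+s$) and fix the deviating bidder as $n$ via Lemma~\ref{lem:maxp} rather than an arbitrary minimizer.
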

\begin{proof}
Otherwise, 
let $j\in\arg\min_{i}\underline{s}_{i}$. From Lemma~\ref{lem:supportEq2} we have that $\pi_j \left(\underline{s}_{j}\right) = \pi^*_j$. As for every $x \in \left(0,\underline{s}_{j}\right)$: 
$\Pr\left(j\,\text{wins}\mid \underline{s}_{j} \right) = \Pr\left(j\,\text{wins}\mid x\right)$,
it pays for $j$ to transfer mass from an 
$\varepsilon$-neighborhood above $\underline{s}_{j}$
to an $\delta$-neighborhood below $\underline{s}_{j}$.
\end{proof}

\begin{lem}
\textup{$\left|\arg\min_{i}\underline{s}_{i}\right|\geq2$. \label{lem:2}}\end{lem}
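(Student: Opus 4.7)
The plan is to argue by contradiction: suppose $\lvert\arg\min_i \underline{s}_i\rvert = 1$, with unique minimizer $j$. By Lemma~\ref{lem:0} this forces $\underline{s}_j = 0$ while $\underline{s}_k > 0$ for every $k \neq j$. Set $\delta = \min_{k \neq j} \underline{s}_k > 0$. On the interval $(0,\delta)$, each of the other bidders' CDFs is constant at $F_k(0) = \alpha_k(0)$ (by definition of $\underline{s}_k$ together with monotonicity of $F_k$), so bidder $j$'s expected profit from bidding $x \in (0,\delta)$ simplifies to
\[
\pi_j(x) \;=\; \prod_{k \neq j}\bigl(p_k\,\alpha_k(0) + 1 - p_k\bigr) \;-\; x \;=\; C - x,
\]
where $C$ is a constant determined by the others' atoms at $0$. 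In particular, $\pi_j$ is strictly decreasing on $(0,\delta)$.

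Next, I would translate the condition $\underline{s}_j = 0$ into a statement about the support of $F_j$ inside $(0,\delta)$. By definition of $\underline{s}_j$, the distribution $F_j$ places positive mass in every right neighbourhood $(0,\varepsilon)$. By Lemma~\ref{lem:atom}, this mass cannot be concentrated at a single point, since such a point would be a forbidden atom at a positive bid. Hence the support of $F_j$ must contain at least two distinct points inside $(0,\delta)$.

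The contradiction then follows from Lemma~\ref{lem:supportEq2}, which forces $\pi_j$ to equal the common equilibrium value $\lambda$ at every point in the support of $F_j$. But a strictly decreasing function on $(0,\delta)$ is injective, so it attains the value $\lambda$ at most once — contradicting the existence of two distinct support points there. Hence $\lvert\arg\min_i \underline{s}_i\rvert \geq 2$.

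The main obstacle I anticipate is the translation step in the second paragraph: turning the infimum condition $\underline{s}_j = 0$ into the conclusion ``at least two support points of $F_j$ lie in $(0,\delta)$'' requires combining the no-atom lemma with the observation that a non-atomic distribution with positive mass on an interval cannot sit at a single point. Once this is in hand, the strictly decreasing profit and the equal-profit-on-support principle deliver the contradiction cleanly, in the same spirit as the mass-transfer argument used for Lemma~\ref{lem:0}.
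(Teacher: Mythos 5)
Your proof is correct and rests on the same key observation as the paper's: on $\left(0,\min_{k\neq j}\underline{s}_{k}\right)$ the other bidders' CDFs are constant, so bidder $j$'s winning probability is constant and its profit $C-x$ is strictly decreasing there, which is incompatible with $\underline{s}_{j}=0$. The paper closes the contradiction by exhibiting an explicit profitable downward mass transfer from a single support point of bidder $j$, whereas you invoke Lemma~\ref{lem:supportEq2} at two distinct support points and use injectivity of $C-x$; this is the same argument in different packaging, and your justification that two such support points actually exist in $(0,\delta)$ is, if anything, more careful than the paper's.
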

\begin{proof}
Assume to the contrary that $\left|\arg\min_{i}\underline{s}_{i}\right|=1$,
 let $\{j\} = \arg\min_{i}\underline{s}_{i}$,
and let $k \in \arg\min_{i \neq j}\underline{s}_{i}$.
From Lemma \ref{lem:0}
we have that $\underline{s}_{j}=0$.
There exists $x>0$ in bidder $j$'s
support such that $x<\underline{s}_{k}$. $x > 0$
and in bidder $j$'s support thus $\pi_{j}\left(x\right)=\lambda$.
Since $\Pr\left(j\,\text{wins}\mid x\right)=\Pr\left(j\,\text{wins}\mid y\right)$
for $y\in\left(0,x\right)$, 
it pays for $j$ to transfer mass from an 
$\varepsilon$-neighborhood of $x$ 
to some $\delta$-neighborhood of $y$.
Hence we must have that $\left|\arg\min_{i}\underline{s}_{i}\right|\geq2$
and the claim follows.
\end{proof}

\begin{lem}
If bidder $i$ has an atomic point at $0$, then $p_i=p_{i+1}=\ldots=p_n$.\label{lem:pnpi}\end{lem}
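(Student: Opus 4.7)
The plan is to split on whether $i = n$, in which case the conclusion $p_n = p_n$ is trivial, or $i < n$, in which case I aim to derive a contradiction from $\alpha_i(0) > 0$, thereby showing the situation actually forces $p_i = p_n$ (and, by the ordering $p_i \leq p_{i+1} \leq \cdots \leq p_n$, the full chain of equalities).

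For the nontrivial case $i < n$, I would compute the common equilibrium profit $\lambda$ in two different ways and compare. The first expression comes from bidder $i$: since $\alpha_i(0) > 0$ puts $0$ in bidder $i$'s support, Lemma~\ref{lem:supportEq2} gives $\pi_i(0) = \lambda$; Lemma~\ref{lem:pn-1} says no other bidder has an atom at $0$, so every $k \neq i$ satisfies $F_k(0) = 0$ and bids strictly above $0$ almost surely when it participates. Consequently bidder $i$'s bid of $0$ wins only when every other bidder fails, yielding $\lambda = \prod_{k \neq i}(1 - p_k)$.

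The second expression comes from bidder $n$, who differs from $i$ and whose behavior near $0$ can be analyzed. By Lemma~\ref{lem:maxp} combined with Lemma~\ref{lem:0}, $\underline{s}_n = 0$, so bidder $n$'s support contains bids arbitrarily close to $0$; and bidder $n$ has no atom at $0$ by Lemma~\ref{lem:pn-1}. For $x > 0$ in bidder $n$'s support, Lemma~\ref{lem:supportEq2} gives $\pi_n(x) = \prod_{k \neq n}(p_k F_k(x) + 1 - p_k) - x = \lambda$; letting $x \to 0^+$ and using right-continuity of each $F_k$ together with $F_i(0) = \alpha_i(0)$ and $F_k(0) = 0$ for $k \neq i, n$ produces $\lambda = (p_i \alpha_i(0) + 1 - p_i) \prod_{k \neq i, n}(1 - p_k)$.

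Equating the two expressions for $\lambda$ and canceling the common factor $\prod_{k \neq i, n}(1 - p_k)$, which is positive since $p_{n-1} < 1$ forces $p_k < 1$ for every $k \leq n - 1$, reduces to $p_i \alpha_i(0) = p_i - p_n$. Because $p_i > 0$ and $\alpha_i(0) > 0$ by hypothesis, the left side is strictly positive, forcing $p_i > p_n$ and contradicting the ordering $p_i \leq p_n$; thus the case $i < n$ cannot occur, and the claim follows. The most delicate point in the argument is the tie-breaking at $0$: I compute $\pi_i(0)$ using the exact (tie-broken) win probability, while $\pi_n(0^+)$ correctly uses the product formula that ignores ties at $0$ because any bid $x > 0$ cleanly beats bidder $i$'s mass at $0$, and that asymmetry between the two computations is exactly what produces the relation $p_i \alpha_i(0) = p_i - p_n$ driving the contradiction.
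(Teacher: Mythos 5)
Your proof is correct, and its first half coincides with the paper's: both use Lemma~\ref{lem:pn-1} and Lemma~\ref{lem:supportEq2} to obtain $\lambda=\pi_i(0)=\prod_{k\neq i}\left(1-p_k\right)$. The second halves genuinely diverge. The paper finishes in one line by invoking the deviation bound $\lambda\geq\prod_{j=1}^{n-1}\left(1-p_j\right)$ (bidder $n$ can always bid $0$), which after cancelling the common positive factor $\prod_{k\neq i,n}\left(1-p_k\right)$ yields $p_i\geq p_n$ and hence equality. You instead derive a second \emph{exact} expression for $\lambda$ by letting $x\rightarrow 0^{+}$ along bidder $n$'s support --- legitimately using Lemmas~\ref{lem:maxp} and~\ref{lem:0} for $\underline{s}_n=0$ and the right-continuity of the $F_k$'s, with no circularity since all cited lemmas precede this one --- obtaining $p_i\alpha_i(0)=p_i-p_n$ and a contradiction whenever $i<n$. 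Your route costs more machinery but proves strictly more than the statement: the atom at $0$, if any, can only belong to bidder $n$, a fact the paper only recovers later (via Lemma~\ref{lem:2pn} and the subsequent computation $\alpha_n(0)=1-\frac{p_{n-1}}{p_n}$, whose limit identity is essentially the one you use, as is the argument of Lemma~\ref{lem:lam1}). The paper's inequality is the minimal tool needed for the stated conclusion. One cosmetic remark: your closing concern about tie-breaking at $0$ is moot, since in the computation of $\pi_i(0)$ there are no ties either --- every other bidder who participates bids strictly above $0$ almost surely.
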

\begin{proof}
From Lemma \ref{lem:pn-1} there is no other bidder with an atomic point
at $0$. Thus, the expected profit of bidder $i$ from bidding $0$
is $\lambda=\prod_{j\neq i}\left(1-p_{j}\right)$, since $\lambda\geq\prod_{j=1}^{n-1}\left(1-p_{j}\right)$
and $p_{i}\leq p_{n}$, it must hold that $p_{i}=p_{n}$.\end{proof}
\begin{lem}
If $p_{i}<p_{j}$ and $p_{j}<p_{n}$, then $\underline{s}_{i}>\underline{s}_{j}$.
\label{lem:sOreder}\end{lem}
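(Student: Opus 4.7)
I would first observe that the contrapositive of Lemma~\ref{lem:sisj} applied to $p_i < p_j$ already yields the weak inequality $\underline{s}_i \geq \underline{s}_j$, so the real task is to rule out equality. Suppose for contradiction that $\underline{s}_i = \underline{s}_j =: x_0$. Because $p_i < p_j < p_n$, Lemma~\ref{lem:pnpi} (contrapositively) forbids an atomic point at $0$ for either bidder, hence $F_i(0) = F_j(0) = 0$.

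Next I would split on the sign of $x_0$. If $x_0 > 0$, Lemma~\ref{lem:atom} rules out any atom at $x_0$ for $i$ or $j$, so $F_i(x_0) = F_j(x_0) = 0$ (they agree with the left-limit). By definition of $\underline{s}$, support points of bidder $i$ (and separately of $j$) accumulate at $x_0$ from above, and $\pi_i, \pi_j$ are continuous at $x_0$ (no atoms nearby), so Lemma~\ref{lem:supportEq2} forces $\pi_i(x_0) = \pi_j(x_0) = \lambda$. Writing $Q(x_0) := \prod_{k \neq i, j}(p_k F_k(x_0) + 1 - p_k)$, these equalities read $(1-p_j)\,Q(x_0) - x_0 = (1-p_i)\,Q(x_0) - x_0 = \lambda$; subtracting and using $p_j > p_i$ forces $Q(x_0) = 0$, whence $\pi_i(x_0) = -x_0 < 0$, contradicting $\lambda > 0$ (which holds since $p_{n-1} < 1$).

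If instead $x_0 = 0$, I would pick sequences of support points $\epsilon_n \to 0^+$ (for bidder $i$) and $\delta_n \to 0^+$ (for bidder $j$), apply Lemma~\ref{lem:supportEq2} to get $\pi_i(\epsilon_n) = \pi_j(\delta_n) = \lambda$, and pass to the limit using right-continuity of every $F_k$ and the fact that $F_i, F_j$ are atom-free at $0$, obtaining
\[
(1-p_j)\,Q_0 \;=\; (1-p_i)\,Q_0 \;=\; \lambda, \qquad Q_0 := \prod_{k \neq i, j}\bigl(p_k F_k(0) + 1 - p_k\bigr).
\]
If $Q_0 = 0$ then $\lambda = 0$, contradicting $\lambda > 0$; if $Q_0 > 0$ then dividing gives $1-p_j = 1-p_i$, i.e., $p_i = p_j$, again a contradiction.

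The main care point will be the boundary book-keeping at $x_0$: one has to justify evaluating (or passing a one-sided limit of) both profit functions at $x_0$ and concluding each equals $\lambda$, which requires chaining Lemma~\ref{lem:atom}, right-continuity of the CDFs, and Lemma~\ref{lem:supportEq2}. Once that is in place the contradiction is short algebra, and the key structural observation is that matching $\pi_i$ to $\pi_j$ at a common left endpoint of support with $p_i \neq p_j$ forces $Q$ to vanish, which would in turn drag $\lambda$ down to $0$---incompatible with the standing hypothesis $p_{n-1} < 1$.
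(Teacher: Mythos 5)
Your proof is correct and follows essentially the same route as the paper's: obtain the weak inequality from Lemma~\ref{lem:sisj}, assume equality of the lower support endpoints, and compare the two profit identities at that common point to force a contradiction from $p_i \neq p_j$. The only differences are cosmetic --- you split the cases $x_0>0$ and $x_0=0$ and explicitly dispose of the subcase where the common product $Q$ vanishes (using $\lambda>0$), whereas the paper handles both uniformly via one-sided limits at $\underline{s}_i^{+}$ and leaves that degenerate subcase implicit.
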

\begin{proof}
From Lemma \ref{lem:sisj} it holds that $\underline{s}_{i}\geq\underline{s}_{j}$.
Suppose that $\underline{s}_{i}=\underline{s}_{j}$, according to the assumption $p_{i}<p_{j}<p_{n}$
 from Lemma \ref{lem:pnpi} it holds that $\alpha_{i}\left(\underline{s}_{i}\right)=\alpha_{j}\left(\underline{s}_{j}\right)=0$.
There exists $\varepsilon>0$ such that for every $x\in\left(\underline{s}_{i},\underline{s}_{i}+\varepsilon\right)$: $x$ is in the support of bidders $i$ and $j$. The expected profit
of both bidders from bidding $x\in\left(\underline{s}_{i},\underline{s}_{i}+\varepsilon\right)$
is $\lambda$, hence 
\begin{equation}
\lim_{x\rightarrow\underline{s}_{i}+}\pi_{i}\left(x\right)=\lim_{x\rightarrow\underline{s}_{i}+}\left(p_{j}F_{j}\left(x\right)+1-p_{j}\right)\prod_{k\neq i,j}\left(p_{k}F_{k}\left(x\right)+1-p_{k}\right)-x=\lambda
\end{equation}
and 
\begin{equation}
\lim_{x\rightarrow\underline{s}_{j}+}\pi_{j}\left(x\right)=\lim_{x\rightarrow\underline{s}_{j}+}\left(p_{i}F_{i}\left(x\right)+1-p_{i}\right)\prod_{k\neq i,j}\left(p_{k}F_{k}\left(x\right)+1-p_{k}\right)-x=\lambda
\end{equation}
$\alpha_{i}\left(\underline{s}_{i}\right)=\alpha_{j}\left(\underline{s}_{j}\right)=0$,
hence 
\begin{equation}
\begin{array}{rl}
\lim_{x\rightarrow\underline{s}_{i}^{+}}\pi_{i}\left(x\right)= & \left(p_{j}F_{j}\left(\underline{s}_{i}\right)+1-p_{j}\right)\lim_{x\rightarrow\underline{s}_{i}^{+}}\prod_{k\neq i,j}\left(p_{k}F_{k}\left(x\right)+1-p_{k}\right) - \lim_{x\rightarrow\underline{s}_{i}^{+}} x\\
= & \left(1-p_{j}\right)\lim_{x\rightarrow\underline{s}_{i}^{+}}\prod_{k\neq i,j}\left(p_{k}F_{k}\left(x\right)+1-p_{k}\right) - \underline{s}_{i}
\end{array}
\end{equation}
and 
\begin{equation}
\lim_{x\rightarrow\underline{s}_{j}^{+}}\pi_{j}\left(x\right)=\left(1-p_{i}\right)\lim_{x\rightarrow\underline{s}_{j}^{+}}\prod_{k\neq i,j}\left(p_{k}F_{k}\left(x\right)+1-p_{k}\right)- \underline{s}_{j}
\end{equation}
 as $p_{i}<p_{j}$ it holds that $\lambda\neq\lambda$, thus $\underline{s}_{i}>\underline{s}_{j}$
and the claim follows.
\end{proof}

\begin{lem}
$\lambda=\prod_{j=1}^{n-1}\left(1-p_{j}\right)$\label{lem:lam1}.\end{lem}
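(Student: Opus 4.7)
The plan is to prove $\lambda \le \prod_{j=1}^{n-1}(1-p_j)$; the reverse inequality was already noted in the preamble (bidder $n$ can secure profit at least $\prod_{j=1}^{n-1}(1-p_j)$ just by bidding $0$, since that guarantees the item whenever every other bidder fails). To obtain the matching upper bound I would fix attention on bidder $n$: by Lemma~\ref{lem:maxp} applied with $j=n$, together with Lemma~\ref{lem:0}, we have $\underline{s}_n=0$, so the equilibrium profit $\lambda$ is accessible from bidder $n$'s behavior arbitrarily close to $0$. I would then split on whether $\alpha_n(0)>0$.

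If $\alpha_n(0)>0$, Lemma~\ref{lem:supportEq2} gives $\pi_n(0)=\lambda$. Lemma~\ref{lem:pn-1} ensures bidder $n$ is then the sole holder of an atom at $0$, and Lemma~\ref{lem:atom} rules out atoms at any $x>0$; hence every other participating bidder submits a strictly positive bid almost surely, and bidder $n$'s bid of $0$ wins exactly when all other bidders fail. Thus $\lambda=\pi_n(0)=\prod_{j\ne n}(1-p_j)=\prod_{j=1}^{n-1}(1-p_j)$. If instead $\alpha_n(0)=0$, then $\underline{s}_n=0$ forces bidder $n$'s support to contain a sequence $x_m\downarrow 0$, on which $\pi_n(x_m)=\lambda$ by Lemma~\ref{lem:supportEq2}. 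Right-continuity of every $F_j$ at $0$, together with $F_j(y)=0$ for $y<0$, gives $\lim_{x\to 0^+}F_j(x)=\alpha_j(0)$, so
\begin{equation}
\lambda \;=\; \lim_{m\to\infty}\pi_n(x_m) \;=\; \prod_{j\ne n}\bigl(1-p_j+p_j\alpha_j(0)\bigr).
\end{equation}
By Lemma~\ref{lem:pn-1} at most one bidder has an atom at $0$. If none does, every $\alpha_j(0)$ vanishes and the product collapses to $\prod_{j=1}^{n-1}(1-p_j)$. Otherwise some bidder $k\ne n$ has $\alpha_k(0)>0$; Lemma~\ref{lem:pnpi} then forces $p_k=p_n$, and I would switch attention to bidder $k$. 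Exactly as in the atom case above, $\pi_k(0)=\lambda$ and the only way bidder $k$ wins at $0$ is for every other bidder to fail (bidder $n$, having no atom, bids positive a.s.\ when participating), so $\lambda=\prod_{j\ne k}(1-p_j)$; using $p_k=p_n$ this rewrites as $(1-p_k)\prod_{j\ne k,n}(1-p_j)=\prod_{j=1}^{n-1}(1-p_j)$, completing the argument.

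The main obstacle is precisely the last sub-case, in which bidder $n$ carries no atom but some other bidder $k$ does: the naive limit of $\pi_n$ at $0$ then contains the factor $1-p_k+p_k\alpha_k(0)$ instead of $1-p_k$, and it is not evident how to evaluate the unknown mass $\alpha_k(0)$. The resolution is to transfer the indifference condition from bidder $n$ to bidder $k$, using Lemma~\ref{lem:pn-1} to ensure that $k$ is the only atom-carrier so that $\pi_k(0)$ has a clean closed form, and then absorbing $k$ into $n$'s slot by means of $p_k=p_n$ supplied by Lemma~\ref{lem:pnpi}.
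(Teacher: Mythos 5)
Your proof is correct and follows essentially the same route as the paper: both arguments rest on $\underline{s}_n=0$ and on evaluating the equilibrium profit via the limit of $\pi$ at $0$, with Lemma~\ref{lem:pnpi} neutralizing the possible atom at $0$. The paper compresses your explicit case analysis into a single ``without loss of generality, $\alpha_i(0)=0$ for all $i<n$'' (relabelling the atom-carrier as bidder $n$, which Lemma~\ref{lem:pnpi} licenses since its participation probability must equal $p_n$), so your sub-cases are just the unpacked version of that step.
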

\begin{proof}
From Lemma \ref{lem:sisj} and Lemma \ref{lem:0} it holds that $\underline{s}_n=0$, and from Lemma \ref{lem:pnpi} we can assume without loss of generality that for every $i<n$, $\alpha_i(0)=0$. 
Hence, we have that
\begin{equation}
\lambda=\pi_{n}^{*}=\lim_{x\rightarrow0^{+}}\pi_{n}\left(x\right)=\lim_{x\rightarrow0^{+}}\prod_{j=1}^{n-1}\left(p_{j}F_{j}\left(x\right)+1-p_{j}\right)=\prod_{j=1}^{n-1}\left(1-p_{j}\right).
\end{equation}
\end{proof}

\begin{lem}
There exists $i$ and $j$ such that $\overline{s}_{i}=\overline{s}_{j}=1-\lambda$.\end{lem}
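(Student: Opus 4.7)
The plan is to let $\bar{s}^{*}=\max_{i}\overline{s}_{i}$ and establish two things: (a) the maximum $\bar{s}^{*}$ is attained by at least two bidders, and (b) any bidder attaining it must have $\overline{s}_i=1-\lambda$. Step (a) rules out a ``solo top'' situation, and step (b) pins down the value via the equilibrium profit equation.

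First I would check $\bar{s}^{*}>0$: if every $\overline{s}_{i}=0$, then every bidder puts all mass at $0$, contradicting Lemma~\ref{lem:pn-1} (at most one bidder has an atom at $0$). Now suppose for contradiction that $\bar{s}^{*}$ is attained by a unique bidder $i$, and set $m=\max_{k\neq i}\overline{s}_{k}<\bar{s}^{*}$. For every $x\in(m,\bar{s}^{*})$ we have $F_{k}(x)=1$ for all $k\neq i$, so
\begin{equation}
\pi_{i}(x)=\prod_{k\neq i}\bigl(p_{k}F_{k}(x)+1-p_{k}\bigr)-x=1-x,
\end{equation}
which is strictly decreasing on $(m,\bar{s}^{*})$. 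However, by the definition of $\overline{s}_{i}$ bidder $i$ must have mass in $(m,\bar{s}^{*})$, so the support of $i$ meets this interval in more than one point; by Lemma~\ref{lem:supportEQ} $\pi_{i}$ is constant on that support, contradicting strict monotonicity. Hence at least two bidders attain $\bar{s}^{*}$.

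Next pick any two such bidders $i$ and $j$. By Lemma~\ref{lem:atom} the CDFs $F_{k}$ are continuous for $x>0$, and by definition $F_{k}(\bar{s}^{*})=1$ for every $k$ with $\overline{s}_{k}\le\bar{s}^{*}$, i.e.\ for every $k$. Taking $x\to\bar{s}^{*}{}^{-}$ along the support of bidder $i$, Lemma~\ref{lem:supportEq2} together with Lemma~\ref{lem:lambda} gives
\begin{equation}
\lambda=\pi_{i}(\bar{s}^{*})=\prod_{k\neq i}\bigl(p_{k}F_{k}(\bar{s}^{*})+1-p_{k}\bigr)-\bar{s}^{*}=1-\bar{s}^{*},
\end{equation}
so $\bar{s}^{*}=1-\lambda$, and the same calculation for $j$ yields $\overline{s}_{j}=1-\lambda$.

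The only subtle point is step (a): one must argue that bidder $i$'s support really contains more than one point in the interval $(m,\bar{s}^{*})$ rather than, say, a single atom at $\bar{s}^{*}$. This is handled by Lemma~\ref{lem:atom} (no atoms above $0$) combined with the fact that $F_{i}$ strictly increases from a value $<1$ at $m$ to $1$ at $\bar{s}^{*}$, which forces the support to have positive Lebesgue measure in $(m,\bar{s}^{*})$. Apart from this observation the argument is essentially a boundary computation of the equilibrium profit at the top of the support.
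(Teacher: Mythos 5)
Your proof is correct and follows essentially the same route as the paper's: rule out a unique top of support, then observe that at a shared top the winning probability is $1$, which forces the equilibrium profit equation $\lambda=\pi_i(\bar{s}^{*})=1-\bar{s}^{*}$. Your write-up is somewhat more careful than the paper's (which handles both cases by informal mass-transfer deviation arguments, whereas you invoke Lemmata~\ref{lem:atom}, \ref{lem:supportEQ} and \ref{lem:supportEq2} explicitly and verify the no-atom subtlety), but the underlying ideas are identical.
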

\begin{proof}
Suppose not. Clearly no bidder has an incentive to bid more then $1-\lambda$.
If there is only one bidder with the highest $\overline{s}$ then they have an
incentive to reduce $\overline{s}$. If there are two bidders with the
highest $\overline{s}$, say $i$ and $j$, and $\overline{s}_{i}=\overline{s}_{j}<1-\lambda$,
then it pays for one of them to transfer mass from an 
	$\varepsilon$-neighborhood below $x$ 
	to some $\delta$-neighborhood above $x$.
 \end{proof}
\begin{lem}
For $x>0$, $i$ and $j$, if $x$ is in the support of bidders $i$
and $j$, then $p_{j}F_{j}\left(x\right)+1-p_{j}=p_{i}F_{i}\left(x\right)+1-p_{i}$
. \label{lem:cdfeq}\end{lem}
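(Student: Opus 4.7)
The plan is to equate the two expressions for the equilibrium payoff at $x$ coming from bidders $i$ and $j$, and then cancel the common factors. First, since $x>0$ lies in the support of bidder $i$, Lemma~\ref{lem:supportEq2} gives $\pi_i(x)=\pi_i^*$, and similarly $\pi_j(x)=\pi_j^*$. By Lemma~\ref{lem:lambda}, $\pi_i^*=\pi_j^*=\lambda$. Lemma~\ref{lem:atom} tells us that no bidder places mass at a positive point, so at $x>0$ ties occur with probability zero and the winning probability of any bidder reduces to the product over the remaining CDF values. Hence
\begin{equation}
\pi_i(x)=\prod_{k\neq i}\left(p_k F_k(x)+1-p_k\right)-x=\lambda
\end{equation}
and analogously for $j$. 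Equating the two expressions and adding $x$ to both sides,
\begin{equation}
\prod_{k\neq i}\left(p_k F_k(x)+1-p_k\right)=\prod_{k\neq j}\left(p_k F_k(x)+1-p_k\right).
\end{equation}

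Next I would cancel the common factor $\prod_{k\neq i,j}\left(p_k F_k(x)+1-p_k\right)$ to obtain the desired identity $p_j F_j(x)+1-p_j = p_i F_i(x)+1-p_i$. The only thing to verify is that this common factor is nonzero, which is the main obstacle since a factor can vanish only when $p_k=1$ and $F_k(x)=0$. By the hypothesis $p_{n-1}<1$, the only index that could have $p_k=1$ is $k=n$, so for every $k\neq i,j,n$ we have $p_k<1$ and the factor $p_kF_k(x)+1-p_k\geq 1-p_k>0$.

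It remains to rule out $F_n(x)=0$ in the case where $n\notin\{i,j\}$ and $p_n=1$. By Lemmas~\ref{lem:maxp} and~\ref{lem:0}, $\underline{s}_n=\min_k \underline{s}_k=0$. By definition of $\underline{s}_n$ as an infimum, there is a sequence $x_m\to 0^+$ with $F_n(x_m)>F_n(0)\geq 0$; since $F_n$ is monotone non-decreasing, for the given $x>0$ we can pick $x_m<x$ and conclude $F_n(x)\geq F_n(x_m)>0$. Therefore the factor $p_nF_n(x)+1-p_n=F_n(x)>0$, so the full product $\prod_{k\neq i,j}\left(p_k F_k(x)+1-p_k\right)$ is strictly positive and cancellation is justified, yielding the claim.
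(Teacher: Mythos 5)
Your proof is correct and follows essentially the same route as the paper's: equate the two expressions for the equilibrium payoff $\lambda$ at $x>0$ (via Lemmata~\ref{lem:atom}, \ref{lem:supportEq2} and~\ref{lem:lambda}) and cancel the common factor $\prod_{k\neq i,j}\left(p_{k}F_{k}\left(x\right)+1-p_{k}\right)$. The only divergence is in how you justify that this factor is nonzero --- you argue via $\underline{s}_{n}=0$ that $F_{n}\left(x\right)>0$, whereas the paper notes more directly that $p_{n-1}<1$ and $\lambda>0$ (indeed $\prod_{k\neq i}\left(p_{k}F_{k}\left(x\right)+1-p_{k}\right)=\lambda+x>0$ already forces every factor to be positive) --- but both justifications are valid.
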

\begin{proof}
Let $x>0$ be in the support of bidders $i$ and $j$. Since $x>0$ From~Lemma \ref{lem:atom} it holds that no bidder has an atom point at
$x$. $x$ is in bidder $i$'s support hence,
\begin{equation}
\lambda=\left(p_{j}F_{j}\left(x\right)+1-p_{j}\right)\prod_{k\neq i,j}\left(p_{k}F_{k}\left(x\right)+1-p_{k}\right)-x.
\end{equation}
 $x$ is also in bidder $j$'s support and therefore,
\begin{equation}
\lambda=\left(p_{i}F_{i}\left(x\right)+1-p_{i}\right)\prod_{k\neq i,j}\left(p_{k}F_{k}\left(x\right)+1-p_{k}\right)-x
\end{equation}
we thus have that 
\begin{equation}
\left(p_{i}F_{i}\left(x\right)+1-p_{i}\right)\prod_{k\neq i,j}\left(p_{k}F_{k}\left(x\right)+1-p_{k}\right)=\left(p_{j}F_{j}\left(x\right)+1-p_{j}\right)\prod_{k\neq i,j}\left(p_{k}F_{k}\left(x\right)+1-p_{k}\right).
\end{equation}
Note that every term in the product is positive as $p_{n-1} <1$ and $\lambda > 0$, therefore
\begin{equation}
 p_{j}F_{j}\left(x\right)+1-p_{j}=p_{i}F_{i}\left(x\right)+1-p_{i}
\end{equation}
and the claim holds.\end{proof}
\begin{lem}
If $F_{i}$ is strictly increasing on some open interval $\left(a,b\right)$,
where $0\le a<b<1-\lambda$, then $F_{i}$ is strictly increasing
on the interval $\left(a,1-\lambda\right]$. \label{lem:stricinc}\end{lem}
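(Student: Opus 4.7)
The plan is by contradiction: assume $F_i$ is strictly increasing on $(a, b)$ but not on $(a, 1-\lambda]$. Since $F_i$ is non-decreasing and continuous at positive points (Lemma~\ref{lem:atom}), a standard argument with the open set of points having a flat neighborhood produces $c \in [b, 1-\lambda)$ and $\delta > 0$ such that $F_i$ is strictly increasing on $(a, c]$ and constant on $[c, c+\delta]$ (take $c$ to be the infimum of that open set, with $(c, c+\delta)$ contained in its first connected component). In particular $c$ lies in bidder $i$'s support, so Lemma~\ref{lem:supportEq2} yields $\pi_i(c) = \lambda$. I now split on whether some other bidder has mass arbitrarily close to $c$ from the right.

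\emph{Case A:} some $j \neq i$ satisfies $F_j(c + \varepsilon) > F_j(c)$ for every $\varepsilon > 0$. Then $j$'s support contains a sequence $y_n \to c^+$ with $\pi_j(y_n) = \lambda$, and by continuity $\pi_j(c) = \lambda$. Equating $\pi_i(c) = \pi_j(c) = \lambda$ and cancelling the common factor $\prod_{k\neq i,j}(p_k F_k(c) + 1 - p_k)$, which is positive since $\lambda + c > 0$, gives $p_j F_j(c) + 1 - p_j = p_i F_i(c) + 1 - p_i$. For any $x \in (c, c+\delta)$ in $j$'s support, solve $\pi_j(x) = \lambda$ for $\prod_{k \neq i,j}(p_k F_k(x) + 1 - p_k)$ and substitute into $\pi_i(x)$; using $F_i(x) = F_i(c)$ the expression collapses to
\begin{equation}
\pi_i(x) - \lambda \;=\; (\lambda + x)\cdot \frac{p_j\bigl(F_j(x) - F_j(c)\bigr)}{p_i F_i(c) + 1 - p_i} \;>\; 0,
\end{equation}
contradicting $\pi_i^* = \lambda \geq \pi_i(x)$.

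\emph{Case B:} every $F_k$ is constant on some common right-neighborhood $(c, c+\varepsilon_0)$. Extend $\varepsilon_0$ to its maximal value $d$; since at least two bidders have $\overline{s}_k = 1-\lambda$ and Lemma~\ref{lem:atom} forbids positive atoms, one has $c + d < 1-\lambda$ and some bidder $k$ has $F_k$ strictly increasing in every right-neighborhood of $c + d$. As in Case~A, continuity forces $\pi_k(c+d) = \lambda$; but on $[c, c+d]$ every $F_l$ is constant, so $\pi_k(x) = \text{const} - x$, giving $\pi_k(c) = \lambda + d > \lambda = \pi_k^*$, contradicting $\pi_k^* \geq \pi_k(c)$. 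The main obstacle is the topological setup of the leftmost flat interval and securing the \emph{strict} inequality in Case~A; once $p_j F_j(c) + 1 - p_j = p_i F_i(c) + 1 - p_i$ is in hand, the improvement comes entirely from $F_j$ moving at $c^+$ while $F_i$ stays fixed.
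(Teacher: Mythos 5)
Your proof is correct and follows essentially the same route as the paper's: at the left endpoint of a flat stretch of $F_i$ you use the identity $p_iF_i(c)+1-p_i=p_jF_j(c)+1-p_j$ together with the continued growth of some $F_j$ to force $\pi_i(x)>\lambda$ just above $c$, which is exactly the paper's contradiction. Your Case B is a more careful spelling-out of the paper's one-line parenthetical claim that some other bidder must keep increasing to the right of the breakpoint.
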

\begin{proof}
Suppose not, then without loss of generality, assume that $F_{i}$ is constant over the interval
$\left(b,c\right)$ for some $c\in\left(b,1-\lambda\right]$. There
must be some $\varepsilon>0$ such that there is at least one bidder, say
$j$, with $F_{j}$ strictly increasing on $\left(b,b+\varepsilon\right)$.
(Otherwise, it pays to some other player to place an atom at $b$.) From Lemma~\ref{lem:supportEq2} $\pi_{i}\left(b\right)=\lambda$ and from Lemma \ref{lem:cdfeq},
$p_{i}F_{i}\left(b\right)+1-p_{i}=p_{j}F_{j}\left(b\right)+1-p_{j}$.
Let $x\in\left(b,b+\varepsilon\right)$, $x$ is not in bidder $i$'s
support hence $\pi_{i}\left(x\right)\leq\pi_{i}\left(b\right)=\lambda$,
$x$ is in bidder $j$'s support hence $\pi_{j}\left(x\right)=\lambda$.
$F_{j}$ is strictly increasing on $\left(b,b+\varepsilon\right)$
hence $F_{j}\left(x\right)>F_{j}\left(b\right)$, $F_{i}$ is constant
on $\left(b,b+\varepsilon\right)$ hence $F_{i}\left(b\right)=F_{i}\left(x\right)$.
Add it all together:
\begin{equation}
\begin{array}{rclc}
\pi_{i}\left(x\right) & \leq & \pi_{j}\left(x\right) & \iff\\
\prod_{k\neq i}\left(p_{k}F_{k}\left(x\right)+1-p_{k}\right)-x & \leq & \prod_{k\neq j}\left(p_{k}F_{k}\left(x\right)+1-p_{k}\right)-x & \iff\\
p_{j}F_{j}\left(x\right)+1-p_{j} & \leq & p_{i}F_{i}\left(x\right)+1-p_{i} & \iff\\
p_{j}F_{j}\left(x\right)+1-p_{j} & \leq & p_{i}F_{i}\left(b\right)+1-p_{i} & \iff\\
p_{j}F_{j}\left(x\right)+1-p_{j} & \leq & p_{j}F_{j}\left(b\right)+1-p_{j} & \iff\\
F_{j}\left(x\right) & \leq & F_{j}\left(b\right).
\end{array}
\end{equation}
A contradiction to the fact that $F_{j}\left(x\right)>F_{j}\left(b\right)$.
Hence, the claim holds. \end{proof}
\begin{lem}
For every $1 \leq i \leq n$, $\overline{s}_{i}=1-\lambda.$ \label{lem:upperBound}\end{lem}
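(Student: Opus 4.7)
My plan is to proceed by contradiction, leveraging the preceding lemma that supplies two bidders $i^{*}, j^{*}$ with $\overline{s}_{i^{*}} = \overline{s}_{j^{*}} = 1-\lambda$. Suppose some bidder $k \notin \{i^{*}, j^{*}\}$ satisfies $\overline{s}_{k} < 1-\lambda$. The core idea is to evaluate the equilibrium profit equations at the threshold $\overline{s}_{k}$ from the perspective of both $k$ and $i^{*}$, and then exploit the simple algebraic fact that $F_{k}(\overline{s}_{k}) = 1$ while $F_{i^{*}}(\overline{s}_{k}) < 1$.

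Concretely, I would first argue that $\pi_{k}$ and $\pi_{i^{*}}$ are continuous at $\overline{s}_{k}$ (Lemma~\ref{lem:atom} rules out atoms of any $F_{j}$ at the positive point $\overline{s}_{k}$), and that both profit functions equal $\lambda$ there: from below for bidder $k$ (by Lemma~\ref{lem:supportEq2} together with the fact that the non-atomic support of $F_{k}$ accumulates at $\overline{s}_{k}$) and from above for bidder $i^{*}$ (whose support certainly contains points arbitrarily close to $\overline{s}_{k}$ from above, since $\overline{s}_{i^{*}} = 1 - \lambda > \overline{s}_{k}$). Setting $\pi_{k}(\overline{s}_{k}) = \pi_{i^{*}}(\overline{s}_{k}) = \lambda$, writing both products out, and cancelling the common factor $\prod_{j \neq k, i^{*}}(p_{j} F_{j}(\overline{s}_{k}) + 1 - p_{j})$, which is strictly positive under $p_{n-1} < 1$, yields
\begin{equation}
p_{i^{*}} F_{i^{*}}(\overline{s}_{k}) + 1 - p_{i^{*}} = p_{k} F_{k}(\overline{s}_{k}) + 1 - p_{k}.
\end{equation}
By right-continuity of $F_{k}$ and the definition of $\overline{s}_{k}$ we have $F_{k}(\overline{s}_{k}) = 1$, so the right-hand side equals $1$. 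Since $p_{i^{*}} \geq p_{1} > 0$, this forces $F_{i^{*}}(\overline{s}_{k}) = 1$; but $\overline{s}_{k} < \overline{s}_{i^{*}}$ implies $F_{i^{*}}(\overline{s}_{k}) < 1$, the desired contradiction.

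The main obstacle is the boundary case $\overline{s}_{k} = 0$, where bidder $k$ places an atom of mass $1$ at $0$ and the ``limit from below'' step is unavailable. Here I would first invoke Lemma~\ref{lem:pnpi} to deduce $p_{k} = p_{n}$, and then use Lemma~\ref{lem:maxp} together with Lemma~\ref{lem:0} to conclude $\underline{s}_{i^{*}} = 0$, so that by Lemma~\ref{lem:supportEq2} one obtains $\pi_{i^{*}}(0^{+}) = \lambda$. Computing this limit explicitly, with $F_{k}(0^{+}) = 1$ and $F_{j}(0^{+}) = 0$ for every $j \neq k$ (no other bidder atoms at $0$ by Lemma~\ref{lem:pn-1}), produces $\lambda = \prod_{j \neq i^{*}, k}(1-p_{j})$. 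Comparing this with Lemma~\ref{lem:lam1} and using $p_{k} = p_{n}$ collapses to $p_{i^{*}} = 0$ (with a symmetric case analysis if $i^{*} = n$ or $k = n$), contradicting $p_{1} > 0$. The delicate point throughout is ensuring one can take the relevant one-sided limits along the support; once continuity is secured by Lemma~\ref{lem:atom} and the existence of two bidders reaching $1-\lambda$ from the preceding lemma, the remainder is essentially an algebraic identification.
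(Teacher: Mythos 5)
Your route is genuinely different from the paper's. The paper gets the result almost for free from Lemma~\ref{lem:stricinc}: that lemma forces $\overline{s}_{i}\in\{0,1-\lambda\}$, and the case $\overline{s}_{i}=0$ (an atom of mass $1$ at $0$) is then eliminated by the profit comparison $\lim_{x\rightarrow 0^{+}}\pi_{j}(x)=\prod_{l\neq i,j}(1-p_{l})>\lambda$. You instead compare a putative offender $k$ against one of the two bidders reaching $1-\lambda$ at the single point $\overline{s}_{k}$ and force $F_{i^{*}}(\overline{s}_{k})=1$. That strategy is workable and mirrors the paper's proof of Lemma~\ref{lem:lambda}, but two of your justifications do not hold as stated.

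First, in the main case you assert $\pi_{i^{*}}(\overline{s}_{k})=\lambda$ on the grounds that the support of $F_{i^{*}}$ ``certainly contains points arbitrarily close to $\overline{s}_{k}$ from above, since $\overline{s}_{i^{*}}=1-\lambda>\overline{s}_{k}$''. That inference is a non sequitur: knowing only that the support of $F_{i^{*}}$ reaches up to $1-\lambda$ does not preclude a gap of that support straddling $\overline{s}_{k}$; excluding such gaps is precisely the content of Lemma~\ref{lem:stricinc}, which you never invoke. The repair is cheap: you only need the one-sided bound $\pi_{i^{*}}(\overline{s}_{k})\leq\pi_{i^{*}}^{*}=\lambda=\pi_{k}(\overline{s}_{k})$, which after cancelling the common positive factor gives $1=p_{k}F_{k}(\overline{s}_{k})+1-p_{k}\leq p_{i^{*}}F_{i^{*}}(\overline{s}_{k})+1-p_{i^{*}}$, hence $F_{i^{*}}(\overline{s}_{k})\geq 1$ and the same contradiction. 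Second, in the boundary case $\overline{s}_{k}=0$ you deduce $\underline{s}_{i^{*}}=0$ from Lemmata~\ref{lem:maxp} and~\ref{lem:0}; those lemmata yield $\underline{s}_{j}=0$ only for bidders with $p_{j}=p_{n}$, and nothing ties $i^{*}$ to that class (in the constructed equilibrium the bidder with the smallest $p$ has $\underline{s}>0$, yet its CDF reaches $1$ only at $1-\lambda$, so it could perfectly well serve as $i^{*}$). Replace $i^{*}$ here by any bidder $j\neq k$ with $\underline{s}_{j}=0$ and no atom at $0$ --- such a $j$ exists by Lemmata~\ref{lem:2}, \ref{lem:0} and~\ref{lem:pn-1} --- and the computation $\lambda=\lim_{x\rightarrow 0^{+}}\pi_{j}(x)=\prod_{l\neq j,k}(1-p_{l})$, combined with Lemma~\ref{lem:lam1} and $p_{1}>0$, gives the contradiction exactly as you intend (this is essentially what the paper's own proof does for this case). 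With these two repairs your argument goes through.
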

\begin{proof}
Immediate from Lemma \ref{lem:stricinc}, we have that $\overline{s}_{i}\in\left\{ 0,1-\lambda\right\} $.
If $\overline{s}_{i}=0$, it means that $p_{i}=p_{n}$, bidder
$i$ has an atomic point at $0$ of $1$ and $\lambda = \prod_{k \neq i} \left(i-p_k\right)$.
Now, 
$\lim_{x\rightarrow0^{+}}\pi_{j}(x)=\prod_{k\neq i,j}\left(1-p_{k}\right)>\lambda$
 and therefore 
it pays for $j$ to transfer mass to a 
$\delta$-neighborhood above $0$. 
 Hence,
 for every $i$: $\overline{s}_{i}=1-\lambda$. \end{proof}
\begin{lem}
If $x>0$ is in bidder $i$'s support, then there must be $j\neq i$ such that
$x$ is in bidder $j$'s support. \label{lem:jointsup}\end{lem}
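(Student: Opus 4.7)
The plan is to prove this by contradiction: suppose $x>0$ lies in bidder $i$'s support but in no other bidder's support, and derive a profitable deviation for bidder $i$. The intuition is that if $i$ is the only bidder whose CDF is strictly increasing near $x$, then the winning probability $\Pr(i\,\text{wins}\mid y)$ does not vary in a neighborhood of $x$, so the profit function $\pi_i(y)=\Pr(i\,\text{wins}\mid y)-y$ is strictly decreasing there, and $i$ would prefer to shift mass from $x$ to a slightly smaller bid.

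Concretely, I would assume for contradiction that $x > 0$ is in the support of bidder $i$ but for every $j \neq i$, $x$ is not in the support of $F_j$. By the definition of support, for each such $j$ there is an open interval $(x-\varepsilon_j, x+\varepsilon_j)$ on which $F_j$ is constant. Taking $\varepsilon = \min_{j\neq i}\varepsilon_j > 0$ yields an open interval $I = (x-\varepsilon, x+\varepsilon)$ on which $F_j$ is constant for every $j \neq i$. By Lemma \ref{lem:atom}, no bidder has an atom at any point of $I$ (since $I \subset (0,1]$), so the product
\begin{equation}
\Pr(i\,\text{wins}\mid y)=\prod_{j\neq i}\bigl(p_j F_j(y)+1-p_j\bigr)
\end{equation}
equals a constant $c$ on $I$. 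Hence $\pi_i(y) = c - y$ is strictly decreasing on $I$.

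Now by Lemma \ref{lem:supportEq2}, $\pi_i(x) = \pi_i^* = \lambda$. Pick any $y \in (x-\varepsilon, x)$; then $\pi_i(y) = c - y > c - x = \pi_i(x) = \lambda$. Since $x$ is in the support of bidder $i$, any neighborhood of $x$ carries positive $F_i$-mass, and $i$ strictly profits by transferring a small mass from a $\delta$-neighborhood around $x$ to a $\delta$-neighborhood around $y$, contradicting the equilibrium property. Therefore some $j \neq i$ must also have $x$ in its support.

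The only delicate point is handling the definition of support correctly and ensuring $I$ is genuinely open around $x$ (which requires using a finite minimum of the $\varepsilon_j$'s, valid since the number of bidders is finite). Once this is in place, the constancy of $\Pr(i\,\text{wins}\mid y)$ on $I$ combined with the strict monotonicity of $-y$ completes the argument; no subtle case analysis of atoms is needed thanks to Lemma \ref{lem:atom}.
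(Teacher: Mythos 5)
Your proof is correct, but it takes a more direct route than the paper's. The paper argues in two steps: first, a mass-shifting deviation shows that some other bidder $j$ must place support at \emph{some} point $y\in(0,x)$ (otherwise bidder $i$ would move mass below $x$); then it invokes Lemma~\ref{lem:stricinc} (no gaps: a CDF that is strictly increasing somewhere is strictly increasing all the way up to $1-\lambda$) to lift $j$'s support from $y$ up to $x\leq 1-\lambda$. You instead argue purely locally: if no $j\neq i$ has $x$ in its support, all the $F_j$ are flat on a common open neighborhood $I\ni x$ (finitely many bidders, so the minimum $\varepsilon$ is positive), hence $\Pr(i\,\text{wins}\mid y)$ is constant and $\pi_i(y)=c-y$ is strictly decreasing on $I$, contradicting $\pi_i(x)=\pi_i^*$ from Lemma~\ref{lem:supportEq2}. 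Your version is shorter, avoids the dependence on Lemma~\ref{lem:stricinc}, and cleanly isolates the one fact actually needed (constancy of the opponents' aggregate CDF near $x$); the paper's version does slightly more work but fits its running theme of propagating supports upward via Lemma~\ref{lem:stricinc}, which it reuses in Lemma~\ref{lem:full}. One small remark: once you have a bid $y$ with $\pi_i(y)>\pi_i^*$, you do not even need the mass-transfer phrasing --- the existence of such a $y$ already contradicts $F_i$ being a best response.
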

\begin{proof}
Let $x>0$ be in bidder $i$'s support. If
for every $y\in\left(0,x\right)$ and for every $j\neq i$: $y$
is not in bidder $j$'s support, then
it pays for $i$ to transfer mass from an 
$\varepsilon$-neighborhood above $x$
to a $\delta$-neighborhood below $x$.
 Hence, there must exists
$j\neq i$ and $y\in\left(0,x\right)$ such that $y$ is in bidder
$j$'s support, from Lemma \ref{lem:atom} $\alpha_{j}\left(y\right)=0$,
hence there is a neighborhood below $y$ such that $F_j$ is strictly increasing in that neighborhood.
From Lemma \ref{lem:stricinc}, $\left(y,1-\lambda\right]$ is
in bidder $j$'s support. $x$ is in bidder $i$'s support so $x\leq1-\lambda$,
hence $x$ is in bidder $j$'s support and the claim follows.\end{proof}
\begin{lem}
There are at least two bidders with strictly increasing CDFs on
$\left(0,1-\lambda\right]$.

\label{lem:full}\end{lem}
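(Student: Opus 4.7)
The plan is to combine Lemma~\ref{lem:0} (which asserts $\min_{i}\underline{s}_{i}=0$) with Lemma~\ref{lem:2} (which gives $|\arg\min_{i}\underline{s}_{i}|\geq 2$) to pin down two bidders, say $i$ and $j$, satisfying $\underline{s}_{i}=\underline{s}_{j}=0$. These will be the two bidders whose CDFs are claimed to be strictly increasing on all of $(0,1-\lambda]$; what then remains is to verify that $\underline{s}_{i}=0$ actually forces $F_{i}$ to be strictly increasing on the entire interval $(0,1-\lambda]$ rather than merely from some positive starting point onward.

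For such a bidder $i$, the definition $\underline{s}_{i}=0$ means that for every $\varepsilon>0$ there exists some $x\in(0,\varepsilon)$ with $F_{i}(x)>F_{i}(0)$. Lemma~\ref{lem:atom} rules out atomic points at any $x>0$, so $F_{i}$ is continuous on $(0,1]$; together with the previous observation this forces $F_{i}$ to be strictly increasing on some small open subinterval $(a,b)\subset(0,\varepsilon)$. I would then apply Lemma~\ref{lem:stricinc}, which propagates strict monotonicity: from strict increase on $(a,b)$ with $b<1-\lambda$ we obtain strict increase on the whole interval $(a,1-\lambda]$. Since $a$ can be chosen arbitrarily close to $0$, $F_{i}$ is in fact strictly increasing on $(0,1-\lambda]$, and the identical argument applied to $j$ yields a second such bidder.

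The main obstacle I expect is cleanly justifying the passage from the purely infimum-based statement $\underline{s}_{i}=0$ to the existence of a genuinely strict-increase open subinterval $(a,b)\subset(0,\varepsilon)$, rather than merely a single point where $F_{i}$ exceeds $F_{i}(0)$. This relies on chaining continuity of $F_{i}$ on $(0,1]$ (from Lemma~\ref{lem:atom}) with the equilibrium-profit constancy provided by Lemma~\ref{lem:supportEq2}: a flat segment of $F_{i}$ immediately to the right of $0$ would be incompatible with the identity $\pi_{i}(x)=\lambda$ throughout bidder $i$'s support, since on such a flat segment $\Pr(i\text{ wins}\mid\cdot)$ would vary continuously while $F_{i}$ did not, leaving room to shift mass off the flat region for profit. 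Once this subinterval is secured, Lemma~\ref{lem:stricinc} makes the extension to $(a,1-\lambda]$ immediate, and letting $a\to 0^{+}$ finishes the argument for both bidders simultaneously.
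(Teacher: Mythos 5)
Your proof is correct and follows essentially the same route as the paper's: both rest on Lemma~\ref{lem:0}, Lemma~\ref{lem:2}, and the propagation Lemma~\ref{lem:stricinc}. The only real difference is that the paper extracts one atom-free minimizer via Lemma~\ref{lem:pn-1} and then obtains the second bidder through Lemma~\ref{lem:jointsup}, whereas you apply the same argument directly to both elements of $\arg\min_{i}\underline{s}_{i}$, which is slightly cleaner (a possible atom at $0$ is harmless for your argument, since $\underline{s}_{i}$ is defined to exclude it). The delicate step you flag --- passing from $\underline{s}_{i}=0$ to an open interval of strict increase so that Lemma~\ref{lem:stricinc} can be invoked --- is present but left implicit in the paper's own proof as well, so your explicit treatment of it via the constancy of $\pi_{i}$ on the support (Lemma~\ref{lem:supportEq2}) is, if anything, more careful than the original.
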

\begin{proof}
From Lemmata \ref{lem:2} and \ref{lem:pn-1}, there is $j\in\arg\min_{i}\underline{s}_{i}$
with $\alpha_{j}\left(0\right)=0$, from Lemma \ref{lem:stricinc}
$F_{j}$ is strictly increasing on $\left(0,1-\lambda\right]$, from
Lemma \ref{lem:jointsup} there is another bidder with is strictly
increasing CDF on $\left(0,1-\lambda\right]$.\end{proof}
\begin{lem}
There exists a continuous function $z:\left(0,1-\lambda\right]\rightarrow\left[0,1\right]$
such that if $x$ is in bidder $i$'s support then $F_{i}\left(x\right)=\frac{z\left(x\right)+p_{i}-1}{p_{i}}$.
\label{lem:conZ}\end{lem}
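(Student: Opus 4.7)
The plan is to construct $z$ explicitly by picking a single ``reference'' bidder whose support covers the entire interval $(0,1-\lambda]$, and to define $z$ in terms of that bidder's CDF. Consistency of this definition with the formula $F_i(x) = (z(x)+p_i-1)/p_i$ for every other bidder will then be a direct consequence of Lemma~\ref{lem:cdfeq}.

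First, I would invoke Lemma~\ref{lem:full} to obtain a bidder $j_0$ whose CDF $F_{j_0}$ is strictly increasing on $(0,1-\lambda]$, so that the entire interval lies in $j_0$'s support. Define
\begin{equation}
z(x) := p_{j_0} F_{j_0}(x) + 1 - p_{j_0}, \qquad x \in (0,1-\lambda].
\end{equation}
Since $F_{j_0}$ is right-continuous by definition and has no atomic points on $(0,1-\lambda]$ by Lemma~\ref{lem:atom}, $F_{j_0}$ is continuous there, and hence so is $z$. Since $F_{j_0}(x) \in [0,1]$ and $p_{j_0} \in (0,1]$ (using the standing assumption $p_1 > 0$), we have $z(x) \in [1-p_{j_0},1] \subseteq [0,1]$, so the codomain is as required.

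To verify the claimed identity, suppose $x \in (0,1-\lambda]$ is in bidder $i$'s support. If $i = j_0$, solving the definition of $z$ for $F_{j_0}(x)$ gives the formula immediately. Otherwise $i \neq j_0$, and since $x$ lies in both $i$'s and $j_0$'s supports, Lemma~\ref{lem:cdfeq} yields
\begin{equation}
p_i F_i(x) + 1 - p_i \;=\; p_{j_0} F_{j_0}(x) + 1 - p_{j_0} \;=\; z(x),
\end{equation}
from which $F_i(x) = (z(x)+p_i-1)/p_i$ follows by dividing through by $p_i > 0$.

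There is no real obstacle here: the proof is essentially a packaging step, bundling Lemma~\ref{lem:cdfeq} (agreement of the shifted-weighted CDFs on overlapping supports) together with Lemma~\ref{lem:full} (existence of a bidder whose support is all of $(0,1-\lambda]$) into a single continuous function on that interval. The only minor care needed is to note that $p_{j_0} > 0$, so that $z$ is well-defined via $F_{j_0}$ and the formula for $F_i$ can be recovered by division, both of which are free from the section's hypotheses.
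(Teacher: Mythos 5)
Your proposal is correct and follows essentially the same route as the paper: invoke Lemma~\ref{lem:full} to obtain a bidder whose CDF is strictly increasing on all of $\left(0,1-\lambda\right]$, define $z$ via that bidder's shifted-weighted CDF, and conclude by Lemma~\ref{lem:cdfeq}. Your added remarks on continuity (via Lemma~\ref{lem:atom}) and on $p_{j_0}>0$ are details the paper leaves implicit, but the argument is the same.
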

\begin{proof}
From Lemma \ref{lem:full} there is $j$, such that $F_{j}$ is strictly
increasing on $\left(0,1-\lambda\right]$. Let us define for every $x\in\left(0,1-\lambda\right]$:
$z\left(x\right):=p_{j}F_{j}\left(x\right)+1-p_{j}$. For every $i$
and $x$ in bidder $i$'s support, from Lemma \ref{lem:cdfeq} it holds that
$F_{i}\left(x\right)=\frac{z\left(x\right)+p_{i}-1}{p_{i}}$ and the
claim holds. \end{proof}
\begin{lem}
\label{lem:2pn}If $p_{n}=p_{n-1}<1$ and $F_{i}$ is strictly increasing
on the interval $\left(0,1-\lambda\right]$ then $p_{i}=p_{n}$. \end{lem}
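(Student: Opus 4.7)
The plan is to argue by contradiction: suppose $p_i < p_n$, and derive a contradiction by evaluating the universal function $z$ of Lemma~\ref{lem:conZ} in the limit $x \to 0^+$ via two different bidders, namely $i$ itself and some $j \in \{n-1,n\}$ with $p_j = p_n$.

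First, since $p_i < p_n$, the contrapositive of Lemma~\ref{lem:pnpi} forces $\alpha_i(0) = 0$, so $F_i(0) = 0$ and, by right-continuity, $F_i(x) \to 0$ as $x \to 0^+$. The hypothesis that $F_i$ is strictly increasing on $(0,1-\lambda]$ puts every point of that interval into bidder $i$'s support, so Lemma~\ref{lem:conZ} yields $z(x) = p_i F_i(x) + 1 - p_i$ on $(0,1-\lambda]$ and hence $\lim_{x \to 0^+} z(x) = 1 - p_i$.

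Next, I would locate a bidder $j \in \{n-1,n\}$ with $p_j = p_n$, $\alpha_j(0) = 0$, and $F_j$ strictly increasing on $(0,1-\lambda]$. By Lemma~\ref{lem:maxp}, $\underline{s}_{n-1} = \underline{s}_n = 0$; by Lemma~\ref{lem:pn-1}, at most one of the two has an atom at $0$, so at least one of them (call it $j$) satisfies $\alpha_j(0) = 0$. Then $F_j(0) = 0$ and $F_j(y) > 0$ for every $y > 0$, so $F_j$'s support accumulates at $0$. From this I expect to conclude that $F_j$ is strictly increasing on some $(0,\delta)$, whereupon Lemma~\ref{lem:stricinc} extends strict monotonicity to all of $(0,1-\lambda]$, and a second application of Lemma~\ref{lem:conZ} gives $\lim_{x \to 0^+} z(x) = 1 - p_j = 1 - p_n$.

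Equating the two values of $z(0^+)$ yields $1 - p_i = 1 - p_n$, contradicting $p_i < p_n$ and forcing $p_i = p_n$. The main obstacle is the middle step: promoting ``support of $F_j$ accumulates at $0$'' into genuine strict monotonicity on a right-neighborhood of $0$, so that Lemma~\ref{lem:stricinc} can be invoked. Pure monotonicity of $F_j$ is not enough (a Cantor-like CDF would satisfy the hypotheses formally); the argument will have to exploit the equilibrium best-response condition together with Lemma~\ref{lem:jointsup}, ruling out gaps in $F_j$'s support immediately above $0$. Once that is secured, the two limit evaluations for $z$ clash immediately and the contradiction is complete.
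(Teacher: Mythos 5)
Your overall strategy is the paper's: compare the limit as $x\to 0^{+}$ of the common quantity $pF(x)+1-p$ computed once through bidder $i$ and once through a bidder $j$ with $p_{j}=p_{n}$, use that neither has an atom at $0$, and conclude $1-p_{i}=1-p_{n}$. The paper routes this through Lemma~\ref{lem:cdfeq} rather than the function $z$ of Lemma~\ref{lem:conZ}, but that difference is cosmetic; your steps (1), (2) and the final contradiction are exactly the paper's computation.

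The one substantive issue is the obstacle you flag yourself: promoting ``the support of $F_{j}$ accumulates at $0$'' to ``$F_{j}$ is strictly increasing on some $(0,\delta)$'' so that Lemma~\ref{lem:stricinc} applies. You are right that this implication is not automatic, but the detour is unnecessary, so the gap is self-imposed rather than fatal. You do not need $F_{j}$ to be strictly increasing on an interval, nor do you need $z$ to be represented via $j$ on a whole right-neighborhood of $0$: since $\underline{s}_{j}=0$ and $\alpha_{j}(0)=0$, there is a sequence $x_{k}\downarrow 0$ of points in bidder $j$'s support, and each $x_{k}$ lies in bidder $i$'s support automatically because $F_{i}$ is strictly increasing on all of $\left(0,1-\lambda\right]$. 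Applying Lemma~\ref{lem:cdfeq} at each $x_{k}$ gives $p_{j}F_{j}(x_{k})+1-p_{j}=p_{i}F_{i}(x_{k})+1-p_{i}$, and letting $k\to\infty$, right-continuity of the CDFs at $0$ together with $F_{i}(0)=F_{j}(0)=0$ yields $1-p_{j}=1-p_{i}$, i.e.\ $p_{i}=p_{n}$. This is precisely what the paper's proof does: it first rules out any atom at $0$ (a bidder with an atom would have $p=p_{n}$ by Lemma~\ref{lem:pnpi}, and then every bidder with $p=p_{n}$ would share that atom by Lemma~\ref{lem:cdfeq}, contradicting Lemma~\ref{lem:pn-1}), and then equates the two limits. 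With that replacement your argument closes; as written, the middle step is a genuine unproved claim, though at the same level of informality the paper itself tolerates elsewhere (e.g.\ in the proofs of Lemmas~\ref{lem:jointsup} and~\ref{lem:full}).
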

\begin{proof}
	Since there are at least two bidders
	with $p=p_{n}$, and 	
	from Lemma \ref{lem:cdfeq} it holds that for every $j$ if $p_{j}=p_{n}$
	then $F_{j}=F_{n}$; we may conclude that there is 
	no bidder with an atomic point at $0$.
	That is,
	 $ \lim_{x\rightarrow0^{+}}F_{n}\left(x\right)=F_{n}\left(0\right)=0$.
	 $F_{i}$ is strictly increasing on $\left(0,1-\lambda\right]$, and
	 $F_{i}\left(0\right)=0$ hence $\lim_{x\rightarrow0^{+}}F_{i}\left(x\right)=F_{i}\left(0\right)=0$.
	 From Lemma \ref{lem:cdfeq} it holds that:	
\begin{equation}
\begin{array}{rclc}
p_{j}\lim_{x\rightarrow0^{+}}F_{j}\left(x\right)+1-p_{j} & = & p_{i}\lim_{x\rightarrow0^{+}}F_{i}\left(x\right)+1-p_{i} & \iff\\
p_{j}F_{j}\left(0\right)+1-p_{j} & = & p_{i}F_{i}\left(0\right)+1-p_{i} & \iff\\
1-p_{j} & = & 1-p_{i} & \iff\\
p_{i} & = & p_{j}
\end{array}
\end{equation}
and the claim follows.

\end{proof}

\begin{lem}
If $p_{n}\neq p_{n-1}<1$ bidder $n$ has an atomic point in the distribution
at $0$ of $1-\frac{p_{n-1}}{p_{n}}$.\end{lem}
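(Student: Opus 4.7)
The plan is to read off the atom size from Lemma~\ref{lem:conZ} by comparing bidders $n-1$ and $n$ on a right-neighborhood of $0$. Both bidders will turn out to have $\underline{s}=0$, but only bidder $n$ may carry mass at $0$; hence the CDF of bidder $n-1$ pins down $\lim_{x\to 0^+} z(x)$, after which bidder $n$'s CDF reveals the atom size directly.

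First I would show that $\underline{s}_{n-1}=0$. The contrapositive of Lemma~\ref{lem:sisj} gives the chain $\underline{s}_1 \geq \underline{s}_2 \geq \ldots \geq \underline{s}_n$, and Lemmas~\ref{lem:maxp} and~\ref{lem:0} force $\underline{s}_n = 0$. Since $p_n$ is a strict maximum, Lemma~\ref{lem:maxp} places only bidder $n$ into $\arg\min_i \underline{s}_i$ on $p$-value grounds, yet Lemma~\ref{lem:2} requires $|\arg\min_i\underline{s}_i|\geq 2$; by the monotone ordering the second element must be bidder $n-1$. Moreover, $p_{n-1}<p_n$ combined with Lemma~\ref{lem:pnpi} forbids an atom at $0$ for bidder $n-1$, so $F_{n-1}(0)=0$ and, by right-continuity, $\lim_{x\to 0^+} F_{n-1}(x) = 0$. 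The argument already executed inside the proof of Lemma~\ref{lem:full}, via Lemma~\ref{lem:stricinc}, then upgrades this to: $F_{n-1}$ is strictly increasing on the whole of $(0,1-\lambda]$.

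From here the conclusion is a short limit computation. Every $x\in(0,1-\lambda]$ lies in bidder $n-1$'s support, so Lemma~\ref{lem:conZ} gives $F_{n-1}(x)=(z(x)+p_{n-1}-1)/p_{n-1}$, and continuity of $z$ forces $\lim_{x\to 0^+} z(x) = 1-p_{n-1}$. Since $\underline{s}_n=0$, I can pick a sequence $x_k\downarrow 0$ lying in bidder $n$'s support (take e.g.\ $x_k=\inf\{x:F_n(x)\geq F_n(0)+\tfrac1k\}$, which by Lemma~\ref{lem:atom} and monotonicity satisfies $F_n(x_k-\eta)<F_n(x_k)$ for every $\eta>0$). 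On that sequence, Lemma~\ref{lem:conZ} yields $F_n(x_k)=(z(x_k)+p_n-1)/p_n \to (p_n-p_{n-1})/p_n = 1-p_{n-1}/p_n$. Right-continuity of $F_n$ identifies this limit with $F_n(0)$, and since $F_n$ vanishes for $x<0$, the atomic mass at $0$ is exactly $\alpha_n(0)=F_n(0)=1-p_{n-1}/p_n$. The only mildly delicate point is the appeal to Lemma~\ref{lem:stricinc} in Step~1, which is why I flag it explicitly; everything else is a direct combination of Lemmas~\ref{lem:conZ} and~\ref{lem:pnpi} with right-continuity.
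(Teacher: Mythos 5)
Your proof is correct and takes essentially the same route as the paper's: establish $\underline{s}_{n-1}=\underline{s}_{n}=0$, rule out an atom at $0$ for bidder $n-1$ via Lemma~\ref{lem:pnpi}, and let $x\to 0^{+}$ in the common relation linking $F_{n-1}$ and $F_{n}$ (you via Lemma~\ref{lem:conZ}, the paper directly via Lemma~\ref{lem:cdfeq}, which are interchangeable here) to read off $\alpha_{n}(0)=1-\frac{p_{n-1}}{p_{n}}$. The only cosmetic difference is that you approach $0$ along a sequence in bidder $n$'s support, whereas the paper first asserts via Lemmata~\ref{lem:upperBound} and~\ref{lem:stricinc} that $F_{n}$ is strictly increasing on all of $\left(0,1-\lambda\right]$; both are valid.
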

\begin{proof}
From Lemma~\ref{lem:sOreder}, Lemma~\ref{lem:2} and Lemma~\ref{lem:0}, $\underline{s}_{n}=\underline{s}_{n-1}=0$.
From Lemma~\ref{lem:upperBound} and Lemma~\ref{lem:stricinc} $F_{n}$
and $F_{n-1}$ are strictly increasing on $\left(0,1-\lambda\right]$.
From Lemma \ref{lem:cdfeq} for every $x\in\left(0,1-\lambda\right]$
it holds that 
\begin{equation}
p_{n}F_{n}\left(x\right)+1-p_{n}=p_{n-1}F_{n-1}\left(x\right)+1-p_{n-1},
\end{equation}
and from Lemma~\ref{lem:pnpi} $F_{n-1}(0)=0$.
That is, 
\begin{equation}
\begin{array}{rclc}
\lim_{x\rightarrow0^{+}}p_{n}F_{n}\left(x\right)+1-p_{n} & = & \lim_{x\rightarrow0^{+}}p_{n-1}F_{n-1}\left(x\right)+1-p_{n-1} & \iff\\
p_{n}F_{n}\left(0\right)+1-p_{n} & = & p_{n-1}F_{n-1}\left(0\right)+1-p_{n-1} & \iff\\
p_{n}\alpha_{n}\left(0\right)+1-p_{n} & = & 1-p_{n-1}. 
\end{array}
\end{equation}
Hence
$\alpha_{n}\left(0\right)=1-\frac{p_{n-1}}{p_{n}}$.
\end{proof}
We are now, finally, ready to prove Theorem \ref{thm:positiveProfit}.
\begin{customthm}{\ref{thm:positiveProfit}}
	In common values all-pay auction when the item value is $1$, if $p_{n-1}<1$ then there is a
	unique Nash equilibrium, in which the expected profit of every participating
	bidder is $\prod_{j=1}^{n-1}\left(1-p_{j}\right)$. Furthermore, there
	exists a continuous function $z:\left[0,1 - \prod_{j=1}^{n-1}\left(1-p_{j}\right) \right]\rightarrow\left[0,1\right]$,
	such that when a bidder $i$, has a positive density over an interval,
	they bid according to $F_{i}\left(x\right)=\frac{z\left(x\right)+p_{i}-1}{p_{i}}$
	over that interval, and if $p_{i}=p_{j}$ then $F_{i}=F_{j}$.	
\end{customthm}

\begin{proof}
Let $0<p_{1}\leq\dots\leq p_{n}$, such that $p_{n-1}<1$. From Lemma~\ref{lem:lambda} and Lemma~\ref{lem:lam1} we have that in any Nash equilibrium
the expected profit of every participating bidder is $\lambda=\prod_{j=1}^{n-1}\left(1-p_{j}\right)$,
from Lemma \ref{lem:upperBound} we have that for every $i$: $\overline{s}_{i}=1-\lambda$,
and from Lemma \ref{lem:sisj} we have that $1-\lambda\geq\underline{s}_{1}\geq\dots\geq\underline{s}_{n}=0$. 

There are two possible cases, either $p_{n}=p_{n-1}$ or $p_{n}>p_{n-1}$;
\begin{itemize}
\item If $p_{n}=p_{n-1}$, let $m_{n}$ be the minimal index such that $p_{m_{n}}=p_{n}$.
Since $p_{n}=p_{n-1}$ there is no bidder with an atomic point at $0$.
From Lemma \ref{lem:2pn} it holds that $\underline{s}_{m_{n}-1}>\underline{s}_{m_{n}}=0$,
adding it with Lemmas~\ref{lem:sOreder} and Lemma~\ref{lem:sisj} we have
that $p_{i}<p_{j}\iff\underline{s}_{i}>\underline{s}_{j}$.

For every $i$: $F_{i}$ is strictly increasing in $\left(\underline{s}_{1},1-\lambda\right]$,
since the expected profit of each participating bidder is $\lambda$,
we have that for every $x\in \left(\underline{s}_{1},1-\lambda\right]$
\begin{equation}
\begin{array}{rccl}
\lambda= & \pi_{i}\left(x\right) & = & \prod_{j\neq1}\left(1-p_{j}+p_{j}F_{j}\left(x\right)\right)-x\\
& & = & \prod_{j\neq1}\left(1-p_{j}+p_{j}\frac{z\left(x\right)+p_{j}-1}{p_{j}}\right)-x\\
& & = & z\left(x\right)^{n-1}-x.
\end{array}
\end{equation}
Therefore for every $x \in \left(\underline{s}_{1},1-\lambda\right]$ and for every $i$:
 $z\left(x\right)=\left(x+\lambda\right)^{\frac{1}{n-1}}$ and
 $F_{i}\left(x\right)=\frac{\left(x+\lambda\right)^{\frac{1}{n-1}}+p_{i}-1}{p_{i}}$.
Since there is no bidder with an atomic point at $0$, it holds that $F_{1}\left(\underline{s}_{1}\right)=0$,
hence it follows that $\underline{s}_{1}=\left(1-p_{1}\right)^{n-1}-\lambda$.

Let $m_{2}$ be the minimal index such that $p_{m_{2}}>p_{1}$. From
Lemma \ref{lem:eq} we have that for every $j < m_{2}$ it holds that $\underline{s}_{j}=\underline{s}_{1}$
and $\underline{s}_{m_{2}}<\underline{s}_{1}$. 

For every
$i < m_{2}$ and for every $x\leq\underline{s}_{1}$, $F_{i}\left(x\right)=0$;
for every $i \geq m_{2}$ and for every $x\in\left(\underline{s}_{m_2},\underline{s}_{1}\right]$,
 $F_{i}\left(x\right)>0$ and $\pi_i\left(x\right) =\lambda$. 
 \begin{equation}
 \begin{array}{rccl}
 \lambda= & \pi_{i}\left(x\right) & = & \prod_{j\neq1}\left(1-p_{j}+p_{j}F_{j}\left(x\right)\right)-x\\
 & & = & \prod_{j<k_{2}}\left(1-p_{j}+p_{j}F_{j}\left(x\right)\right)\prod_{
 	\tiny{
 		\begin{array}{c}
 		j\geq m_{2}\\
 		j\neq i
 		\end{array}}}\left(1-p_{j}+p_{j}F_{j}\left(x\right)\right)-x\\
 & & = & \prod_{j<m_{2}}\left(1-p_{j}\right)\prod_{
 	\tiny{
 	\begin{array}{c}
 	j\geq m_{2}\\
 	j\neq i
 	\end{array}}}\left(1-p_{j}+p_{j}\frac{z\left(x\right)+p_{j}-1}{p_{j}}\right)-x\\
 & & = & \left(1-p_{j}\right)^{m_{2}-1}z\left(x\right)^{n-m_{2}}-x,
 \end{array}
 \end{equation}
 Therefore for every $x\in\left(\underline{s}_{m_2},\underline{s}_{1}\right]$ and for every $i \geq m_2$:
$z\left(x\right)=\left(\frac{x+\lambda}{\left(1-p_{j}\right)^{m_{2}-1}}\right)^{\frac{1}{n-m_2}}$
 and
$F_{i}\left(x\right)=\frac{\left(\frac{x+\lambda}{\left(1-p_{j}\right)^{m_{2}-1}}\right)^{\frac{1}{n-m_2}}+1-p_{i}}{p_{i}}$.
$F_{m_2}\left(\underline{s}_{m_2}\right)=0$, hence it follows
that $\underline{s}_{m_2}=\left(1-p_{m_2}\right)^{n-m_2}\left(1-p_{1}\right)^{m_2-1}-\lambda$.

In the general case, let $k$ be an index such that $\underline{s}_{k-1}>\underline{s}_{k}$,
that is $p_{k-1}<p_{k}$. Since for every $i<k$ and for every $x\in\left(\underline{s}_{k},\underline{s}_{k-1}\right]$:
$F_{j}\left(x\right)=0$; for every $i\geq k$ and for every $x\in\left(\underline{s}_{k},\underline{s}_{k-1}\right]$:
 $F_{i}\left(x\right)>0$ and $\pi_i(x) = \lambda$. 
 \begin{equation}
 \begin{array}{rccl}
 	\lambda= & \pi_{i}\left(x\right) & = & \prod_{j\neq1}\left(1-p_{j}+p_{j}F_{j}\left(x\right)\right)-x\\
 	& & = & \prod_{j<k}\left(1-p_{j}+p_{j}F_{j}\left(x\right)\right)\prod_{
 		\tiny{
 		\begin{array}{c}
 			j\geq k\\
 			j\neq i
 		\end{array}}}\left(1-p_{j}+p_{j}F_{j}\left(x\right)\right)-x\\
 		& & = & \prod_{j<k}\left(1-p_{j}\right)\prod_{
 			\tiny{
 			\begin{array}{c}
 				j\geq k\\
 				j\neq i
 			\end{array}}}\left(1-p_{j}+p_{j}\frac{z\left(x\right)+p_{j}-1}{p_{j}}\right)-x\\
 			& & = & \prod_{j=1}^{k-1}\left(1-p_{j}\right)z\left(x\right)^{n-k}-x.
 		\end{array}
 \end{equation}
 Therefore for every $x\in\left(\underline{s}_{k},\underline{s}_{k-1}\right]$ and for every $i \geq k$:
 $z\left(x\right)=\left(\frac{x+\lambda}{\prod_{j=1}^{k-1}\left(1-p_{j}\right)}\right)^{\frac{1}{n-k}}$ 
 and
 $F_{i}\left(x\right)=\frac{\left(\frac{x+\lambda}{\prod_{j=1}^{k-1}\left(1-p_{j}\right)}\right)^{\frac{1}{n-k}}+1-p_{i}}{p_{i}}$.
 $F_{k}\left(\underline{s}_{k}\right)=0$, hence it follows that
 $\underline{s}_{k}=\left(1-p_{k}\right)^{n-k}\prod_{j=1}^{k-1}\left(1-p_{j}\right) -\lambda$.

We can continue in the same way, until the interval $\left(0,\underline{s}_{m_{n}}\right]$,
in which for every $i < m_{n}$ and for every $x \in\left(0,\underline{s}_{m_{n}}\right]$:
 $F_{j}\left(x\right)=0$; and for
every $i \geq m_{n}$ and for every $x \in \left(0,\underline{s}_{m_{n}}\right]$:
$F_{j}\left(x\right)>0$. Therefore 
for every $i \geq m_{n}$ and for every $x\in\left(0,\underline{s}_{m_{n}}\right]$:
 $F_{i}\left(x\right)=\frac{\left(\frac{\left(x+\lambda\right)}{\prod_{j=1}^{m_n-1}\left(1-p_{j}\right)}\right)^{\frac{1}{n-m_{n}}}+p_{i}-1}{p_{i}}$.
 
Thus, for every $i$: $F_{i}$ is uniquely determined, and therefore
the equilibrium is unique.

\item If $p_{n}>p_{n}$, similarly to the previous case, for every $i$:
 $F_{i}$ is uniquely determined. The only exceptions are that $\underline{s}_{n}=\underline{s}_{n-1}$,
and bidder $n$ has an atomic point at $0$, of $1-\frac{p_{n-1}}{p_{n}}$.
\end{itemize}
Hence, for $0<p_{1}\leq p_{2}\leq\dots\leq p_{n-1}\leq p_{n}$, where
$p_{n-1}<1$, and either $p_{n}=p_{n-1}$ or $p_{n}>p_{n-1}$, in
the unique equilibrium, $\underline{s}_{1}=\left(1-p_{1}\right)^{n-1}-\lambda$,
$\underline{s}_{k}=\left(1-p_{k}\right)^{n-k}\prod_{j=1}^{k-1}\left(1-p_{j}\right)-\lambda$
for $k\in\left\{ 2,\dots,n-1\right\}$, and $\underline{s}_{n}=\underline{s}_{n-1}=0$,
where $\lambda=\prod_{j=1}^{n-1}\left(1-p_{j}\right)$. For every $1 \leq i \leq n-1 $ the CDFs
are: 
\begin{equation}
F_{i}\left(x\right)=\begin{cases}
1 & \text{\ensuremath{x\geq1-\lambda}}\\
\frac{z\left(x\right)+p_{i}-1}{p_{i}} & x\in\left[\underline{s}_{1},1-\lambda\right)\\
\vdots & \vdots\\
\frac{z\left(x\right)+p_{i}-1}{p_{i}} & x\in\left[\underline{s}_{k},\underline{s}_{k-1}\right)\\
\vdots & \vdots\\
\frac{z\left(x\right)+p_{i}-1}{p_{i}} & x\in\left[\underline{s}_{i},\underline{s}_{i-1}\right)\\
0 & x<\underline{s}_{i}
\end{cases}
\end{equation}
 and $F_{n}=\frac{p_{n-1}}{p_{n}}F_{n-1}$,
where 
\begin{equation}
z\left(x\right)=\begin{cases}
\left(\lambda+x\right)^{\frac{1}{n-1}} & x\in\left[\underline{s}_{1},1-\lambda\right)\\
\vdots & \vdots\\
\left(\frac{\lambda+x}{\prod_{j=1}^{k-1}\left(1-p_{j}\right)}\right)^{\frac{1}{n-k}} & x\in\left[\underline{s}_{k},\underline{s}_{k-1}\right)\\
\vdots & \vdots\\
\frac{\lambda+x}{\prod_{j=1}^{n-2}\left(1-p_{j}\right)} & x\in\left[0,\underline{s}_{n-2}\right).
\end{cases}
\end{equation}
\end{proof}

\subsection*{Proof of Theorem~\ref{thm:zeroProfit} \label{subsec:thmzero}}
When there are at least two bidders with $p=1$, the auction approaches
the case without failures. If we do not allow agent failures, \cite{ref1}
characterized
the equilibria. 
\begin{thm}[\cite{ref1}]
	The first price sealed bid all pay common values auction with complete
	information possesses two types of equilibria. Either all players
	use the same continuous mixed strategy with support $\left[0,1\right]$,
	or at least two bidders randomize continuously over $\left[0,1\right]$
	with each other player $i$ randomizing continuously over $\left(b_{i},1\right]$,
	$b_{i}>0$, and having an atomic at $0$ equals to $G_{i}\left(b_{i}\right)$.
	When two or more bidders have a positive density over a common interval
	they play the same continuous mixed strategy over that interval. Furthermore,
	if bidders $m+1,\dots,n$ randomized continuously over $\left[0,1\right]$,
	with bidder $i=1,\dots,m$ randomized continuously over $\left(b_{i},1\right]$,
	with $1\geq b_{1}\geq\dots\geq b_{m}$. The equilibrium strategies
	are:
	
	\begin{equation}
	G_{i}\left(x\right)=\begin{cases}
	x^{\frac{1}{n-1}} & x\in\left[b_{1},1\right]\\
	\left(\frac{x}{G_{1}\left(b_{1}\right)}\right)^{\frac{1}{n-2}} & x\in\left[b_{2},b_{1}\right)\\
	\vdots & \vdots\\	
	
	\left(\frac{x}{\prod_{j<k}G_{j}\left(b_{j}\right)}\right)^{\frac{1}{k-1}} & x\in\left[b_{k},b_{k-1}\right)\\
	\vdots & \vdots\\
	\left(\frac{x}{\prod_{j<i}G_{j}\left(b_{j}\right)}\right)^{\frac{1}{i-1}} & x\in\left[b_{i},b_{i-1}\right)\\
	G_{i}\left(b_{i}\right) & x\in\left[0,b_{i}\right)\\
	0 & x<0	
	
	\end{cases}
	\end{equation}
	\label{thm:baye}
\end{thm}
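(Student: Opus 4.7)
My plan is to reuse the machinery developed for Theorem~\ref{thm:positiveProfit}, specialized to the degenerate case $p_1=\cdots=p_n=1$. The framework of that proof (Lemmas~\ref{lem:atom}--\ref{lem:conZ}) carries over almost verbatim; what changes is that several lemmas which used $p_{n-1}<1$ as an essential hypothesis break, and these breaks are precisely what creates the two equilibrium families in the statement. First I would re-run the atom-exclusion argument of Lemma~\ref{lem:atom} to conclude that no bidder has an atomic point at any $x>0$: if bidder $i$ had mass at $x>0$, any other bidder has an upward jump in winning probability at $x$ and would profitably shift an $\varepsilon$-neighborhood of mass from just below $x$ to just above $x$. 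Bids therefore lie in $[0,1]$ with distributions that are continuous except possibly with atoms at $0$; and, by the same transfer argument, $\overline{s}_i$ cannot lie strictly between $0$ and $1$ for any bidder with positive density near it.

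\textbf{Forcing two active bidders and zero profit.} Next I would show that at least two bidders randomize continuously with $\overline{s}_i=1$ and no atom at $0$; call these the \emph{active} bidders. The argument is by elimination: the top of the highest support must equal $1$ (else it could be lowered), and if only one bidder reached $1$ that bidder would shrink their top down, so at least two do. If both had atoms at $0$, the same argument as Lemma~\ref{lem:pn-1} (an upward jump of $\Pr(\text{win}\mid\cdot)$ at $0$) would force one of them to transfer mass off of $0$. Given two active bidders, evaluating the profit of an active bidder $j$ at $x\to 0^+$ gives $\pi_j(0^+)=\prod_{k\neq j}F_k(0^+)$; since at least one factor (the other active bidder's CDF) vanishes, $\pi_j^*=0$. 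Then Lemma~\ref{lem:lambda}'s style of argument (evaluated at each $\overline{s}_j=1$) propagates $\pi_i^*=0$ to all bidders. For an inactive bidder $i$ with atom $\alpha_i(0)$, bidding $0$ must yield profit $0$, so $\alpha_i(0)$ is exactly pinned down by the winning probability at $0$ against the other atoms, which is what gives the relation $\alpha_i(0)=G_i(b_i)$.

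\textbf{Piecewise derivation of $G_i$.} Order the inactive bidders so that the bottoms of their continuous supports satisfy $1\geq b_1\geq b_2\geq\cdots\geq b_m>0$, and label the two-or-more fully active bidders $m+1,\dots,n$. On each interval $[b_k,b_{k-1}]$ the active bidders are exactly $\{k,k+1,\dots,n\}$, and the atomic masses of $1,\dots,k-1$ contribute a constant factor $\prod_{j<k}G_j(b_j)$ to the winning probability. Indifference $\pi_i(x)=0$ for every active $i$ on this interval then reads
\begin{equation}
\Bigl(\prod_{j<k}G_j(b_j)\Bigr)\prod_{\substack{j\geq k\\ j\neq i}}G_j(x)=x.
\end{equation}
Exactly as in the derivation after Lemma~\ref{lem:conZ}, the symmetry of this identity across the active bidders forces $G_k=G_{k+1}=\cdots=G_n$ on $[b_k,b_{k-1}]$ and yields
\begin{equation}
G_i(x)=\Bigl(x\big/\prod_{j<k}G_j(b_j)\Bigr)^{\!1/(n-k)}\quad\text{for } i\geq k,
\end{equation}
matching the displayed formula (after the appropriate index-shift $k\to k+1$). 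Continuity at each $b_k$ determines $G_k(b_k)$ and hence the atom of bidder $k$; if no bidder is inactive, the top interval is $[0,1]$ and one recovers the symmetric $G(x)=x^{1/(n-1)}$.

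\textbf{Main obstacle.} The main obstacle is handling the asymmetric equilibria bookkeeping cleanly: showing that the ordering of the $b_i$ is forced (analogue of Lemma~\ref{lem:sOreder} with $p$'s all equal, where the ordering is driven instead by the choice of which bidders are inactive), and checking that the atoms at $0$ fit together consistently, i.e.\ that the value $G_i(b_i)$ obtained from indifference equals the mass $\alpha_i(0)$ that makes bidder $i$ indifferent between bidding $0$ and bidding in $(b_i,1]$. Once those consistency identities are verified interval-by-interval, existence and form of both equilibrium families follow directly from the piecewise construction.
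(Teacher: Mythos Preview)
The paper does not prove this statement at all: Theorem~\ref{thm:baye} is quoted verbatim from \cite{ref1} (Baye, Kovenock, de Vries) and is invoked as a black box in the proof of Theorem~\ref{thm:zeroProfit}. There is therefore no ``paper's own proof'' to compare your proposal against.

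That said, your plan is a sensible way to \emph{rederive} the cited result, and it is essentially the paper's Theorem~\ref{thm:positiveProfit} machinery run in reverse. Note the logical direction in the paper is opposite to yours: the paper takes Theorem~\ref{thm:baye} as given and reduces the $p_{n-1}=1$ case to it via the substitution $G_i(x)=1-p_i+p_iF_i(x)$, whereas you propose to specialize the $p_{n-1}<1$ lemmas to $p_1=\cdots=p_n=1$ and recover Theorem~\ref{thm:baye} directly. Your route is self-contained but duplicates work already in \cite{ref1}; the paper's route is shorter because it outsources exactly this argument. One genuine point you would have to handle that the paper's lemmas do not: when all $p_i$ are equal, Lemma~\ref{lem:sisj} and Lemma~\ref{lem:sOreder} become vacuous, so nothing forces a particular ordering of the $b_i$ or determines which bidders are ``inactive.'' This is precisely why the no-failure case has a \emph{family} of equilibria parametrized by the choice of atoms $\alpha_i(0)$, rather than a unique one---your ``main obstacle'' paragraph identifies this correctly, but you should be explicit that the $b_i$ are free parameters here, not forced.
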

In the Nash equilibrium strategies presented in \cite{ref1},
the $b_{i}$'s fully defines the atomic point of bidder $i$ at $0$,
and vice versa.\footnote{We can set the atomic point of bidder $1$ at $0$, $\alpha_{1}(0)$, to $b_{1}^{\frac{1}{n-1}}$, and $\alpha_{2}\left(0\right)=\left(\frac{b_{2}}{\alpha_{1\left(0\right)}}\right)^{\frac{1}{n-2}}$,...,$\alpha_{m}\left(0\right)=\left(\frac{b_{m}}{\prod_{j<m}\alpha_{j}\left(0\right)}\right)$.
	In the other direction, once we set $\alpha_{i}\left(0\right)$, we
	can set $b_{1}=\alpha_{1}\left(0\right)^{n-1}$, $b_{2}=\alpha_{1}\left(0\right)\alpha_{2}\left(0\right)^{n-2}$,...,$b_{m}\left(0\right)=\prod_{j<m}\alpha_{j}\left(0\right)\alpha_{m}\left(0\right)^{n-m}$.%
}
That is, every Nash equilibrium is defined by the bidders' atomic points
at $0$, and when the bidders' atomic points at $0$ are set, the Nash
equilibrium is unique. The expected profit of every bidder in any Nash
equilibrium is $0$.

\begin{customthm}{\ref{thm:zeroProfit}}
	In common values all-pay auction when the item value is $1$, if $p_{n-1}=1$ then in every Nash
	equilibrium the expected profit of every participating bidder is $0$.
	At least two bidders with $p=1$ randomize over $\left[0,1\right]$
	with each other player $i$ randomizing continuously over $\left(b_{i},1\right]$,
	$b_{i}>0$, and having an atomic point at $0$ of $\alpha_{i}\left(0\right)$.
	There exists a continuous function $z\left(x\right):\left[0,1\right]\rightarrow\left[0,1\right]$,
	such that when a bidder, $i$, has a positive density over an interval,
	they bid according to $F_{i}\left(x\right)=\frac{z\left(x\right)+p_{i}-1}{p_{i}}$
	over that interval. For every $i$ --- the atomic point at $0$
	is equals to $F_{i}\left(0\right)$. 
\end{customthm}

\begin{proof}
	Let $0<p_{1}\leq\dots p_{n-2}\leq p_{n-1}=p_{n}=1$, let $F_{1},\dots F_{n}$,
	be a Nash equilibrium, and let $\alpha_{i}\left(0\right)$ be bidder
	$i$'s atomic point at $0$, that is $\alpha_{i}\left(0\right)=F_{i}\left(0\right)$.
	We first note that there must be a bidder with $p=1$ and $\alpha_{i}\left(0\right)=0$,
	otherwise any other bidder with $p=1$ has an incentive to bid a small
	$\varepsilon$ and win the item with a positive probability. Hence, failing to participate in the auction and biding
	$0$, would lead to the same result --- $0$ probability of winning
	the item. 
	
	If a bidder has a participating probability of $p$, and an atomic point
	at $0$ of $\alpha\left(0\right)$, it can be considered as the bidder
	has a participating probability of $1$, and an atomic point at $0$
	of $1-p\left(1-\alpha\left(0\right)\right)$. The probability to outbid
	bidder $i$ when bidding $x$, is the probability that either bidder
	$i$ had failed or bidder $i$ hadn't failed and bid less then $x$. That is, the probability
	of outbidding bidder $i$ is $1-p_{i}+p_{i}F_{i}\left(x\right)$,
	hence if we let $\beta_{i}\left(0\right)=1-p_{i}\left(1-\alpha_{i}\left(0\right)\right)$
	and $G_{i}\left(x\right)=1-p_{i}+p_{i}F_{i}\left(x\right)$, then
	the $G_{i}$s are equilibrium distributions of the auction without
	agent failures, in which bidder $i$ has an atomic point at $0$ of $\beta_{i}\left(0\right)$,
	and the expected profit of each bidder is $0$.
	
	From Theorem \ref{thm:baye}, at least two bidders randomized continuously
	over $\left[0,1\right]$ with no atomic point at $0$, that is, for
	at least two bidders $\beta_{i}\left(0\right)=0$, hence it must hold
	that for at least two bidders, say $n$ and $n-1$, with $p=1$, $\alpha\left(0\right)=0$.
	By letting $z\left(x\right)=G_{n}\left(x\right)$ the theorem is proved.
\end{proof}

\subsection*{Proof of Theorem~\ref{thm:expectedBid} \label{subsec:app2}}
\begin{customthm}{\ref{thm:expectedBid}}
For every $1 \leq i \leq n-1$: 
\begin{equation}
\begin{split}\mathbb{E}\left[bid_{i}\right]= & \frac{1}{p_{i}}\left(\frac{1}{n}+\sum_{k=1}^{i}\frac{\left(1-p_{k}\right)^{n-k}\prod_{j=1}^{k}\left(1-p_{j}\right)}{\left(n-k\right)\left(n-k+1\right)}-\right.\\
 & \qquad \left.-\frac{\left(1-p_{i}\right)^{n-i}\prod_{j=1}^{i}\left(1-p_{j}\right)}{n-i}-p_{i}\lambda\right)
\end{split}
\end{equation}
and 
\begin{equation}
\mathbb{E}\left[bid_{n}\right]=\frac{p_{n-1}}{p_{n}}\mathbb{E}\left[bid_{n-1}\right].
\end{equation}
\end{customthm}

\begin{proof}
For every $ 1 \leq i \leq n - 1$ we have that
\begin{equation}
\mathbb{E}\left[bid_{i}\right]=\sum_{k=1}^{i}\int\limits _{\underline{s}_{k}}^{\underline{s}_{k-1}}x f_{i}\left(x\right)\,\mathrm{d}x.
\end{equation}
Fix $k\in\left\{ 1,\dots,i\right\} $
\begin{equation}
\begin{array}{rl}
\int\limits _{\underline{s}_{k}}^{\underline{s}_{k-1}}x f_{i}\left(x\right)\,\mathrm{d}x= & \frac{1}{p_{i}}\frac{1}{n-k}\prod\limits_{j=0}^{k-1}\left(1-p_{j}\right)^{\frac{-1}{n-k}}\int_{\underline{s}_{k}}^{\underline{s}_{k-1}}x \left(\lambda+x\right)^{\frac{1-n+k}{n-k}}\,\mathrm{d}x\\
= & \left.\frac{1}{p_{i}}\prod\limits_{j=0}^{k-1}\left(1-p_{j}\right)^{\frac{-1}{n-k}}\left(\lambda+x\right)^{\frac{1}{n-k}}\left(\frac{\lambda+x}{n-k+1}-\lambda\right)\right|_{\underline{s}_{k}}^{\underline{s}_{k-1}}
\end{array}
\end{equation}
\begin{equation}
\begin{array}{rl}
\qquad \qquad \qquad= & \frac{1}{p_{i}}\left(\prod\limits_{j=0}^{k-1}\left(1-p_{j}\right)^{\frac{-1}{n-k}}\left(\lambda+\underline{s}_{k-1}\right)^{\frac{1}{n-k}}\left(\frac{\lambda+\underline{s}_{k}}{n-k+1}-\lambda\right)\right.\\
 & \left.-\prod\limits_{j=0}^{k-1}\left(1-p_{j}\right)^{\frac{-1}{n-k}}\left(\lambda+\underline{s}_{k}\right)^{\frac{1}{n-k}}\left(\frac{\lambda+\underline{s}_{k-1}}{n-k+1}-\lambda\right)\right)\\
= & {\scriptstyle \frac{1}{p_{i}}\left(\prod\limits_{j=0}^{k-1}\left(1-p_{j}\right)^{\frac{-1}{n-k}}\left(\left(1-p_{k-1}\right)^{n-k}\prod\limits_{j=0}^{k-1}\left(1-p_{j}\right)\right)^{\frac{1}{n-k}}\left(\frac{\left(1-p_{k-1}\right)^{n-k}\prod\limits_{j=0}^{k-1}\left(1-p_{j}\right)}{n-k+1}-\lambda\right)\right.}\\
 & {\scriptstyle \left.-\prod\limits_{j=0}^{k-1}\left(1-p_{j}\right)^{\frac{-1}{n-k}}\left(\left(1-p_{k}\right)^{n-k}\prod\limits_{j=0}^{k-1}\left(1-p_{j}\right)\right)^{\frac{1}{n-k}}\left(\frac{\left(1-p_{k}\right)^{n-k}\prod\limits_{j=0}^{k-1}\left(1-p_{j}\right)}{n-k+1}-\lambda\right)\right)}\\
= & {\scriptstyle \frac{1}{p_{i}}\left(\left(1-p_{k-1}\right)\left(\frac{\left(1-p_{k-1}\right)^{n-k}\prod\limits_{j=0}^{k-1}\left(1-p_{j}\right)}{n-k+1}-\lambda\right)-\left(1-p_{k}\right)\left(\frac{\left(1-p_{k}\right)^{n-k}\prod\limits_{j=0}^{k-1}\left(1-p_{j}\right)}{n-k+1}-\lambda\right)\right)}.
\end{array}
\end{equation}
Hence, for $ 1 \leq i \leq n-1 $ 
\begin{equation}
\begin{array}{rl}
\mathbb{E}\left[bid_{i}\right]= & \sum\limits_{k=1}^{i}\int\limits _{\underline{s}_{k}}^{\underline{s}_{k-1}}x f_{i}\left(x\right)\,\mathrm{d}x\\
= & \frac{1}{p_{i}}\left(\sum\limits_{k=0}^{i-1}\left(1-p_{k}\right)\left(\frac{\left(1-p_{k}\right)^{n-k-1}\prod_{j=0}^{k}\left(1-p_{j}\right)}{n-k}-\lambda\right)\right.\\
 & \left.-\sum\limits_{k=1}^{i}\left(1-p_{k}\right)\left(\frac{\left(1-p_{k}\right)^{n-k}\prod_{j=0}^{k-1}\left(1-p_{j}\right)}{n-k+1}-\lambda\right)\right)\\
= & \frac{1}{p_{i}}\left(\frac{1}{n}-\lambda+\sum\limits_{k=1}^{i}\frac{\left(1-p_{k}\right)^{n-k}\prod_{j=0}^{k}\left(1-p_{j}\right)}{\left(n-k\right)\left(n-k+1\right)}\right.\\
 & \left.-\left(1-p_{i}\right)\left(\frac{\left(1-p_{i}\right)^{n-i}\prod_{j=0}^{i-1}\left(1-p_{j}\right)}{n-i}-\lambda\right)\right)\\
= & \frac{1}{p_{i}}\left(\frac{1}{n}+\sum\limits_{k=1}^{i}\frac{\left(1-p_{k}\right)^{n-k}\prod_{j=1}^{k}\left(1-p_{j}\right)}{\left(n-k\right)\left(n-k+1\right)}-\frac{\left(1-p_{i}\right)^{n-i}\prod_{j=1}^{i}\left(1-p_{j}\right)}{n-i}-p_{i}\lambda\right).
\end{array}
\end{equation}
As $f_{n}\left(x\right)=\frac{p_{n-1}}{p_{n}}f_{n-1}\left(x\right)$ it holds that
$
\mathbb{E}\left[bid_{n}\right]=\frac{p_{n-1}}{p_{n}}\mathbb{E}\left[bid_{n-1}\right]
$.
\end{proof}

\subsection*{Proof of Theorem~\ref{thm:sumProfit} \label{subsec:app3}}

\begin{customthm}{\ref{thm:sumProfit}}
The sum-profit auctioneer's profits are:
$$
\sum_{i=1}^{n}p_{i}\mathbb{E}\left[bid_{i}\right]=1-\lambda\left(1+\sum_{i=1}^{n-1}p_{i}\right)
$$
\end{customthm}

\begin{proof}

The sum-profit auctioneer collects all the bids of the bidders that have not failed. Therefore:
\begin{equation}
\begin{array}{rl}
\mathbb{E}\left[AP\right] =& \sum\limits _{i=1}^{n}p_{i}\mathbb{E}\left[bid_{i}\right] \\
= & \sum\limits _{i=1}^{n-1}p_{i}\mathbb{E}\left[bid_{i}\right]+p_{n}\mathbb{E}\left[bid_{n}\right]\\
= & \sum\limits _{i=1}^{n-1}\left(\frac{1}{n}+\sum\limits _{k=1}^{i}\frac{\left(1-p_{k}\right)^{n-k}\prod_{j=1}^{k}\left(1-p_{j}\right)}{\left(n-k\right)\left(n-k+1\right)}-\frac{\left(1-p_{i}\right)^{n-i}\prod_{j=1}^{i}\left(1-p_{j}\right)}{n-i}-p_{i}\lambda\right)\\
 & +\frac{1}{n}+\sum\limits _{k=1}^{n-1}\frac{\left(1-p_{k}\right)^{n-k}\prod_{j=1}^{k}\left(1-p_{j}\right)}{\left(n-k\right)\left(n-k+1\right)}-\lambda\\
= & 1+\sum\limits _{k=1}^{n-1}\frac{\left(1-p_{k}\right)^{n-k}\prod_{j=0}^{k}\left(1-p_{j}\right)}{n-k}-\sum\limits _{i=1}^{n-1}\frac{\left(1-p_{i}\right)^{n-i}\prod_{j=0}^{i}\left(1-p_{j}\right)}{n-i}\\
 & -\lambda-\sum\limits _{i=1}^{n-1}p_{i}\lambda\\
= & 1-\lambda\left(1+\sum\limits _{i=1}^{n-1}p_{i}\right).
\end{array}
\end{equation}
\end{proof}

\subsection*{Proof of Theorem~\ref{thm:maxProfit} \label{subsec:app4}}

\begin{customthm}{\ref{thm:maxProfit}}
The max-profit auctioneer's profits are:
\begin{equation*}
\begin{split}
\mathbb{E}\left[AP\right] = & \int\limits _{\underline{s}_{n-1}}^{\underline{s}_{0}}x g\left(x\right)\,\mathrm{d}x\\
= & \frac{n}{2n-1}-\lambda+ \sum_{k=1}^{n-1}\left(\frac{(1-p_{k})^{2n-2k-1}\prod_{j=1}^{k}(1-p_{j})^{2}}{4(n-k)^{2}-1}\right)
\end{split}
\end{equation*}
\end{customthm}
\begin{proof}
Looking for the expected profit, we have:
$$
\begin{array}{rl}
\int\limits _{\underline{s}_{n-1}}^{\underline{s}_{0}}xg\left(x\right)\,\mathrm{d}x= & \sum\limits _{k=1}^{n-1}\int_{\underline{s}_{k}}^{\underline{s}_{k-1}}x\frac{n-k+1}{n-k}\left(\frac{\lambda+x}{\prod_{j=0}^{k-1}\left(1-p_{1}\right)}\right)^{\frac{1}{n-k}}\,\mathrm{d}x\\
= & {\scriptstyle \left.\sum\limits _{k=1}^{n-1}\left(\prod\limits _{j=0}^{k-1}\left(1-p_{j}\right)^{\frac{-1}{n-k}}\left(\lambda+x\right)^{\frac{n-k+1}{n-k}}\left(\left(\lambda+x\right)\frac{n-k+1}{2n-2k+1}-\lambda\right)\right)\right|_{\underline{s}_{k}}^{\underline{s}_{k-1}}}\\
= & {\scriptstyle \sum\limits _{k=1}^{n-1}\left(\prod\limits _{j=0}^{k-1}\left(1-p_{j}\right)\left(1-p_{k-1}\right)^{n-k+1}\left(\prod\limits _{j=0}^{k-1}\left(1-p_{j}\right)\left(1-p_{k-1}\right)^{n-k}\frac{n-k+1}{2n-2k+1}-\lambda\right)\right.}\\
 & {\scriptstyle \left.-\prod\limits _{j=0}^{k-1}\left(1-p_{j}\right)\left(1-p_{k}\right)^{n-k+1}\left(\prod\limits _{j=0}^{k-1}\left(1-p_{j}\right)\left(1-p_{k}\right)^{n-k}\frac{n-k+1}{2n-2k+1}-\lambda\right)\right)}\\
= & {\scriptstyle +\sum\limits _{k=0}^{n-2}\left(\prod\limits _{j=0}^{k}\left(1-p_{j}\right)\left(1-p_{k}\right)^{n-k}\left(\prod\limits _{j=0}^{k}\left(1-p_{j}\right)\left(1-p_{k}\right)^{n-k-1}\frac{n-k}{2n-2k-1}-\lambda\right)\right)}\\
 & {\scriptstyle -\sum\limits _{k=1}^{n-1}\left(\prod\limits _{j=0}^{k}\left(1-p_{j}\right)\left(1-p_{k}\right)^{n-k}\left(\prod\limits _{j=0}^{k}\left(1-p_{j}\right)\left(1-p_{k}\right)^{n-k-1}\frac{n-k+1}{2n-2k+1}-\lambda\right)\right)}\\
= & \frac{n}{2n-1}-\lambda+\sum\limits _{k=1}^{n-1}\frac{\left(1-p_{k}\right)^{2n-2k-1}\prod_{j=1}^{k}\left(1-p_{j}\right)^{2}}{4\left(n-k\right)^{2}-1}.
\end{array}
$$
\end{proof}

\section*{Example~\ref{exm:exm}}

\begin{customexm}{\ref{exm:exm}}
	Consider how four bidders interact. Our bidders have participation probability of
	$p_{1}=\frac{1}{3}$, $p_{2}=\frac{1}{2}$, $p_{3}=\frac{3}{4}$
	and $p_{4}=1$. Let us look at each bidder's CDFs:	
	$$
	\begin{array}{cl}
	F_{1}(x)= & \begin{cases}
	1 & x\geq\frac{11}{12}\\
	3\left(\frac{1}{12}+x\right)^{\frac{1}{3}}-2\quad\enskip & x\in\left[\frac{23}{108},\frac{11}{12}\right)\\
	0 & x<\frac{23}{108}
	\end{cases}\\
	\\
	F_{2}(x)= & \begin{cases}
	1 & x\geq\frac{11}{12}\\
	2\left(\frac{1}{12}+x\right)^{\frac{1}{3}}-1 & x\in\left[\frac{23}{108},\frac{11}{12}\right)\\
	2\left(\frac{3\left(\frac{1}{12}+x\right)}{2}\right)^{\frac{1}{2}}-1\; & x\in\left[\frac{1}{12},\frac{23}{108}\right)\\
	0 & x<\frac{1}{12}
	\end{cases}\\
	\\
	F_{3}(x)= & \begin{cases}
	1 & x\geq\frac{11}{12}\\
	\frac{4}{3}\left(\frac{1}{12}+x\right)^{\frac{1}{3}}-\frac{1}{3} & x\in\left[\frac{23}{108},\frac{11}{12}\right)\\
	\frac{4}{3}\left(\frac{3\left(\frac{1}{12}+x\right)}{2}\right)^{\frac{1}{2}}-\frac{1}{3} & x\in\left[\frac{1}{12},\frac{23}{108}\right)\\
	4\left(\frac{1}{12}+x\right)-\frac{1}{3} & x\in\left[0,\frac{1}{12}\right)\\
	0 & x<0
	\end{cases}\\
	\\
	F_{4}(x)= & \begin{cases}
	1 & x\geq\frac{11}{12}\\
	\left(\frac{1}{12}+x\right)^{\frac{1}{3}} & x\in\left[\frac{23}{108},\frac{11}{12}\right)\\
	\left(\frac{3\left(\frac{1}{12}+x\right)}{2}\right)^{\frac{1}{2}}\qquad\enskip & x\in\left[\frac{1}{12},\frac{23}{108}\right)\\
	3\left(\frac{1}{12}+x\right) & x\in\left(0,\frac{1}{12}\right)\\
	\frac{1}{4} & x=0\\
	0 & x<0
	\end{cases}
	\end{array}
	$$	
	\begin{figure}
		\begin{centering}
			\subfloat[\label{fig:examCDF}]{\includegraphics[width=0.5\textwidth]{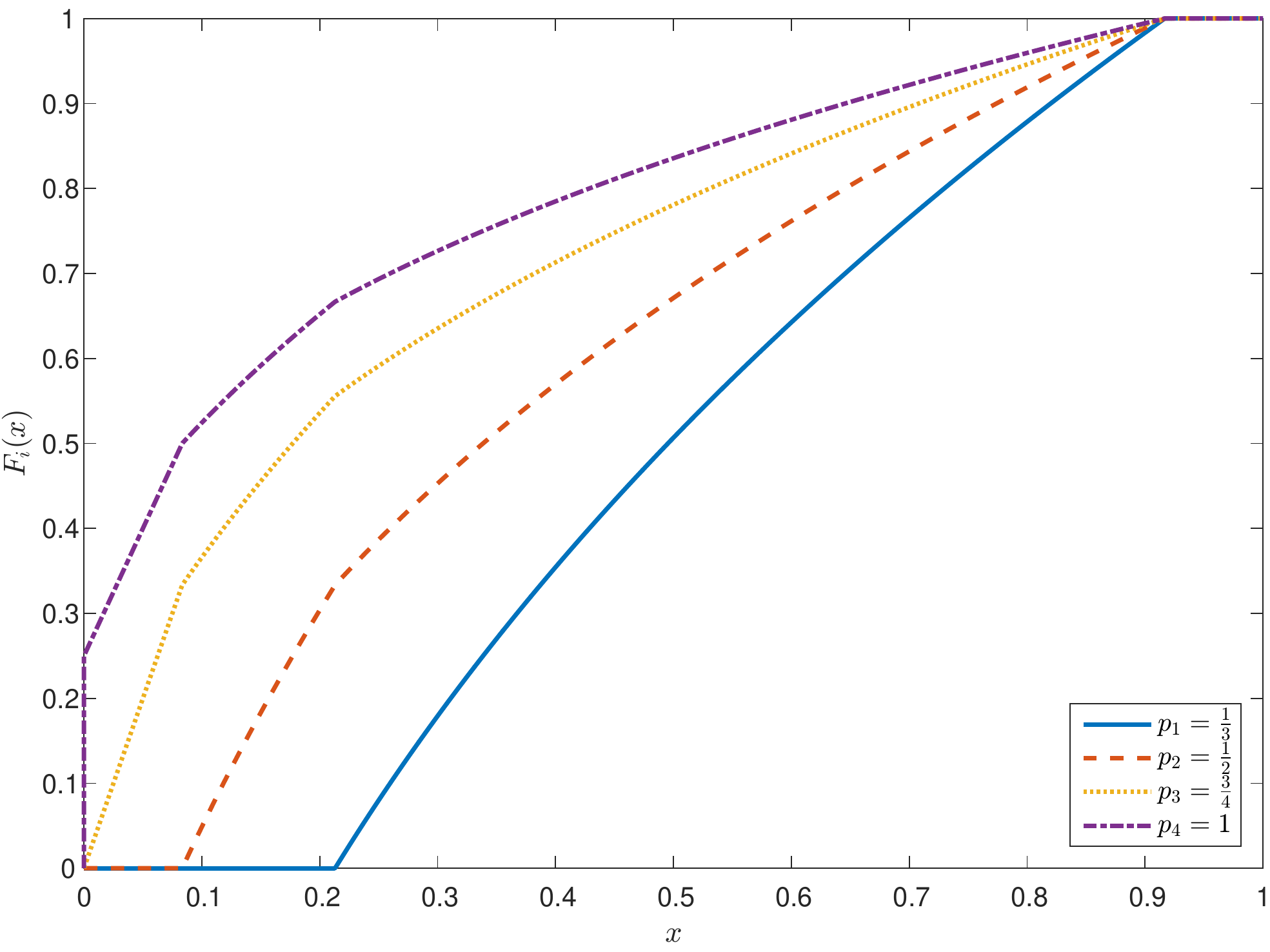}}		
			~\subfloat[\label{fig:examPDF}]{\includegraphics[width=0.5\textwidth]{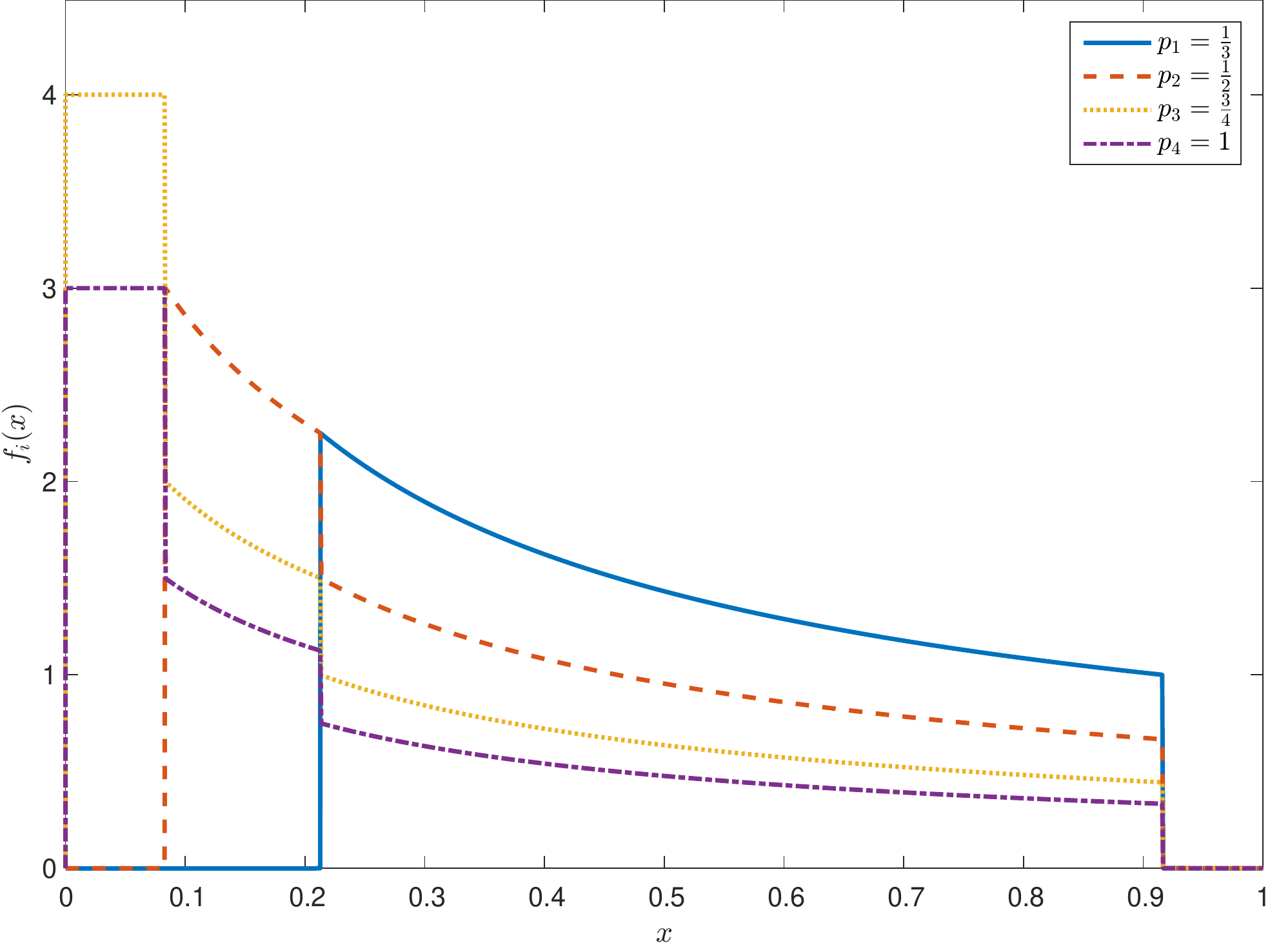}}
			\par
		\end{centering}	
		\centering{}\caption{The CDFs (a) and the PDFs (b) when $p_{1}=\frac{1}{3}$, $p_{2}=\frac{1}{2}$, $p_{3}=\frac{3}{4}$ and $p_{4}=1$}\label{fig:exam} 
	\end{figure}
	
	A graphical illustration of the bidders' CDFs (\ref{fig:examCDF}) and PDFs (\ref{fig:examPDF}) can be found in figure \ref{fig:exam}. 	
\end{customexm}

\section*{Proof of Theorem \ref{thm:sbg} \label{sec:sbg}}

Let $p_1,\dots, p_n$ be the announced participation probabilities, and let $p'_r < p_r$ be bidder $r$ real participation probability.
 For every $i \neq r$, 
 bidder $i$'s expected profit from bid $x$ is:
 \begin{equation*}
 \pi_{i}(x)=\left(p'_{r}F_{r}(x)+1-p'_{r}\right)\prod_{j=1;j\neq i,r}^{n}\left(p_{j}F_{j}(x)+1-p_{j}\right).
 \end{equation*}
 The values of this function change according to the relation between $r$, $i$ and $x$. The following lemmata offer some insights regarding the optimal strategy for a bidder. 
 \begin{lem}
 	Let $p_1,\dots, p_n$ be the announced participation probabilities, and let $p'_r < p_r$ be bidder $r$ real participation probability. For every $i > r$, 
 	the expected profit of bidder $i$ from bidding $x\in\left[\underline{s}_{k},\underline{s}_{k-1}\right)$, for $k\leq i$ and $k>r$, is:
 	$$
 	\pi_{i}(x)=\frac{1-p'_{r}}{1-p_{r}}\left(\lambda+x\right)-x=\frac{p_{r}-p'_{r}}{1-p_{r}}\left(\lambda+x\right)+\lambda.
 	$$
 	\label{lem:sup1}
 \end{lem}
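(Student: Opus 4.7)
The plan is to compute $\pi_i(x)=W(x)-x$ directly, where $W(x)=(p'_rF_r(x)+1-p'_r)\prod_{j\neq i,r}(p_jF_j(x)+1-p_j)$ is the win-probability with the true (sabotaged) participation rate of bidder $r$, and then show that every factor in $W(x)$ is already determined by Equations~(\ref{eq:cdf_i}) and~(\ref{eq:cdf_n}) once we fix the interval $[\underline{s}_k,\underline{s}_{k-1})$.

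First I would locate each bidder $j\neq i$ relative to the interval. Because $k>r$ we have $k-1\ge r$, so $\underline{s}_{k-1}\le\underline{s}_r$, and since $x<\underline{s}_{k-1}$ this gives $F_r(x)=0$; hence the sabotaged factor contributes exactly $1-p'_r$. For the remaining bidders $j<k$ (necessarily $j\neq i$ because $i\ge k$), the same monotonicity $\underline{s}_{k-1}\le\underline{s}_j$ forces $F_j(x)=0$, contributing $1-p_j$. For $j\ge k$ with $j\neq i$, the interval $[\underline{s}_k,\underline{s}_{k-1})$ is exactly the $k$th piece of $F_j$, so $p_jF_j(x)+1-p_j=H_k(x)$ (this uses the $F_n$ piece as well when $n\ge k$, since on this interval its atom at $0$ is irrelevant).

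Counting, there are $n-k$ such bidders with factor $H_k(x)$, and the ``low'' factors together are $(1-p'_r)\prod_{j\in\{1,\dots,k-1\}\setminus\{r\}}(1-p_j)=\frac{1-p'_r}{1-p_r}\prod_{j=1}^{k-1}(1-p_j)$. Using $1-p_0=1$ and the definition of $H_k$, one gets $H_k(x)^{n-k}=\frac{\lambda+x}{\prod_{j=1}^{k-1}(1-p_j)}$, so the product telescopes to
\begin{equation*}
W(x)=\frac{1-p'_r}{1-p_r}\,(\lambda+x).
\end{equation*}
Subtracting $x$ yields the first form $\pi_i(x)=\tfrac{1-p'_r}{1-p_r}(\lambda+x)-x$. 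The second form follows by writing $\tfrac{1-p'_r}{1-p_r}=1+\tfrac{p_r-p'_r}{1-p_r}$, giving $(\lambda+x)+\tfrac{p_r-p'_r}{1-p_r}(\lambda+x)-x=\lambda+\tfrac{p_r-p'_r}{1-p_r}(\lambda+x)$.

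The only delicate points are the boundary case $r=k-1$ (where $\underline{s}_r=\underline{s}_{k-1}$ but the strict inequality $x<\underline{s}_{k-1}$ still gives $F_r(x)=0$) and making sure the ``$F_n$ with an atom at $0$'' contribution uses the correct piece when $x>0$; both are immediate from the piecewise definitions. Once the case analysis of the indices is set up correctly, the rest is the same telescoping computation that appears in the proof of Theorem~\ref{thm:symEquil}.
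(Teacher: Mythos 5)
Your proposal is correct and follows essentially the same route as the paper: split the factors of the win probability into the ``low'' bidders $j<k$ (including $r$, whose factor is $1-p'_r$ since $F_r(x)=0$ there), and the $n-k$ bidders $j\geq k$, $j\neq i$, each contributing $H_k(x)$, then telescope via $H_k(x)^{n-k}=\frac{\lambda+x}{\prod_{j=0}^{k-1}(1-p_j)}$. Your explicit attention to the boundary case $r=k-1$ and the atom of $F_n$ is a small bonus the paper leaves implicit, but the computation is identical.
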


\begin{proof}	
	The expected profit of bidder $i$ from bidding $x\in\left[\underline{s}_{k},\underline{s}_{k-1}\right)$, for $k\leq i$ and $k>r$, is:	
	\begin{equation}
	\begin{array}{rl}
	\pi_{i}\left(x\right)= & \prod\limits_{j=1;j\neq i,r}^{n}\left(p_{j}F_{j}\left(x\right)+1-p_{j}\right)\left(p'_{r}F_{r}\left(x\right)+1-p'_{r}\right)-x\\
	= & \prod\limits_{j=1;j\neq r}^{k-1}\left(p_{j}F_{j}\left(x\right)+1-p_{j}\right)\prod\limits_{j=1;j\neq i}^{n}\left(p_{j}F_{j}\left(x\right)+1-p_{j}\right)\left(p'_{r}F_{r}\left(x\right)+1-p'_{r}\right)-x\\
	= & \prod\limits_{j=1;j\neq r}^{k-1}\left(1-p_{j}\right)\prod\limits_{j=1;j\neq i}^{n}\left(H_{k}\left(x\right)\right)\left(1-p'_{r}\right)-x\\
	= & \prod\limits_{j=1;j\neq r}^{k-1}\left(1-p_{j}\right)H_{k}\left(x\right)^{n-k}\left(1-p'_{r}\right)-x\\
	= & \prod\limits_{j=1;j\neq r}^{k-1}\left(1-p_{j}\right)\left(\frac{\lambda+x}{\prod_{j=0}^{k-1}\left(1-p_{j}\right)}\right)\left(1-p'_{r}\right)-x\\
	= & \frac{1-p'_{r}}{1-p_{r}}\left(\lambda+x\right)-x\\
	= & \left(\frac{p_{r}-p'_{r}}{1-p_{r}}\right)\left(\lambda+x\right)+\lambda.
	\end{array}
	\end{equation}	
\end{proof}
In this case, this is larger than $\lambda$,
and grows with the bid, though the maximal bid (due to the fact that
$k > r$) is $\underline{s}_{r}$.

When either $i < r$ or $ i > r$, bidder can bid in the support of both bidders. 
\begin{lem}
	Let $p_1,\dots, p_n$ be the announced participation probabilities, and let $p'_r$ be bidder $r$ real participation probability. For every $i \neq r$, 
	the expected profit of bidder $i$ from bidding $x\in\left[\underline{s}_{k},\underline{s}_{k-1}\right)$, such that $k\leq \min\{i,r\}$ is:
	$$
	\pi_{i}(x)=\frac{p_{r}-p'_{r}}{p_{r}}\left(\lambda+x\right)\left(\left(\frac{\prod_{j=1}^{k-1}\left(1-p_{j}\right)}{\lambda+x}\right)^{\frac{1}{n-k}}-1\right)+\lambda.
	$$
\label{lem:sup2}	
\end{lem}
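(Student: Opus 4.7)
The plan is to parallel the computation in Lemma~\ref{lem:sup1}, the key difference being that bidder $i$ is now bidding inside bidder $r$'s support (the case $k\leq\min\{i,r\}$ means that both $i$ and $r$ have positive density at $x$), so the factor corresponding to $r$ is no longer just $(1-p'_{r})$ but involves $F_{r}(x)$ nontrivially. The main obstacle is purely algebraic: massaging $p'_{r}F_{r}(x)+1-p'_{r}$ into a form that combines cleanly with the identity $\prod_{j=1}^{k-1}(1-p_{j})H_{k}(x)^{n-k}=\lambda+x$ proved inside Theorem~\ref{thm:symEquil}.

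First I would split $\prod_{j=1;j\neq i,r}^{n}(p_{j}F_{j}(x)+1-p_{j})$ into the inactive part $j<k$ (where $F_{j}(x)=0$, contributing $\prod_{j=1}^{k-1}(1-p_{j})$, noting that neither $i$ nor $r$ appears in this range since $k\leq\min\{i,r\}$) and the active part $j\geq k$ with $j\neq i,r$. Each active factor equals $H_{k}(x)$ by the definition of the equilibrium CDFs in Equations~(\ref{eq:cdf_i}) and~(\ref{eq:cdf_n}), and there are exactly $n-k-1$ such factors. Thus
\begin{equation}
\prod_{j=1;j\neq i,r}^{n}\bigl(p_{j}F_{j}(x)+1-p_{j}\bigr)=\prod_{j=1}^{k-1}(1-p_{j})\cdot H_{k}(x)^{n-k-1}.
\end{equation}

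Next I would substitute $F_{r}(x)=(H_{k}(x)+p_{r}-1)/p_{r}$ and simplify the $r$-factor:
\begin{equation}
p'_{r}F_{r}(x)+1-p'_{r}=H_{k}(x)+\tfrac{p_{r}-p'_{r}}{p_{r}}\bigl(1-H_{k}(x)\bigr).
\end{equation}
Multiplying the two pieces and subtracting $x$ gives
\begin{equation}
\pi_{i}(x)=\prod_{j=1}^{k-1}(1-p_{j})H_{k}(x)^{n-k}+\prod_{j=1}^{k-1}(1-p_{j})H_{k}(x)^{n-k-1}\cdot\tfrac{p_{r}-p'_{r}}{p_{r}}\bigl(1-H_{k}(x)\bigr)-x.
\end{equation}
The first term equals $\lambda+x$, cancelling the $-x$ and leaving $\lambda$. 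For the second term I would write $\prod_{j=1}^{k-1}(1-p_{j})H_{k}(x)^{n-k-1}=(\lambda+x)/H_{k}(x)$, yielding
\begin{equation}
\pi_{i}(x)=\lambda+\tfrac{p_{r}-p'_{r}}{p_{r}}(\lambda+x)\Bigl(\tfrac{1}{H_{k}(x)}-1\Bigr).
\end{equation}

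Finally, the definition $H_{k}(x)=\bigl((\lambda+x)/\prod_{j=0}^{k-1}(1-p_{j})\bigr)^{1/(n-k)}$ together with $p_{0}=0$ gives $1/H_{k}(x)=\bigl(\prod_{j=1}^{k-1}(1-p_{j})/(\lambda+x)\bigr)^{1/(n-k)}$, which when substituted recovers exactly the claimed expression. The whole proof is thus a mechanical composition of the equilibrium identity from Theorem~\ref{thm:symEquil} with the perturbed $r$-factor; no new ideas beyond Lemma~\ref{lem:sup1} are needed.
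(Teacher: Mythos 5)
Your computation is correct and follows essentially the same route as the paper's proof: split the product over $j\neq i,r$ into the inactive range $j<k$ and the $n-k-1$ active factors equal to $H_{k}(x)$, substitute $F_{r}(x)=\frac{H_{k}(x)+p_{r}-1}{p_{r}}$, and reduce via the identity $\prod_{j=1}^{k-1}(1-p_{j})H_{k}(x)^{n-k}=\lambda+x$. Your grouping of the $r$-factor as $H_{k}(x)+\frac{p_{r}-p'_{r}}{p_{r}}\bigl(1-H_{k}(x)\bigr)$ is only a cosmetic reorganization of the paper's expansion $\frac{p'_{r}}{p_{r}}H_{k}(x)+\frac{p_{r}-p'_{r}}{p_{r}}$, and both land on the same final expression.
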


\begin{proof}	
	The expected profit of bidder $i$ from bidding $x\in\left[\underline{s}_{k},\underline{s}_{k-1}\right)$, such that $k\leq \min\{i,r\}$ is:	
	\begin{equation}
	\begin{array}{rl}
	\pi_{i}\left(x\right)= & \prod\limits_{j=1;j\neq i,r}^{n}\left(p_{j}F_{j}\left(x\right)+1-p_{j}\right)\left(p'_{r}F_{r}\left(x\right)+1-p'_{r}\right)-x\\
	= & \prod\limits_{j=1}^{k-1}\left(p_{j}F_{j}\left(x\right)+1-p_{j}\right)\prod\limits_{j=k;j\neq i,r}^{n}\left(p_{j}F_{j}\left(x\right)+1-p_{j}\right)\left(p'_{r}F_{r}\left(x\right)+1-p'_{r}\right)-x\\
	= & \prod\limits_{j=1}^{k-1}\left(1-p_{j}\right)\prod\limits_{j=k;j\neq i,r}^{n}\left(H_{k}\left(x\right)\right)\left(p'_{r}F_{r}\left(x\right)+1-p'_{r}\right)-x\\
	= & \prod\limits_{j=1}^{k-1}\left(1-p_{j}\right)H_{k}\left(x\right)^{n-k-1}\left(p'_{r}\frac{H_{k}\left(x\right)+p_{r}-1}{p_{r}}+1-p'_{r}\right)-x\\
	= & \prod\limits_{j=1}^{k-1}\left(1-p_{j}\right)H_{k}\left(x\right)^{n-k-1}\left(\frac{p'_{r}}{p_{r}}H_{k}\left(x\right)+\frac{p_{r}-p'_{r}}{p_{r}}\right)-x\\
	= & \prod\limits_{j=1}^{k-1}\left(1-p_{j}\right)H_{k}\left(x\right)^{n-k}\frac{p'_{r}}{p_{r}}+\prod\limits_{j=1}^{k-1}\left(1-p_{j}\right)H_{k}\left(x\right)^{n-k-1}\frac{p_{r}-p'_{r}}{p_{r}}-x\\
	
	= & \prod\limits_{j=1}^{k-1}\left(1-p_{j}\right)\left(\frac{\lambda+x}{\prod_{j=0}^{k-1}\left(1-p_{j}\right)}\right)\frac{p'_{r}}{p_{r}}+\prod\limits_{j=1}^{k-1}\left(1-p_{j}\right)H_{k}\left(x\right)^{n-k-1}\frac{p_{r}-p'_{r}}{p_{r}}-x\\
	= & \frac{p'_{r}}{p_{r}}\left(\lambda+x\right)+\prod\limits_{\begin{array}{c}
		j=1\end{array}}^{k-1}\left(1-p_{j}\right)\left(\frac{\lambda+x}{\prod_{j=0}^{k-1}\left(1-p_{j}\right)}\right)^{\frac{n-k-1}{n-k}}\frac{p_{r}-p'_{r}}{p_{r}}-x\\
	= & \frac{p'_{r}}{p_{r}}\left(\lambda+x\right)+\prod\limits\limits_{\begin{array}{c}
		j=1\end{array}}^{k-1}\left(1-p_{j}\right)^{\frac{1}{n-k}}\left(\lambda+x\right)^{\frac{n-k-1}{n-k}}\frac{p_{r}-p'_{r}}{p_{r}}-x\\
	= & \left(\lambda+x\right)\left(\frac{p'_{r}}{p_{r}}+\frac{p_{r}-p'_{r}}{p_{r}}\left(\frac{\lambda+x}{\prod_{j=0}^{k-1}\left(1-p_{j}\right)}\right)^{\frac{-1}{n-k}}\right)-x\\
	
	= & \left(\lambda+x\right)\frac{p_{r}-p'_{r}}{p_{r}}\left(\left(\frac{\lambda+x}{\prod_{j=0}^{k-1}\left(1-p_{j}\right)}\right)^{\frac{-1}{n-k}}-1\right)+\lambda.
	\end{array}
	\end{equation}	
\end{proof}

Since $x<\underline{s}_{k-1}$, this means that $\lambda+x<\left(1-p_{k-1}\right)^{n-k}\prod_{j=0}^{k-1}\left(1-p_{j}\right)$,
hence $\left(\frac{\lambda+x}{\prod_{j=0}^{k-1}\left(1-p_{j}\right)}\right)^{\frac{-1}{n-k}}>1$; again, this is an increase over $\lambda$, the position without sabotaging.

\begin{lem}	
	Let $p_1,\dots, p_n$ be the announced participation probabilities, and let $p'_r < p_r$ be bidder $r$ real participation probability. For every $i \neq r$ and for every $x\in \left[0, \underline{s}_i\right)$, there exists $x' \in \left[\underline{s}_i, 1 - \lambda \right]$, such that $\pi_i\left(x\right) \leq \pi_i\left(x'\right)$.\label{lem:outSupport}
\end{lem}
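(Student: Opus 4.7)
The plan is to reduce the claim to comparing $\pi_i$ at finitely many boundary points by exploiting piecewise convexity and continuity of $\pi_i$ on $[0, 1-\lambda]$.

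Fix $x \in [0, \underline{s}_i)$ and let $k > i$ be the unique index with $x \in [\underline{s}_k, \underline{s}_{k-1})$. Following the same bookkeeping as in the proofs of Lemmas~\ref{lem:sup1} and~\ref{lem:sup2}, one computes $\pi_i(x)$ in closed form on this subinterval by splitting into two subcases: (a) $r < k$, where bidder $r$'s factor reduces to $1 - p'_r$ because $F_r(x) = 0$; and (b) $r \geq k$, where bidder $r$'s factor equals $\frac{p'_r}{p_r} H_k(x) + \frac{p_r - p'_r}{p_r}$. After collecting powers of $H_k$ and applying $H_k(x)^{n-k} = (\lambda + x)/\prod_{j=1}^{k-1}(1-p_j)$, in either subcase the profit takes the form
\begin{equation*}
\pi_i(x) = A_k (\lambda + x)^{(n-k+1)/(n-k)} + B_k (\lambda + x) - x,
\end{equation*}
with $A_k \geq 0$ and $B_k \geq 0$ (and $B_k = 0$ in subcase (a)).

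Because the exponent $(n-k+1)/(n-k) > 1$, the first term is convex in $x$, while the other terms are linear, so $\pi_i$ is convex on each subinterval $[\underline{s}_k, \underline{s}_{k-1}]$. Continuity of $\pi_i$ at each interior boundary $\underline{s}_{k-1}$ follows by substituting $H_k(\underline{s}_{k-1}) = 1 - p_{k-1} = H_{k-1}(\underline{s}_{k-1})$ into the two adjacent expressions and using $\prod_{j=1}^{k-1}(1-p_j) = (1-p_{k-1})\prod_{j=1}^{k-2}(1-p_j)$. Piecewise convexity together with continuity imply that the supremum of $\pi_i$ on $[0, \underline{s}_i]$ is attained at some boundary point $\underline{s}_k$ with $i \leq k \leq n-1$.

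It thus suffices to show $\pi_i(\underline{s}_k) \leq \pi_i(\underline{s}_i)$ for every $k$ with $i < k \leq n-1$. I would prove this inductively via the one-step inequality $\pi_i(\underline{s}_k) \leq \pi_i(\underline{s}_{k-1})$. Plugging $H_k(\underline{s}_k) = 1 - p_k$ and $\lambda + \underline{s}_k = (1-p_k)^{n-k}\prod_{j=1}^{k-1}(1-p_j)$ into the closed-form expression reduces both sides to explicit rational expressions in $p_i, p_{k-1}, p_k, p_r, p'_r$, and the inequality should follow as a nonnegative combination of $(p_k - p_{k-1}) \geq 0$, $(p_{k-1} - p_i) \geq 0$, and $(p_r - p'_r) \geq 0$. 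The main obstacle is verifying this comparison uniformly across the two subcases, especially at the transition $k - 1 = r$ where the applicable subcase changes between $\underline{s}_k$ and $\underline{s}_{k-1}$; this crossover must be handled as a distinguished case. The possible atom of $F_n$ at $0$ only affects tie-breaking at $x = 0$ and does not disrupt the continuity argument, since no bidder $j < n$ has an atom there, so the conclusion extends to the left endpoint and one may take $x' = \underline{s}_i$.
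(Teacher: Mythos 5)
Your reduction to boundary points is sound and matches the paper's strategy: the closed-form expressions on each subinterval $\left[\underline{s}_{k},\underline{s}_{k-1}\right)$ are correct, the coefficients $A_k,B_k$ are indeed nonnegative, and piecewise convexity plus continuity legitimately confine the maximum of $\pi_i$ on $\left[0,\underline{s}_i\right]$ to the points $\underline{s}_k$, $i\le k\le n-1$. The gap is in the final step: the one-step inequality $\pi_i\left(\underline{s}_{k}\right)\leq\pi_i\left(\underline{s}_{k-1}\right)$ on which your induction rests is false in general. In your subcase (b) one has $\pi_{i}\left(\underline{s}_{k}\right)=\frac{\lambda+\underline{s}_{k}}{1-p_{i}}\left(p_{i}-\frac{p'_{r}}{p_{r}}p_{k}\right)+\lambda$, and when the bracket is negative the comparison pits a larger weight $\lambda+\underline{s}_{k-1}$ against a less negative bracket, so the product can move the wrong way. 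Concretely, take $n=4$, $p_{1}=0.1$, $p_{2}=0.7$, $p_{3}=0.8$, $p_{4}=1$, $r=4$, $p'_{4}=0.9$, $i=1$. Then $\lambda+\underline{s}_{3}=0.054$, $\lambda+\underline{s}_{2}=0.081$, and the brackets are $0.1-0.72=-0.62$ and $0.1-0.63=-0.53$, giving $\pi_{1}\left(\underline{s}_{3}\right)-\lambda\approx-0.037$ but $\pi_{1}\left(\underline{s}_{2}\right)-\lambda\approx-0.048$, so $\pi_{1}\left(\underline{s}_{3}\right)>\pi_{1}\left(\underline{s}_{2}\right)$ and the chain is not monotone.

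The lemma survives because the comparison should be made directly against the benchmark rather than link by link, which is what the paper does: whenever the bracketed factor at an endpoint is negative, $\pi_{i}\left(\underline{s}_{k}\right)<\lambda\leq\pi_{i}\left(\underline{s}_{i}\right)$ and you are done immediately; whenever it is nonnegative, you bound $\left(1-p_{k}\right)^{n-k}\prod_{j<k}\left(1-p_{j}\right)\leq\left(1-p_{i}\right)^{n-i}\prod_{j<i}\left(1-p_{j}\right)$ (monotonicity of $\lambda+\underline{s}_{k}$ in $k$) together with $p_{k}\geq p_{i}$ to land directly on $\pi_{i}\left(\underline{s}_{i}\right)$ (or on $\pi_{i}\left(\underline{s}_{r}\right)$ in the region below both supports when $r<i$, which is also a legitimate choice of $x'$ since $\underline{s}_{r}\geq\underline{s}_{i}$). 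This also dissolves the crossover difficulty you flag at $k-1=r$, since no consecutive comparison is ever needed. As written, your proof does not establish the claim; replacing the induction with these direct endpoint-versus-benchmark comparisons, split on the sign of the bracket, completes it.
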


\begin{proof}
	When $i<r$, bidder $i$ can bid in bidder $r$'s support and not
	in its own support, i.e., bidder $i$ can bid $x\in\left[\underline{s}_{k},\underline{s}_{k-1}\right)$
	for $k\leq r$ and $i<k$. We first note that bidder $i$ can
	bid $\underline{s}_{i}$ in its support with expected utility of:
	\begin{equation}
	\pi_{i}\left(\underline{s}_{i}\right)=\left(1-p_{i}\right)^{n-i-1}\prod_{j=0}^{i-1}\left(1-p_{j}\right)\frac{p_{r}-p'_{r}}{p_{r}}p_{i}+\lambda.
	\end{equation}	
	When $x$ is in bidder $r$'s support and not in bidder $i$'s support:	
	\begin{equation}
	\begin{array}{rl}
	\pi_{i}\left(x\right)= & \prod\limits_{j=1;j\neq i,r}^{n}\left(p_{j}F_{j}\left(x\right)+1-p_{j}\right)\left(p'_{r}F_{r}\left(x\right)+1-p'_{r}\right)-x\\
	= & \prod\limits_{j=1;j\neq i}^{k-1}\left(p_{j}F_{j}\left(x\right)+1-p_{j}\right)\prod\limits_{j=k;j\neq r}^{n}\left(p_{j}F_{j}\left(x\right)+1-p_{j}\right)\left(p'_{r}F_{r}\left(x\right)+1-p'_{r}\right)-x\\
	= & \prod\limits_{j=1;j\neq i}^{k-1}\left(1-p_{j}\right)\prod\limits_{j=k;j\neq r}^{n}\left(H_{k}\left(x\right)\right)\left(\frac{p'_{r}}{p_{r}}\left(H_{k}\left(x\right)-1\right)+1\right)-x\\
	= & \frac{\prod_{j=0}^{k-1}\left(1-p_{j}\right)}{1-p_{i}}H_{k}\left(x\right)^{n-k}\left(\frac{p'_{r}}{p_{r}}\left(H_{k}\left(x\right)-1\right)+1\right)-x\\
	= & \frac{\lambda+x}{1-p_{i}}\left(\frac{p'_{r}}{p_{r}}\left(\frac{\lambda+x}{\prod_{j=0}^{k-1}\left(1-p_{j}\right)}\right)^{\frac{1}{n-k}}-\frac{p'_{r}}{p_{r}}+1\right)-x\\
	= & \frac{\lambda+x}{1-p_{i}}\left(\frac{p'_{r}}{p_{r}}\left(\frac{\lambda+x}{\prod_{j=0}^{k-1}\left(1-p_{j}\right)}\right)^{\frac{1}{n-k}}-\frac{p'_{r}}{p_{r}}+p_{i}\right)+\lambda.
	\end{array}
	\end{equation}
	If $p_{i}\geq\frac{p'_{r}}{p_{r}}$ then $\pi_{i}\left(x\right)$
	monotonic grows with $x$, and maximized when $x=\underline{s}_{i}$;
	While if $p_{i}<\frac{p'_{r}}{p_{r}}$, differentiating $\pi_{i}\left(x\right)$
	twice with respect to $x$ gives:	
	\begin{equation}
	\frac{\partial ^{2}}{\partial x^{2}}\pi_{i}\left(x\right)=\frac{p'_{r}}{p_{r}}\frac{1}{1-p_{i}}\left(1+\frac{1}{n-k}\right)\left(\frac{1}{n-k}\right)\left(\frac{\lambda+x}{\prod_{j=0}^{k-1}\left(1-p_{j}\right)}\right)^{\frac{1}{n-k}-1}\prod_{j=0}^{k-1}\left(1-p_{j}\right)^{\frac{-1}{n-k}},
	\end{equation}
	that is, for every $x\in\left[\underline{s}_{k},\underline{s}_{k-1}\right)$:
	$\frac{\partial ^{2}}{\partial x^{2}}\pi_{i}\left(x\right)\geq0$,
	hence, the extremum is a minimum point and
	the utility maximized when either $x=\underline{s}_{k-1}$ or $x=\underline{s}_{k}$.
	
	When bidding $\underline{s}_{k}$ the expected utility is:
	\begin{equation}
	\begin{array}{rl}
	\pi_{i}\left(\underline{s}_{k}\right)= & \frac{\prod_{j=0}^{k-1}\left(1-p_{j}\right)\left(1-p_{k}\right)^{n-k}}{1-p_{i}}\left(\frac{p'_{r}}{p_{r}}\left(1-p_{k}\right)-\frac{p'_{r}}{p_{r}}+p_{i}\right)+\lambda\\
	= & \frac{\prod_{j=0}^{k-1}\left(1-p_{j}\right)\left(1-p_{k}\right)^{n-k}}{1-p_{i}}\left(p_{i}-\frac{p'_{r}}{p_{r}}p_{k}\right)+\lambda.
	\end{array}
	\end{equation}
	If $p_{i}-\frac{p'_{r}}{p_{r}}p_{k}<0$ then $\pi_{i}\left(\underline{s}_{k}\right)<\lambda\leq\pi_{i}\left(\underline{s}_{i}\right)$.
	
	Otherwise, if $p_{i}-\frac{p'_{r}}{p_{r}}p_{k}\geq0$, since $i<k$ it holds that $p_{i}\leq p_{k}$, and therefore $\prod_{j=0}^{k-1}\left(1-p_{j}\right)\left(1-p_{k}\right)^{n-k}\leq\prod_{j=0}^{i-1}\left(1-p_{j}\right)\left(1-p_{i}\right)^{n-i}$
	and	
	\begin{equation}
	\begin{array}{rl}
	\pi_{i}\left(\underline{s}_{k}\right)\leq & \frac{\prod_{j=0}^{i-1}\left(1-p_{j}\right)\left(1-p_{i}\right)^{n-i}}{1-p_{i}}\left(p_{i}-\frac{p'_{r}}{p_{r}}p_{k}\right)+\lambda\\
	= & \prod_{j=0}^{i-1}\left(1-p_{j}\right)\left(1-p_{i}\right)^{n-i-1}\left(p_{i}-\frac{p'_{r}}{p_{r}}p_{k}\right)+\lambda\\
	\leq & \prod_{j=0}^{i-1}\left(1-p_{j}\right)\left(1-p_{i}\right)^{n-i-1}\left(p_{i}-\frac{p'_{r}}{p_{r}}p_{i}\right)+\lambda\\
	= & \prod_{j=0}^{i-1}\left(1-p_{j}\right)\left(1-p_{i}\right)^{n-i-1}\frac{p_{r}-p'_{r}}{p_{r}}p_{i}+\lambda\\
	= & \pi_{i}\left(\underline{s}_{i}\right).
	\end{array}
	\end{equation}	
	
	Similarly, $\pi_{i}\left(\underline{s}_{k-1}\right)\leq\pi_{i}\left(\underline{s}_{i}\right)$.
	That is, for every $x\in\left[\underline{s}_{k},\underline{s}_{k-1}\right)$:
	$\pi_{i}\left(x\right)\leq\pi_{i}\left(\underline{s}_{i}\right)$.
	Therefore, bidder $i$ has no incentive to bid outside of its support
	and in bidder $r$'s support.
	
	When $x$ is outside of the support of both bidders, i.e. $x\in\left[\underline{s}_{k},\underline{s}_{k-1}\right)$
	for $k>i$ and $k>r$:	
	\begin{equation}
	\begin{array}{rl}
	\pi_{i}\left(x\right)= & \prod\limits_{j=1;j\neq i,r}^{n}\left(p_{j}F_{j}\left(x\right)+1-p_{j}\right)\left(p'_{r}F_{r}\left(x\right)+1-p'_{r}\right)-x\\
	= & \prod\limits_{j=1;j\neq i,r}^{k-1}\left(p_{j}F_{j}\left(x\right)+1-p_{j}\right)\prod\limits_{j=k}^{n}\left(p_{j}F_{j}\left(x\right)+1-p_{j}\right)\left(p'_{r}F_{r}\left(x\right)+1-p'_{r}\right)-x\\
	= & \prod\limits_{j=1;j\neq i,r}^{k-1}\left(1-p_{j}\right)\prod\limits_{j=k}^{n}\left(H_{k}\left(x\right)\right)\left(1-p'_{r}\right)-x\\
	= & \frac{\prod_{j=0}^{k-1}\left(1-p_{j}\right)}{1-p_{i}}H_{k}\left(x\right)^{n-k+1}\frac{1-p'_{r}}{1-p_{r}}-x\\
	= & \frac{\prod_{j=0}^{k-1}\left(1-p_{j}\right)}{1-p_{i}}\frac{1-p'_{r}}{1-p_{r}}\left(\frac{\lambda+x}{\prod_{j=0}^{k-1}\left(1-p_{j}\right)}\right)^{\frac{n-k+1}{n-k}}-x\\
	= & \frac{\prod_{j=0}^{k-1}\left(1-p_{j}\right)^{\frac{-1}{n-k}}}{1-p_{i}}\frac{1-p'_{r}}{1-p_{r}}\left(\lambda+x\right)^{\frac{n-k+1}{n-k}}-x\\
	= & \left(\lambda+x\right)\left(\frac{1-p'_{r}}{\left(1-p_{i}\right)\left(1-p_{r}\right)}\left(\frac{\lambda+x}{\prod_{j=0}^{k-1}\left(1-p_{j}\right)}\right)^{\frac{1}{n-k}}-1\right)+\lambda.
	\end{array}
	\end{equation}
	Differentiating the above twice with respect to $x$ gives:	
	\begin{equation}
	\frac{\partial ^{2}}{\partial x^{2}}\pi_{i}\left(x\right)=\frac{1-p'_{r}}{\left(1-p_{i}\right)\left(1-p_{r}\right)}\left(1-\frac{1}{n-k}\right)\frac{1}{n-k}\left(\frac{\lambda+x}{\prod_{j=0}^{k-1}\left(1-p_{j}\right)}\right)^{\frac{1}{n-k}-1}\prod_{j=0}^{k-1}\left(1-p_{j}\right)^{\frac{-1}{n-k}}.
	\end{equation}
	Again, for every $x\in\left[\underline{s}_{k},\underline{s}_{k-1}\right)$: 
	$\frac{\partial ^{2}}{\partial x^{2}}\pi_{i}\left(x\right)\geq0$,
	hence, the extremum is a minimum point and
	the utility maximized when either $x=\underline{s}_{k-1}$ or $x=\underline{s}_{k}$.

	When bidding $\underline{s}_{k}$ the
	expected utility is:
	\begin{equation}
	\begin{array}{rl}
	\pi_{i}\left(\underline{s}_{k}\right)= & \left(1-p_{k}\right)^{n-k}\left(\frac{1-p'_{r}}{\left(1-p_{i}\right)\left(1-p_{r}\right)}\left(\frac{\left(1-p_{k}\right)^{n-k}\prod_{j=0}^{k-1}\left(1-p_{j}\right)}{\prod_{j=0}^{k-1}\left(1-p_{j}\right)}\right)^{\frac{1}{n-k}}-1\right)\prod_{j=0}^{k-1}\left(1-p_{j}\right)+\lambda\\
	= & \left(1-p_{k}\right)^{n-k}\left(\frac{1-p'_{r}}{\left(1-p_{i}\right)\left(1-p_{r}\right)}\left(1-p_{k}\right)-1\right)\prod_{j=0}^{k-1}\left(1-p_{j}\right)+\lambda.
	\end{array}
	\end{equation}
	If $\frac{1-p'_{r}}{\left(1-p_{i}\right)\left(1-p_{r}\right)}\left(1-p_{k}\right)-1\leq0$,
	then $\pi_{i}\left(\underline{s}_{k}\right)\leq\lambda\leq\pi_{i}\left(\underline{s}_{r}\right)$.
	
	Otherwise, $\frac{1-p'_{r}}{\left(1-p_{i}\right)\left(1-p_{r}\right)}\left(1-p_{k}\right)-1>0$. In this case we separate the two sub-cases; if $r < i $, the expected utility for bidder $i$ from bidding $\underline{s}_{r}$
	in its support is:	
	\begin{equation}
	\pi_{i}\left(\underline{s}_{r}\right)=\left(p_{r}-p'_{r}\right)\left(1-p_{r}\right)^{n-r-1}\prod_{j=0}^{r-1}\left(1-p_{j}\right)+\lambda
	\end{equation}
	since $k>i$ we have that $\frac{1-p_{k}}{1-p_{i}}\leq1$ and since
	$k>r$ we have that \newline $\prod_{j=0}^{k-1}\left(1-p_{j}\right)\left(1-p_{k}\right)^{n-k}\leq\prod_{j=0}^{r-1}\left(1-p_{j}\right)\left(1-p_{r}\right)^{n-r}$,
	hence--	
	\begin{equation}
	\begin{array}{rl}
	\pi_{i}\left(\underline{s}_{k}\right)= & \left(\frac{1-p'_{r}}{\left(1-p_{i}\right)\left(1-p_{r}\right)}\left(1-p_{k}\right)-1\right)\left(1-p_{k}\right)^{n-k}\prod_{j=0}^{k-1}\left(1-p_{j}\right)+\lambda\\
	\leq & \left(\frac{1-p'_{r}}{1-p_{r}}-1\right)\left(1-p_{k}\right)^{n-k}\prod_{j=0}^{k-1}\left(1-p_{j}\right)+\lambda\\
	\leq & \left(\frac{1-p'_{r}}{1-p_{r}}-1\right)\left(1-p_{k}\right)^{n-r}\prod_{j=0}^{r-1}\left(1-p_{j}\right)+\lambda\\
	\leq & \left(\frac{1-p'_{r}}{1-p_{r}}-1\right)\left(1-p_{r}\right)^{n-r}\prod_{j=0}^{r-1}\left(1-p_{j}\right)+\lambda\\
	= & \frac{p_{r}-p'_{r}}{1-p_{r}}\left(1-p_{r}\right)^{n-r}\prod_{j=0}^{r-1}\left(1-p_{j}\right)+\lambda\\
	= & \left(p_{r}-p'_{r}\right)\left(1-p_{r}\right)^{n-r-1}\prod_{j=0}^{r-1}\left(1-p_{j}\right)+\lambda\\
	= & \pi_{i}\left(\underline{s}_{r}\right).
	\end{array}
	\end{equation}
	Similarly $\pi_{i}\left(\underline{s}_{k-1}\right)\leq\pi_{i}\left(\underline{s}_{r}\right)$,
	that is, for every $x\in\left[\underline{s}_{k},\underline{s}_{k-1}\right)$:
	$\pi_{i}\left(x\right)\leq\pi_{i}\left(\underline{s}_{r}\right)$.
	And in a similar way if $r>i$, for every $k>r$, it holds that both $\pi_i \left( \underline{s}_{k} \right) \leq \pi_i \left(\underline{s}_{i} \right)$ and $ \pi_i \left(\underline{s}_{k-1} \right) \leq \pi_i \left(\underline{s}_{i} \right)$, thus for every $x\in\left[\underline{s}_{k},\underline{s}_{k-1}\right)$:
	 $\pi_{i}\left(x\right)\leq\pi_{i}\left(\underline{s}_{i}\right)$.
	Therefore, even if bidder $i$ sabotaged bidder $r$, bidder $i$ has
	no incentive to deviate from its support.
\end{proof}

Armed with Lemmata~\ref{lem:sup1},~\ref{lem:sup2} and~\ref{lem:outSupport} we may proceed to prove Theorem~\ref{thm:sbg}.
\begin{customthm}{\ref{thm:sbg}}
		Let $p_1,\dots, p_n$ be the announced participation probabilities, and let $p'_r < p_r$ be bidder $r$ real participation probability. 
		For every $i \neq r$, Algorithm~\ref{alg:opt} finds the optimal bid for bidder $i$.
\end{customthm}
\begin{proof}
	Lemmata~\ref{lem:sup1},~\ref{lem:sup2} and~\ref{lem:outSupport}
	tell us that the optimal bid is in $\left[\underline{s}_{\min\{i,r\}}, \underline{s}_0 \right]$.
	
Now, in order find the optimal bid in $\left[\underline{s}_{k},\underline{s}_{k-1}\right]$ for $k \leq \min\{i,r\}$,
we differentiate $\pi_i$ with respect to $x$:
\begin{equation}
\frac{\partial }{\partial x}\pi_{i}\left(x\right)=\frac{p_{r}-p'_{r}}{p_{r}}\left(1-\frac{1}{n-k}\right)\left(\frac{\lambda+x}{\prod_{j=1}^{k-1}\left(1-p_{j}\right)}\right)^{\frac{-1}{n-k}}-\frac{p_{r}-p'_{r}}{p_{r}}.
\end{equation}
That is, $\pi_{i}\left(x\right)$ maximized when 
\begin{equation}
x=\left(1-\frac{1}{n-k}\right)^{n-k}\prod_{j=1}^{k-1}\left(1-p_{j}\right)-\lambda.
\end{equation}
Now, $x\in\left[\underline{s}_{k},\underline{s}_{k-1}\right]$, and it holds that
\begin{equation}
\begin{array}{cc}
\left(1-\frac{1}{n-k}\right)^{n-k}\prod_{j=1}^{k-1}\left(1-p_{j}\right)-\lambda\in\left[\underline{s}_{k},\underline{s}_{k-1}\right] & \iff\\
\frac{1}{n-k}\in\left[p_{k-1},p_{k}\right].
\end{array}
\end{equation}
Thus, the optimal bid for bidder $i$ in $\left[\underline{s}_{k},\underline{s}_{k-1}\right]$,
depends if either $\frac{1}{n-k}\in\left[p_{k-1},p_{k}\right]$, $\frac{1}{n-k}< p_{k-1}$
or $\frac{1}{n-k}>p_{k}$. 
\begin{itemize}
	\item If $\frac{1}{n-k}\in\left[p_{k-1},p_{k}\right]$ then the optimal
	bid in $\left[\underline{s}_{k},\underline{s}_{k-1}\right]$ is
	$$x=\left(1-\frac{1}{n-k}\right)^{n-k}\prod_{j=1}^{k-1}\left(1-p_{j}\right)-\lambda$$
	and the expected profit is:	
	\begin{equation}
	\frac{1}{n-k}\left(1-\frac{1}{n-k}\right)^{n-k-1}\prod_{j=1}^{k-1}\left(1-p_{j}\right)\frac{p_{r}-p'_{r}}{p_{r}}+\lambda.
	\end{equation}	
	\item If $\frac{1}{n-k} < p_{k-1}$ then the optimal bid in $\left[\underline{s}_{k},\underline{s}_{k-1}\right]$
	is $x=\underline{s}_{k-1}$ and the expected profit is: 
	\begin{equation}
	p_{k-1}\left(1-p_{k-1}\right)^{n-k-1}\prod_{j=1}^{k-1}\left(1-p_{j}\right)\frac{p_{r}-p'_{r}}{p_{r}}+\lambda.
	\end{equation}	
	\item If $\frac{1}{n-k}>p_{k}$ then the optimal bid in $\left[\underline{s}_{k},\underline{s}_{k-1}\right]$
	is $x=\underline{s}_{k}$ and the expected profit is: 
	\begin{equation}
	p_{k}\left(1-p_{k}\right)^{n-k-1}\prod_{j=1}^{k-1}\left(1-p_{j}\right)\frac{p_{r}-p'_{r}}{p_{r}}+\lambda.
	\end{equation}	
\end{itemize}
Now, for every $k \leq \min\{i,r\}$ let $x_{i,k}$ be the optimal bid in $\left[\underline{s}_{k},\underline{s}_{k-1}\right]$,
 let $\pi_{i,k} = \pi_i\left(x_{i,k}\right)$ and let $j \in \arg\max_k\pi_{i,k}$.
As $x_{i,k}$ is the optima bid in $\left[\underline{s}_{k},\underline{s}_{k-1}\right]$, it follows that
$x_{i,j}$ is the optima bid in $\left[\underline{s}_{\min\{i,r\}}, \underline{s}_0 \right]$.
\end{proof}

\subsection*{Proof of Theorem~\ref{prop:expbid} \label{subsec:expbid}}

\begin{customthm}{\ref{prop:expbid}}
		The expected bid of every bidder is:
		\begin{equation}
		\mathbb{E}\left[bid\right] = \frac{1}{n p}\left(1-\lambda\left(1+p\left(n-1\right)\right)\right)
		\end{equation}
		and the variance of the bid is:
		\begin{equation}
		\mathrm{Var}\left[bid\right]= \frac{1-(1-p)^{2n-1}}{(2n-1)p}-\frac{\left(1-(1-p)^{n}\right)^{2}}{n^{2}p^{2}}.
		\end{equation}
		The expected bid and the variance are 
		neither monotonic in $n$ nor in $p$.
\end{customthm}
\begin{proof}

As this case is a particular instance of the general case presented
above, we can characterize the CDF for every bidder ---
\begin{equation}
F_{n,p}\left(x\right)=\begin{cases}
1 & x>1-\lambda\\
\frac{\left(\lambda+x\right)^{\frac{1}{n-1}}+p-1}{p}\,\, & x\in\left[0,1-\lambda\right]\\
0 & x<0
\end{cases}
\end{equation}
where $\lambda=\left(1-p\right)^{n-1}$, which implies that the bidders'
PDF is:
\begin{equation}
f_{n,p}\left(x\right)=\frac{\mathrm{d}}{\mathrm{d}x}F_{n,p}\left(x\right)=\begin{cases}
0 & x>1-\lambda\\
\frac{\left(\lambda+x\right)^{\frac{2-n}{n-1}}}{p\left(n-1\right)}\,\, & x\in\left[0,1-\lambda\right]\\
0 & x<0
\end{cases}.
\end{equation}
The expected bid of every bidder is:
\begin{equation}
\begin{array}{rl}
\mathbb{E}\left[bid\right]= & \int_{0}^{1-\lambda}x f_{n,p}\left(x\right)\,\mathrm{d}x\\
= & \frac{1}{n p}\left(1-\lambda\left(1+p\left(n-1\right)\right)\right)
\end{array}
\end{equation}
which is neither monotonic in $n$ nor in $p$.

The bid squared, in expectation, is:
\begin{equation}
\begin{array}{rl}
\mathbb{E}\left[bid^{2}\right]= & \int_{0}^{1-\lambda}x^{2} f_{n,p}\left(x\right)\,\mathrm{d}x\\
= & \left.\left(\frac{\left(\lambda+x\right)^{\frac{1}{n-1}}\lambda^{2}}{p}-2\frac{\left(\lambda+x\right)^{\frac{n}{n-1}}\lambda}{n p}+\frac{\left(\lambda+x\right)^{\frac{2n-1}{n-1}}}{\left(2n-1\right)p}\right)\right|_{0}^{1-\lambda}\\
= & \frac{1}{p}\left(\frac{1}{2n-1}-\frac{2}{n}\left(1-p\right)^{n-1}+\left(1-p\right)^{2n-2}-\left(1-p\right)^{2n-1}\frac{2\left(n-1\right)^{2}}{n\left(2n-1\right)}\right)
\end{array}
\end{equation}
hence, 
\begin{equation}
\begin{array}{rl}
Var\left(bid\right)= & \mathbb{E}\left[bid^{2}\right]-\mathbb{E}^{2}\left[bid\right]\\
= & \frac{1}{p}\left(\frac{1}{\left(2n-1\right)}-\frac{2}{n}\left(1-p\right)^{n-1}+\left(1-p\right)^{2n-2}-\left(1-p\right)^{2n-1}\frac{2\left(n-1\right)^{2}}{n\left(2n-1\right)}\right)\\
& -\left(\frac{1}{n p}\left(1-\left(1-p\right)^{n-1}\left(1+p\left(n-1\right)\right)\right)\right)^{2}\\
= & \frac{1-(1-p)^{2n-1}}{(2n-1)p}-\frac{\left(1-(1-p)^{n}\right)^{2}}{n^{2}p^{2}}
\end{array}
\end{equation}
which is, again, neither monotonic in $n$ nor in $p$.
\end{proof}
\subsection*{Proof of Theorem~\ref{prop:varbid} \label{subsec:varbid}}
\begin{customthm}{\ref{prop:varbid}}
	The variance of the bidder profit is:
	$$
	\mathrm{Var}\left[BP\right]= \frac{n-1}{n\left(2n-1\right)}-\frac{\left(1-p\right)^{n}}{n}+\left(p+\frac{1}{2n-1}\right)\left(1-p\right)^{2n-1}.
	$$			
	And the variance is monotonic increasing in $p$.
\end{customthm}

\begin{proof}
The squared of the bidder profit is: 
\begin{equation}
\begin{array}{rl}
\mathbb{E}\left[BP^{2}\right]= & \int_{\lambda-1}^{0}z^{2} g_{BP}\left(z\right)\,\partial z+\int_{\lambda}^{1}z^{2} g_{BP}\left(z\right)\,\mathrm{d} z\\
= & \int_{\lambda-1}^{0}z^{2}\frac{\left(\lambda-z\right)^{\frac{2-n}{n-1}}}{n-1}\left(1-\left(\lambda-z\right)\right)\,\partial z+\int_{\lambda}^{1}z^{2}\frac{\left(\lambda+1-z\right)^{\frac{1}{n-1}}}{n-1}\,\mathrm{d} z\\
= & \left.\left(\lambda-z\right)^{\frac{1}{n-1}}\left(\frac{\lambda(\lambda+2)\left(\lambda-z\right)}{n}+\frac{\left(\lambda-z\right)^{3}}{3n-2}-\lambda^{2}-\frac{(2\lambda+1)\left(\lambda-z\right)^{2}}{2n-1}\right)\right|_{\lambda-1}^{0}\\
 & \left. +\left(\lambda-z\right)^{\frac{1}{n-1}}\left(\left(\lambda+1-z\right)\left(\frac{2\lambda+2}{2n-1}-\frac{\lambda+1-z}{3n-2}\right)-\frac{(\lambda+1)^{2}}{n}\right)\right|_{\lambda}^{1}\\
= & \left(\lambda-z\right)^{\frac{1}{n-1}}\left(\frac{2\lambda\left(n-1\right)^{2}}{n\left(3n-2\right)\left(2n-1\right)}-\frac{2\left(n-1\right)^{2}}{n\left(2n-1\right)}\right)+\lambda^{2}\frac{n-1}{n}-\lambda\frac{2\left(n-1\right)}{n\left(2n-1\right)}+\frac{n-1}{\left(3n-2\right)\left(2n-1\right)}\\
 & -\left(1-p\right)^{n}\left(\frac{2\lambda^{2}\left(n-1\right)^{2}}{n\left(3n-2\right)\left(2n-1\right)}+\frac{2\lambda\left(n-1\right)}{n\left(2n-1\right)}+\frac{1}{n}\right)\\
 & +\frac{\lambda^{2}}{n}+\frac{2\lambda\left(n-1\right)}{n\left(2n-1\right)}+\frac{2\left(n-1\right)^{2}}{n\left(3n-2\right)\left(2n-1\right)}\\
= & \frac{n-1}{n\left(2n-1\right)}-\frac{\left(1-p\right)^{n}}{n}+\left(1-p\right)^{2n-2}-\frac{2\left(n-1\right)}{2n-1}\left(1-p\right)^{2n-1}.
\end{array}
\end{equation}
Hence the variance of the profit of every bidder is:
\begin{equation}
\begin{array}{rl}
\mathrm{Var}\left[BP\right]= & \mathbb{E}\left[BP^{2}\right]-\mathbb{E}^{2}\left[BP\right]\\
= & \frac{n-1}{n\left(2n-1\right)}-\frac{\left(1-p\right)^{n}}{n}+\left(p+\frac{1}{2n-1}\right)\left(1-p\right)^{2n-1}
\end{array}
\end{equation}

Differentiating the above with respect to $p$ gives:
\begin{equation}
\frac{\partial}{\partial p}Var\left(BP\right)=(1-p)^{n-1}-2n\cdot p\left(1-p\right)^{2n-2}.
\end{equation}
Now, 
\begin{equation}
\begin{array}{rlc}
\frac{\partial }{\partial p}\mathrm{Var}\left[BP\right] & \geq0 & \iff\\
(1-p)^{n-1}-2n\cdot p\left(1-p\right)^{2n-2} & \geq0 & \iff\\
p\left(1-p\right)^{n-1} & \leq\frac{1}{2n}
\end{array}
\end{equation}
since $p\left(1-p\right)^{n-1}$ maximized when $p=\frac{1}{n}$,

\begin{equation}
\begin{array}{rlc}
\frac{\partial }{\partial p}\mathrm{Var}\left[BP\right] & \geq0 & \iff\\
\left(1-\frac{1}{n}\right)^{n-1} & \leq\frac{1}{2}
\end{array}
\end{equation}
which holds for every $n\geq2$. Therefore for every $n\geq2$ the variance
of the bidder profit increases with $p$.

\end{proof}

\subsection*{Proof of Theorem~\ref{prop:varmax} \label{subsec:varmax}}
	\begin{customthm}{\ref{prop:varmax}}
		The variance of the auctioneer in the max profit model is:
		\begin{equation}
		\begin{array}{rl}
		\mathrm{Var}\left[AP\right]= & \left(1-p\right)^{2n-2}-\frac{2n\left(1-p\right)^{n-1}}{2n-1}+\frac{n}{3n-2}-\frac{2\left(n-1\right)^{2}\left(1-p\right)^{3n-2}}{\left(3n-2\right)\left(2n-1\right)}\\
		& -\left(\frac{n}{2n-1}+\frac{n-1}{2n-1}\left(1-p\right)^{2n-1}-\left(1-p\right)^{n-1}\right)^{2}.
		\end{array}
		\end{equation}
\end{customthm}

\begin{proof}
	The profit squared, in expectation, is:
	\begin{equation}
	\begin{array}{rl}
	\mathbb{E}\left[AP^{2}\right]= & \int_{0}^{1-\lambda}x^{2}\cdot\frac{n}{n-1}\left(\lambda+x\right)^{\frac{1}{n-1}}\,\mathrm{d}x\\
	= &\left. \left(\lambda+x\right)^{\frac{n}{n-1}}\left(\lambda^{2}-\lambda\frac{2n\left(\lambda+x\right)}{2n-1}+\frac{n\left(\lambda+x\right)^{2}}{3n-2}\right)\right|_{0}^{1-\lambda}\\
	= & \left(1-p\right)^{2n-2}-\frac{2n\left(1-p\right)^{n-1}}{2n-1}+\frac{n}{3n-2}-\frac{2\left(n-1\right)^{2}\left(1-p\right)^{3n-2}}{\left(3n-2\right)\left(2n-1\right)}.
	\end{array}
	\end{equation}	
	Hence, the variance is:
	\begin{equation}
	\begin{array}{rl}
	\mathrm{Var}\left[AP\right]= & \left(1-p\right)^{2n-2}-\frac{2n\left(1-p\right)^{n-1}}{2n-1}+\frac{n}{3n-2}-\frac{2\left(n-1\right)^{2}\left(1-p\right)^{3n-2}}{\left(3n-2\right)\left(2n-1\right)}\\
	& -\left(\frac{n}{2n-1}+\frac{n-1}{2n-1}\left(1-p\right)^{2n-1}-\left(1-p\right)^{n-1}\right)^{2}.
	\end{array}
	\end{equation}
\end{proof}

\end{document}